\documentclass[11pt,letterpaper]{article}

\usepackage[letterpaper, margin=1in]{geometry}  

\usepackage{setspace}
\usepackage{tocloft}
\usepackage[T1]{fontenc} 
\usepackage{lmodern}     
\usepackage{microtype}   
\usepackage{mathrsfs}
\usepackage[dvipsnames,svgnames,x11names]{xcolor}

\usepackage{amsmath}
\usepackage{amsthm}
\usepackage{mathtools}

\usepackage{colortbl}

\usepackage{framed}

\usepackage{booktabs} 
\usepackage{pifont}
\usepackage{makecell}
\usepackage{mdframed}

\usepackage{xcolor} 

\usepackage{nicefrac}

\usepackage[utf8]{inputenc}

\usepackage{amssymb,amsmath,sectsty,url,ifpdf}
\usepackage{amsfonts}
\usepackage{microtype}
\usepackage{yfonts}
\usepackage{bm}
\usepackage{color}
\usepackage[dvipsnames]{xcolor} 
\usepackage[lambda, advantage, operators, sets, adversary, landau, probability, notions, logic, ff, mm, primitives, events, complexity, asymptotics, keys]{cryptocode}

\ifpdf
\usepackage[bookmarks=true,pdfstartview=FitH,colorlinks,linkcolor=blue,filecolor=black,citecolor=Brown,urlcolor=blue]{hyperref}
\else
\usepackage[dvipdfm,bookmarks=true,pdfstartview=FitH,colorlinks,linkcolor=black,filecolor=black,citecolor=black,urlcolor=black]{hyperref}
\fi

\usepackage{cleveref,aliascnt}

\usepackage[normalem]{ulem}

\newcommand{\GapSVP}{\mathsf{GapSVP}}

\newcommand{\ff}[1]{\mathscr{#1}}
\newcommand{\ffe}[1]{\mathscr{#1}}
\newcommand{\game}[1]{\mathbf{#1}}
\newcommand{\Set}[1]{\mathcal{#1}}
\newcommand{\Dist}[1]{#1}
\newcommand{\ckt}[1]{\mathtt{#1}}
\newcommand{\att}[1]{\mathtt{#1}}

\newcommand{\Typ}{\mathsf{type}}

\newcommand{\Z}{\mathbb{Z}}
\newcommand{\N}{\mathbb{N}}

\newcommand{\PRG}{\mathsf{PRG}}
\newcommand{\BAD}{\mathcal{BAD}}

\newcommand{\TGGM}{\mathsf{TGGM}}

\newcommand{\Trunc}{\mathsf{Trunc}}

\newcommand{\CDH}{\mathsf{CDH}}

\newcommand{\AUX}{\mathcal{AUX}}

\usepackage{stmaryrd}

\newcommand{\skey}{\mat S}

\newcommand{\vcy}{\vc y}
\newcommand{\vcz}{\vc z}

\newcommand{\SD}{\mathsf{SD}}

\newcommand{\LWE}{\mathsf{LWE}}
\newcommand{\Ber}{\mathsf{Ber}}

\newcommand{\LPN}{\mathsf{LPN}}

\newcommand{\bina}{\{0, 1\}}

\renewcommand{\poly}{\mathsf{poly}}

\newcommand{\Adv}{\mathsf{Adv}}

\newcommand{\A}{\mathcal{A}}

\newcommand{\funceps}{\epsilon}
\newcommand{\func}[1]{{\MakeLowercase{#1}}}

\newcommand{\vc}[1]{{\mathbf{#1}}}
\newcommand{\mat}[1]{{\mathbf{#1}}}

\newtheorem{definition}{Definition}
\newtheorem{remark}{Remark}
\newtheorem{corollary}{Corollary}
\newtheorem{construction}{Construction}
\newtheorem{Remark}{Remark}
\newtheorem{theorem}{Theorem}

\newtheorem{Claim}{Claim}
\newtheorem{lemma}{Lemma}

\newcommand{\newalgenv}[1]{\newcounter{#1}}

\newalgenv{Algorithm}
\crefname{Algorithm}{Algorithm}{Algorithms}
\crefname{equation}{}{}



\makeatletter
\newenvironment{breakablealgorithm*}
{
	\begin{flushleft}
		\hrule height.8pt depth0pt \kern2pt
		\renewcommand{\caption}[1]{
      {\raggedright ##1\par}%
			\kern2pt\hrule\kern2pt
		}
	}{
		\kern2pt\hrule\relax
	\end{flushleft}
}
\makeatother

\title{Pseudorandom Functions in $\mathsf{NC}^1$ from LWE/LPN/CDH\\(Or: How to Build PRFs in $\mathsf{NC}^1$, Generically)}

\author{Youlong Ding\thanks{The Hebrew University of Jerusalem, Israel. Email: \texttt{youlong.ding@mail.huji.ac.il}.} 
\and Aayush Jain\thanks{Carnegie Mellon University, USA. Email: \texttt{aayushja@andrew.cmu.edu}.}
\and Ilan Komargodski\thanks{The Hebrew University of Jerusalem, Israel. Email: \texttt{ilank@cs.huji.ac.il}.}}

\date{}

\begin{document}

\maketitle

\begin{abstract}

We present a new generic transformation from weak PRFs computable in depth $d(n) = \Omega(\log n)$ to strong PRFs computable in depth $O(d(n))$. This construction refines the
classical tree-based paradigm of GGM by {tapering} the internal state so the per-level depth decreases geometrically. We complement the above with new depth-efficient weak PRF constructions based on various standard assumptions.

As a corollary, we obtain new $\mathsf{NC}^1$-computable PRFs from various classical assumptions, resolving several long-standing open problems. Concretely, for the first time, we obtain $\mathsf{NC}^1$-computable PRFs:
\begin{itemize}
    \item 
    from the \textbf{Learning With Errors (LWE)} assumption with a polynomial modulus-to-noise ratio, improving upon prior low-depth constructions that required Ring-LWE with super-polynomial ratios [Banerjee-Peikert-Rosen, EUROCRYPT 2012];
    \item 
    from the standard \textbf{Learning Parity with Noise (LPN)} assumption, removing the need for structured LPN variants [Boyle et al., FOCS 2020], [Ding-Jain-Komargodski, STOC 2025];
    \item 
    from the \textbf{Computational Diffie-Hellman (CDH)} assumption; prior works relied on the stronger Decisional Diffie-Hellman (DDH) or generalized Diffie-Hellman (GDH) assumptions [Naor-Reingold, FOCS '97, J.\ ACM '04].
    \end{itemize}

\end{abstract}

\thispagestyle{empty}

\newpage
\thispagestyle{empty}
\tableofcontents
\thispagestyle{empty}

\newpage
\setcounter{page}{1}
\section{Introduction}

Pseudorandom function families (PRFs)~\cite{JACM:GoldreichGM86} are among the most central objects in
cryptography. Informally, a PRF is a keyed function
$F : \mathcal{K} \times \mathcal{X} \to \mathcal{Y}$
such that, for a uniformly random key $k \in \mathcal{K}$, the oracle $F(k,\cdot)$ is indistinguishable from a truly
random function $R:\mathcal{X}\to\mathcal{Y}$ to any efficient distinguisher with oracle access. A closely related primitive is a \emph{weak PRF} (WPRF), 
where the adversary only sees
samples $(x_i,F(k,x_i))$ on uniformly random inputs $x_i$ (rather than being able to choose queries
adaptively).
The notion of PRFs is a central abstraction in virtually any subarea of cryptography, and has an array of applications in complexity, learning, and more. As such, significant effort has been put into the question of understanding their complexity. 
One such metric of efficiency that is of significant interest is parallel complexity, or circuit depth.  

The depth of PRFs directly affects the efficiency and scope of their applications. For instance, it influences the round complexity of secure computation protocols~\cite{C:DamgardI06,STOC:IshaiKOS08,EC:Benhamouda0KL21}, enables advanced cryptographic systems~\cite{C:GorbunovVW12,EC:Zhandry25a}, and has consequences in computational complexity~\cite{CACM:Valiant84,STOC:KeaVal89,JCSS:RazborovR97,FOCS:ImpLev90,C:BlumFKL93}. In particular, low-depth PRFs also yield conditional lower bounds for learning low-depth circuit classes: via the standard PRF-versus-learning connection, constructing secure PRFs in a class $\mathcal{C}$ rules out efficient learning of $\mathcal{C}$ under suitable access models~\cite{JCSS:Kharitonov95,ToFC:BogdanovR17}. Finally, low-depth PRFs have begun to find applications in meta-complexity~\cite{STOC:ApplebaumN25}.

While PRFs can be constructed from any pseudorandom generator (PRG) via the GGM transformation~\cite{JACM:GoldreichGM86}, this transformation is sequential and therefore results with PRFs that require deep circuits to evaluate. Indeed, a polynomial-stretch pseudorandom generator on $n$ input bits and computable in depth $d(n)$, will result with a PRF computable in depth $\omega\!\left(d(n)+\log n\right)$.

A more depth-efficient generic construction was later given by~\cite{FOCS:NaorR95}, who constructed PRFs from \emph{synthesizers}, a notion intermediate between PRGs and WPRFs. Nevertheless, this construction still yields PRFs of depth \(\omega(\log n)\) when instantiated with synthesizers computable in depth \(O(\log n)\).

After nearly three decades, these remain essentially the only two generic constructions of PRFs.
This state of affairs has led a large body of work to study \emph{direct} constructions of low-depth PRFs from specific assumptions. Even though, very few constructions of PRFs computable in $\mathsf{NC}^1$ from standard or otherwise well-studied assumptions are known.
Feasibility results for $\mathsf{NC}^1$-computable PRFs were first established in the seminal works of Naor and Reingold~\cite{FOCS:NaorR97,JACM:NaorR04}, which constructed PRFs in $\mathsf{TC}^0 \subseteq \mathsf{NC}^1$ from the decisional Diffie--Hellman and factoring assumptions.\footnote{$\mathsf{TC}^0$ consists of all languages decidable by constant-depth and polynomial-size Boolean circuits with unbounded fan-in AND, OR, NOT, and majority gates. $\mathsf{NC}^1$ consists of all languages decidable by polynomial-size and logarithmic-depth circuits with constant fan-in AND, OR, and NOT gates.} Since then, the landscape has remained remarkably sparse, and the known constructions are highly tailored to the specific algebraic structure of their underlying assumptions. Lewko and Waters~\cite{CCS:LewkoW09} gave a construction in $\mathsf{TC}^0$ from the $k$-Linear assumption, exploiting bilinear-map structure. Banerjee, Peikert, and Rosen~\cite{EC:BanerjeePR12} gave a construction in $\mathsf{TC}^0$ from the ring variant of Learning with Errors, though in a regime where the modulus-to-noise ratio is super-polynomial. More recently, Ding, Jain, and Komargodski~\cite{STOC:DingJK25,EPRINT:DingJK26} gave new $\mathsf{NC}^1$-computable PRFs from structured variants of LPN, e.g., SeededLPN, Ring-LPN, Sparse LPN with respect to low-depth expander graphs.


\subsection{Our Contributions}
Our main contribution is a new \emph{generic} construction of PRFs, which upgrades weak PRFs\footnote{Our construction in fact, also works for synthesizers (a weaker primitive than WPRFs): the same transformation upgrades synthesizers to PRFs while preserving the depth. }
to (strong) PRFs while essentially preserving the evaluation circuit depth.  We complement the above with WPRF\footnote{Or, synthesizers.} constructions from standard assumptions, resulting with new
$\mathsf{NC}^1$ PRFs from several standard assumptions (from which such PRFs were not previously known).

\paragraph{A depth-preserving transformation from weak PRFs to PRFs.}
We prove that weak PRFs of depth $d(\lambda)$ can be generically amplified to strong PRFs of depth
$O(d(\lambda))$ (so long as  $d(\lambda)=\Omega(\log\lambda)$).  This black-box transformation is obtained via a conceptually new adaptation of the GGM construction.

\begin{theorem}[Informal; see Section~\ref{sec:taperingGGM}]
\label{thm:intro:weak2strong}
    If there exists a family of  WPRFs computable by Boolean circuits of depth $d(\lambda) = \Omega(\log \lambda)$, then there exists a family of (strong) PRFs computable in depth $O(d(\lambda))$. In particular, WPRFs in $\mathsf{NC}^1$ imply PRFs in $\mathsf{NC}^1$.
\end{theorem}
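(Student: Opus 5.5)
We realize the abstract's ``tapered GGM'' idea: a GGM-style tree in which the internal key length shrinks geometrically from level to level, so the per-level evaluation depths form a geometric series summing to $O(d(\lambda))$ instead of $d(\lambda)\cdot\log\lambda$.

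\textbf{From WPRFs to cheap length-doubling PRGs on short seeds.} Let $F:\{0,1\}^{\lambda}\times\{0,1\}^{\lambda}\to\{0,1\}$ be a WPRF of depth $d(\lambda)$. Fix public uniformly random inputs $x_1,\dots,x_m$, placed in the PRF key. Then $G(k)=(F(k,x_1),\dots,F(k,x_m))$ is a PRG of depth $d(\lambda)$, by a direct reduction to weak-PRF security. More importantly, we run the WPRF at a \emph{smaller} security parameter $\lambda_i$ at level $i$. The depth there is $d(\lambda_i)$, and the distinguishing advantage is $\mathrm{negl}(\lambda_i)$.

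\textbf{The tapered tree.} The key is a root seed $s_0$ of length $\lambda_0=\lambda$. At level $i$ we expand the current seed (length $\lambda_i$) with a PRG $G_i$. We read a \emph{block} of $b_i$ input bits and select one of $2^{b_i}$ output chunks, each of length $\lambda_{i+1}$. We choose the parameters so that:
\begin{itemize}
\item the blocks cover the input, i.e.\ $\sum_i b_i=n$;
\item the depth of level $i$ is $O(d(\lambda_i)+\log(2^{b_i}\lambda_{i+1}))$, including the selection multiplexer, and these sum to $O(d(\lambda))$;
\item the PRG $G_i$ stretches $\lambda_i$ to $2^{b_i}\lambda_{i+1}$, which is polynomial.
\end{itemize}

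One concrete attempt is to take $b_i=\Theta(\log\lambda_i)$ and $\lambda_{i+1}=\lambda_i^{c}$ for a constant $c<1$. Since $d=\Omega(\log)$, we may assume $d$ grows polynomially in the logarithm. Then $d(\lambda_i)\approx c^i d(\lambda)$, so the depths form a geometric series. The issue is that $\sum_i \log\lambda_i$ is also geometric and bounded by $O(\log\lambda)$, so the total input length would only be $O(\log\lambda)$.

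To reach polynomial-length inputs, we compose this with a standard domain extension. First, the short-input PRF, with domain size $\mathrm{poly}(\lambda)$, is effectively a PRG of polynomial stretch. Second, we use a second tapering in which the \emph{stretch per level} is polynomial but the seed shrinks: at level $i$ the seed has length $\lambda^{\epsilon^i}$, and we consume $\lambda^{\delta\epsilon^i}$ input bits per level. I would work out the exact schedule so that the consumed bits total $\mathrm{poly}(\lambda)$ while the depth remains $\sum_i O(\epsilon^i\cdot d(\lambda))$. This requires $d(\lambda^{\epsilon})\le \epsilon' d(\lambda)$, which holds for $d=\Theta(\log\lambda)$ and more generally under mild regularity of $d$. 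Once the seed has shrunk to $\mathrm{polylog}$ length, we stop and output.

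\textbf{Security.} We use the standard GGM hybrid argument, level by level. At each level we replace, lazily for the nodes touched by the adversary's $q$ queries, the PRG outputs with uniform strings. There are $\sum_i q$ hybrids.

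The main obstacle is that the deep levels use tiny security parameters $\lambda_i$, so $\mathrm{negl}(\lambda_i)$ need not be negligible in $\lambda$. I would address this in two ways:
\begin{itemize}
\item Stop tapering at $\lambda_i=\lambda^{\Omega(1)}$. This retains the geometric depth saving and makes every level's advantage $\mathrm{negl}(\lambda)$ under polynomial security.
\item Alternatively, run $\mathrm{poly}(\lambda)$ parallel copies and XOR their outputs at small levels. XOR amplification of pseudorandomness preserves depth up to an additive $O(\log\lambda)$.
\end{itemize}

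Checking that some schedule simultaneously achieves all three properties is the delicate step: input length $\mathrm{poly}(\lambda)$, depth $O(d(\lambda))$, and per-level security $\mathrm{negl}(\lambda)$. This is where I would spend the effort, first trying $\lambda_i=\lambda^{2^{-i}}$ for $O(1)$ levels combined with polynomial stretch per level.
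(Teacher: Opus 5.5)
Your tapered tree and the obstacle you name match the paper's starting point, but there is a genuine gap exactly at the step you leave open. Under polynomial security, no schedule can make a level-by-level GGM hybrid work. A super-polynomial domain with polynomial fan-out needs $\omega(1)$ levels. Already for $d=\Theta(\log\lambda)$, depth $O(\log\lambda)$ forces $\sum_i\log\lambda_i=O(\log\lambda)$. So some levels must have $\lambda_i=\lambda^{o(1)}$, and a GGM hybrid invokes the security of every level.

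Neither of your remedies gets around this:
\begin{itemize}
\item Stopping the tapering at $\lambda^{\Omega(1)}$ leaves either a polynomial domain or $\omega(1)$ untapered levels, each of depth $\Theta(d(\lambda))$.
\item XOR-ing copies does not help. A $\mathrm{poly}(\lambda)$-size distinguisher is super-polynomial in $\lambda_i$, so polynomial security at that parameter gives no bound on its advantage, which may be close to $1$. XOR amplification shrinks advantage; it does not enlarge the class of adversaries covered.
\item Your ``second tapering'' is inconsistent. Consuming $\lambda^{\delta\epsilon^i}$ input bits at one level requires fan-out $2^{\lambda^{\delta\epsilon^i}}$, whereas polynomial stretch permits only $O(\log\lambda)$ bits per level.
\item Your depth count charges a multiplexer of depth about $b_i$ after each PRG call. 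Since $\sum_i b_i$ must be $\omega(\log\lambda)$ for a super-polynomial domain, this alone exceeds $O(\log\lambda)$.
\end{itemize}

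The missing idea is to never rely on computational security of the deep levels.

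\textbf{Construction.} The paper keeps fan-out $\Delta=\lambda$ at each of $\ell=\log\log\log\lambda$ levels and stores an input pool $r^{(i)}_1,\dots,r^{(i)}_\Delta$ in the key. The state update is $y^{(i+1)}=G_{\lambda^{(i)}}(y^{(i)},r^{(i)}_{\tilde x_i})$, and all $r^{(i)}_{\tilde x_i}$ are selected in parallel up front, so no per-level multiplexer is paid. At deep levels this acts as a super-polynomial-stretch PRG whose security is simply never used. $G$ is made \emph{key-uniform} by XOR-ing a random pad into its output, so each $G(\cdot,r)$ maps uniform to uniform. The input is first hashed by a pairwise-independent $H$ into $[\Delta]^\ell$.

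\textbf{Security.} Against $q$ queries, WPRF hybrids are applied only to the first $\ell'_q=\lceil 2\log_\lambda q\rceil=O(1)$ levels, whose security parameters are all $\lambda^{\Omega(1)}$. With probability at least $1/2$ over the hash key, the length-$\ell'_q$ prefixes are distinct. The level-$\ell'_q$ states are then independent and uniform, and key-uniformity carries this to independent uniform outputs information-theoretically, however weak the deep WPRFs are. XOR-ing $m=\lambda$ independent trees amplifies this \emph{statistical} bad event to $2^{-m}$ (almost-randomness), and Berman et al.'s lemma turns almost-randomness into adaptive security.

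Hashing is essential here. Without it, the adversary can pick queries that share a prefix down to the last level, and any argument must then rest on the weakest WPRF.
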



We instantiate the above theorem with new and existing WPRF constructions to yield new constructions of $\mathsf{NC}^1$ computable PRFs.

\begin{itemize}
\item \textbf{LWE (Section~\ref{sec:LWEPRF}).} We build the first weak PRF in $\mathsf{NC}^1$ from standard $\LWE$ with polynomial modulus-to-noise ratio (namely, $\alpha = n^{\epsilon}$ for any $\epsilon > 0$)~\cite{JACM:Regev09}. Together with the theorem above, this yields the first PRF in $\mathsf{NC}^1$ from $\LWE$ with polynomial ratio (and therefore, from the hardness of $\GapSVP$ lattice problems with polynomial approximation ratio $\gamma = \tilde{O}(n^{1+\epsilon})$). Prior $\mathsf{NC}^1$ PRFs constructions were based on Ring-LWE and required a super-polynomial ratio~\cite{EC:BanerjeePR12}. This result resolves a long-standing open problem, explicitly stated as Open Problem 5.1 in Vaikuntanathan's lecture notes~\cite{VinodLectureNotes}. 

\item \textbf{LPN (Section~\ref{sec:LPNPRF}).} 
We obtain the first $\mathsf{NC}^1$ PRFs from standard $\LPN$ with arbitrary inverse polynomial noise probability. This leverages the recent construction of weak PRFs in $\mathsf{NC}^1$ from the same assumption~\cite{STOC:DingJK25}. In code-based land, the only $\mathsf{NC}^1$ construction from well-studied assumptions was shown from Ring-LPN or Sparse LPN with respect to low-depth expander graphs by \cite{EPRINT:DingJK26}.

\item \textbf{CDH (Section~\ref{sec:CDHPRF}).}
We obtain the first $\mathsf{NC}^1$ PRFs from the Computational Diffie--Hellman ($\CDH$) assumption, improving upon the nearly thirty-year-old work of~\cite{FOCS:NaorR97,JACM:NaorR04}.
Our construction instantiates our transformation using a synthesizer (a weaker primitive than a WPRF) constructed from $\CDH$ in \cite{FOCS:NaorR95,JCSS:NaorR99}, together with the observation that our transformation, in fact, also applies to synthesizers.

\item From existing WPRF candidates. We can upgrade existing log-depth WPRFs candidates (e.g., \cite{ITCS:AkaviaBGKR14,TCC:ApplebaumR16a,TCC:BonehIPSW18,C:BoyleCGIKS21}) to log-depth PRFs under the same underlying assumption.


\end{itemize}

\section{Overview of Techniques}

The main technical tool developed in this paper is a depth-preserving {generic} transformation that upgrades weak PRFs
to (strong) PRFs, as stated in Theorem~\ref{thm:intro:weak2strong}.
The proof of this theorem is largely inspired and obtained by fine-tuning classical bootstrapping theorems by \cite{JACM:GoldreichGM86} and \cite{FOCS:NaorR95} for upgrading weaker pseudorandom objects to pseudorandom functions.

\paragraph{Brief recall of  (modified) GGM.}
In GGM,
starting from a seed $s$, one walks down a tree and repeatedly iterates the current state using a pseudorandom generator (PRG), where the input $x=(x_1,\dots,x_{\ell})$ specifies the path. Suppose that we are given a weak PRF (WPRF) family $G=\{G_{\lambda}\}_{\lambda\in\mathbb N}$, where
  $$G_{\lambda} \colon \{0,1\}^{\lambda} \times \{0,1\}^{\lambda} \to \{0,1\}^{\lambda}.$$ 
   Assume that this  WPRF is implemented in depth $d_G(\lambda)$. 

This $G$ essentially serves as a polynomial-stretch  PRG with random access to its output.
That is, in the construction, the key of the resulting PRF consists of a key for the WPRF $\key_{\mathcal{G}}\sample \{0,1\}^{\lambda}$ along with randomly chosen strings from the input domain of the WPRF $\key_{\mathcal{GX}} = (r^{(i)}_1,\ldots,r^{(i)}_{\lambda})_{i\in[\ell]}$, where each $r^{(i)}_j \sample \{0, 1\}^{\lambda}$. 
Then, we define the $\lambda$-stretch PRG for the $i$-th layer by
\[
\PRG^{(i)}_j(\cdot) \;\coloneqq\; G\bigl(\cdot,\, r^{(i)}_j\bigr),\qquad j\in[\lambda].
\]
Finally, on input $x\in[\lambda]^{\ell}$, the (strong) PRF computes
\[
  y^{(1)}\gets \key_{\mathcal{G}},\qquad y^{(i+1)}\gets \PRG^{(i)}_{x_i}(y^{(i)})\ \text{ for }\ i=1,\dots,\ell,
\]
and finally outputs $y^{(\ell+1)}$.
This construction requires an evaluation depth of 
\[
d_{\text{baseline}}\;=\; O(\log \lambda)\; + \ell \cdot  d_G(\lambda).
\]
 This is because one needs to (1) select the relevant $r^{(i)}_{x_i}$'s to be used in the tree,  requiring  $O(\log \lambda)$ depth, and then (2) evaluate the tree by invoking the  WPRF $\ell$ times sequentially. 
 For $\ell = \omega(1)$, we have $d_{\text{baseline}} = \omega(d_G(\lambda))$. Namely, even if $d_{G}(\lambda)=\Theta(\log \lambda)$, it pushes the depth of the resulting PRF beyond $\mathsf{NC}^1$.

\begin{remark}[GGM's depth with a PRG]
     With a (vanilla) $\lambda$-stretch PRG, the original GGM construction has depth proportional to $\ell \cdot (\log \lambda + d_{\text{PRG}}(\lambda))$ because in each step, one needs to first compute the whole output of the PRG, and then  select the $x_i$-th chunk. Selection takes $\Theta(\log \lambda)$ depth and cannot be preprocessed.
\end{remark}

\subsection{A Depth-Preserving Weak-to-Strong PRF Transformation}

We will construct a PRF family
\[
\mathcal{F}_{\lambda} \colon \Set K \times [\lambda]^{\ell} \to \{0, 1\}^{\lambda^{1/\log\log \lambda}},
\]
where \(\ell=\omega(1)\), corresponding to a super-polynomial input domain size.
Moreover, we initially target only non-adaptive security, namely, security against adversaries that must fix all of their oracle queries in advance.
These two relaxations are without loss of generality. Indeed, one can apply standard domain-extension transformations~\cite{STOC:Levin85,Comb:Levin87,TCC:BermanHKN13,JOC:BermanHKN19,C:DottlingS15} and non-adaptive-to-adaptive security transformations~\cite{TCC:BermanH12,JOC:BermanH15} to obtain a full-fledged PRF.\footnote{In fact, we directly prove that our construction is adaptively secure.}
Both transformations preserve depth.

\paragraph{Idea 1: Tapering GGM.}

The key idea is to \emph{vary the key size of WPRF  across levels} so that the evaluation depth of the weak PRF will become smaller and smaller as we go down the GGM tree.
Let $\ell = \log\log\log \lambda$. Define a sequence of decreasing parameters
\[
  \lambda^{(i)} \;\coloneqq\; \lambda^{2^{-i+1}}\qquad\text{for } i\in[\ell+1],
\]
so $\lambda^{(1)}=\lambda$ and $\log \lambda^{(i)}$ decreases geometrically as $\lambda^{(i+1)} = \sqrt{\lambda^{(i)}}$.
At level $i$, we use a  WPRF
\[
  G_{\lambda^{(i)}}:\; \{0,1\}^{\lambda^{(i)}}\times \{0,1\}^{\lambda^{(i)}} \to \{0,1\}^{\lambda^{(i+1)}},
\]
so the state length shrinks from $\lambda^{(i)}$ to $\lambda^{(i+1)}$ at each step.

The depth of this variant of GGM  becomes
\[
  d_{\TGGM}(\lambda)\;=\; O(\log\lambda) \; + \; \sum_{i=1}^\ell d_G\bigl(\lambda^{(i)}\bigr).
\]
If $d_G(\lambda) = \Theta(\log \lambda)$, then $\sum_{i=1}^\ell d_G\bigl(\lambda^{(i)}\bigr) = \Theta(\log \lambda) \cdot  \sum_{i=1}^\ell 2^{-i + 1} = \Theta(\log \lambda)$.
More generally, one can show that $\sum_{i=1}^\ell d_G\bigl(\lambda^{(i)}\bigr) = \Theta(d_{G}(\lambda))$ for $d_G(\lambda) = \Omega(\log \lambda)$, making this construction depth-preserving.

\noindent\subparagraph{Security proof challenges.} 
A naive approach for proving security would try to mimic GGM's security proof: define a hybrid experiment per levels, arguing that the values in each level are indistinguishable from uniform.  However, tapering makes deeper levels in the tree operate with a much smaller security parameter. Indeed, $\lambda^{(\ell)}=\lambda^{o(1)}$  for $\ell=\omega(1)$. Therefore, such a naive approach would necessarily  resort to quasi-polynomial security of the underlying weak PRFs. 


\paragraph{Idea 2: Hashing queries.}
While it may seem as if we have hit a dead end, we make the following useful observation.
Let $q$ be the number of queries, which is polynomially large.
If the queries $x_{(1)},\ldots, x_{(q)}$ were chosen at random, then one can prove that at a sufficiently large constant $i^{*}=O(\log_{\lambda} q)$,  the length-$i^*$ prefixes of $x_{(1)},\ldots x_{(q)}$ are pairwise distinct with probability $1-O({q^2}/{\lambda^{i^*}})$.\footnote{We can amplify this probability to be overwhelming at the end by repeating our final PRF $m$ times and XOR their outputs together.}
We now condition on this event. Following similarly to the proof of GGM, and by relying on the security of $G$ instantiated with security parameters $\lambda^{2^{-i+1}}$ for $i\leq i^*$ (all of which are polynomial in~$\lambda$ since $i^*$ is a constant), one can prove that if the $q$ queries are random, then their corresponding internal values at the $i^{*}$-th level of the tree $(y_{(1)}^{(i^*)}, \cdots, y_{(q)}^{(i^*)})$ are all pseudorandom:
\begin{equation}
\label{eq:overviewinter}
    \left(y_{(1)}^{(i^*)}, \cdots, y_{(q)}^{(i^*)}\right) \approx (u_{(1)}, \dots, u_{(q)})
\end{equation}


In the above argument, we assumed that the queries are randomly distributed which is obviously not guaranteed. We observe then that if we first map the queries $x$ to $\tilde{x}$ via a pairwise independent hash function $\tilde{x} \gets h_{\key_{\mathcal{H}}}(x)$ ($\key_{\mathcal{H}}$ is included as part of the PRF key), the proof still goes through and one could still get the same guarantee. 

\paragraph{Idea 3: Key-uniform WPRFs.} So far we have only proven pseudorandomness at level $i^*$. Recall that the final output is computed by evaluating the function on $y^{(i^*)}$ (which is already pseudorandom) depending on the future prefix of~$x$. Our idea is that the pseudorandomness at the internal level $i^*$ can be propagated to the final output level \emph{information-theoretically}.

We require the WPRF $G$ to satisfy an additional property we call \emph{key-uniform}. We say that $G$ is \emph{key-uniform} if for any fixed input $r$, and a random key $k$, the output $G(k, r)$ is distributed  uniformly in the  appropriate domain. Such a key-uniform WPRF can be built from a standard WPRF in a depth preserving way by including in the key a random one-time pad that can be XORed to the output after computing the underlying weak PRF (essentially making sure our function family is 1-wise independent).
Equivalently, for any fixed input $r$ to the WPRF, the induced map
\[
T(\cdot)\coloneqq G(\cdot,r)
\]
is unbiased, in the sense that it maps a uniformly random input to a uniformly random output.

Then for $j \in [q]$, on input $x_{(j)}$, the output of the PRF  can be written as
$$
y_{(j)}^{(\ell)} =  T'_j\left(y_{(j)}^{(i^*)}\right) \coloneqq T_j^{(\ell)}\circ \cdots \circ T_j^{(i^* + 1)} \left(y_{(j)}^{(i^*)}\right),
$$
where $T_j^{(i)}$ denotes the unbiased transformation at layer $i$.
Since the composition of unbiased transformations remains unbiased, the map $T'_j$ is also unbiased.
It then follows that the PRF outputs are pseudorandom:
\begin{align*}
 \left(y_{(1)}^{(\ell)}, \cdots, y_{(q)}^{(\ell)}\right) = \left(T'_1\left(y_{(1)}^{(i^*)}\right), \dots, T'_q\left(y_{(q)}^{(i^*)}\right)\right) & \approx_{comp.} (T'_1(u_{(1)}), \dots, T'_q(u_{(q)})) \\  & \approx_{perfect}  (u'_{(1)}, \dots, u'_{(q)}),
\end{align*}
where the last indistinguishability follows from the fact that an unbiased transformation maps the uniform distribution to uniform distribution.

\subsection{Weak PRFs in $\mathsf{NC}^1$ from  $\LWE$}
To obtain our main results, we instantiate our framework under several standard assumptions, including $\LPN$, $\LWE$ (with any polynomial modulus-to-noise ratio), and $\CDH$. 
While the above discussion is phrased in terms of WPRFs, the transformation actually applies more generally to the weaker notion of synthesizers.\footnote{The reason is that, in our construction, the input $r$ to the WPRF is incorporated into the PRF key and is not exposed to the adversary. Thus, the full weak pseudorandomness guarantee is unnecessary, and the weaker notion of synthesizers is sufficient.}
For $\LPN$, we use the recent result of~\cite{STOC:DingJK25}, which gives a WPRF in $\mathsf{NC}^1$ from standard $\LPN$. 
For $\CDH$, we use this more general perspective together with the $\mathsf{NC}^1$ synthesizers of~\cite{FOCS:NaorR97,JACM:NaorR04}.

For $\LWE$, we give the first construction of WPRFs in $\mathsf{NC}^1$, by combining and optimizing several known ideas. In particular, we make use of the chaining idea proposed by \cite{EC:Kim20} and further developed in \cite{STOC:DingJK25}.

In this framework, the problem of constructing low-depth WPRFs essentially reduces to designing a sublog-depth (i.e., $o(\log \lambda)$) sampler for the LWE error distribution.
We show how to sample the rounded Gaussian error distribution~\cite{STOC:Regev05} in such low depth.\footnote{We remark that the original formulation of LWE in~\cite{STOC:Regev05} uses continuous Gaussian errors, although in current usage, LWE is more commonly instantiated with discrete Gaussian errors. The worst-case-to-average-case hardness reductions of~\cite{STOC:Regev05,STOC:BrakerskiLPRS13} are natively stated for the continuous Gaussian formulation of LWE, and LWE with rounded Gaussian errors trivially reduces to LWE with continuous Gaussian errors (Lemma 4.3 of~\cite{STOC:Regev05}).}
In more detail, we first sample a continuous Gaussian with variance $\sigma$. Using the Box--Muller transform, we show that such a sample can be generated, up to statistical distance $2^{-\kappa+\log \sigma}$, by a circuit of depth $O(\poly(\log \kappa))$. It therefore suffices to set $\kappa=\omega(\log \lambda)$.
We then round the resulting sample to the nearest integer modulo $q$.
Consequently, we obtain the following theorem and refer to \Cref{sec:PRFfromLWE} for more details.
\begin{theorem}[Informal; see Section~\ref{sec:LWEPRF}]
There exists WPRFs in $\mathsf{NC}^1$ if there exists $\epsilon \geq 0$, such that $\LWE$ assumption holds for noise-to-modulus ratio $\alpha = 1/n^{\epsilon}$.
\end{theorem}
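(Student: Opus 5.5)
We will construct the weak PRF directly from LWE samples and then argue that it can be computed in $\mathsf{NC}^1$. The template follows the chaining idea of~\cite{EC:Kim20} as refined in~\cite{STOC:DingJK25}.

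\textbf{Candidate.} The key is a secret matrix $\mat S \in \Z_q^{n\times m}$. The public random input is $(\vc a, \rho)$, where $\vc a \in \Z_q^n$ is uniform and $\rho$ is a string of random coins. On input $(\vc a,\rho)$, the function outputs a rounding/extraction of $\vc a^{\top}\mat S + \vc e^{\top}$, where the error $\vc e$ is sampled \emph{from $\rho$} by a low-depth sampler. Because $\rho$ is part of the uniformly random input, the adversary sees exactly LWE samples in which the randomness of the noise is additionally revealed. To avoid this leak, we will not output $\rho$ directly. Instead we use the chaining trick: the noise coins for one block are derived from a prior LWE computation whose output is pseudorandom, so the coins feeding the error sampler are themselves hidden. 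Security then reduces by a hybrid over the chain links to decisional $\LWE$ with rounded Gaussian noise of ratio $\alpha = 1/n^{\epsilon}$. We will invoke Regev's Lemma 4.3 to reduce rounded-Gaussian $\LWE$ to continuous-Gaussian $\LWE$. We then use the search-to-decision and worst-case reductions of~\cite{STOC:Regev05,STOC:BrakerskiLPRS13}, taking $q = \poly(n)$ so that $\alpha q \ge 2\sqrt{n}$. The statistical error of the sampler, $2^{-\kappa+\log\sigma}$ per sample, summed over polynomially many samples, is negligible once $\kappa = \omega(\log\lambda)$.

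\textbf{Depth.} An inner product over $\Z_q$ with $q=\poly(n)$ is iterated addition of $n$ numbers of $O(\log n)$ bits. This is computable in $\mathsf{NC}^1$ via carry-save addition in $O(\log n)$ depth, followed by reduction mod $q$ and rounding, which are also in $O(\log n)$ depth for a polynomial modulus. Each chain link can cost at most $O(\log n)$ depth in total, and there should be only $O(1)$ links. For that budget, the noise sampler must run in depth $o(\log n)$, or at least $O(\log n)$ with the chain length constant.

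\textbf{Main obstacle: the sampler.} We will sample a continuous Gaussian with parameter $\sigma=\alpha q$ via Box--Muller. Draw $u_1,u_2$ as $\kappa$-bit fixed-point uniforms and output $\sigma\sqrt{-2\ln u_1}\cos(2\pi u_2)$, then round to the nearest integer mod $q$. Computing $\ln$, $\sqrt{\cdot}$ and $\cos$ to $\kappa$ bits of precision in depth $\poly(\log\kappa)$ is the delicate part. I would first try truncated Taylor/Newton expansions with $O(\kappa)$ terms, evaluated as iterated products and sums of $\kappa$-bit numbers; these have depth $O(\log^2\kappa)$ with standard iterated-multiplication circuits. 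The tail region where $u_1$ is tiny is handled by a leading-zero count and an exponent shift, computing $\ln u_1 = -t\ln 2 + \ln(\text{mantissa})$. The output has magnitude at most $\sigma\sqrt{2\kappa}$, so $\log\sigma + O(\log\kappa)$ bits suffice, and the precision loss from the final multiplication by $\sigma$ accounts for the $2^{\log\sigma}$ factor in the statistical distance. The subtle point is that rounding boundaries can amplify approximation error. I would bound this by noting that the probability the exact sample lies within $2^{-\kappa}$ of a half-integer is $O(2^{-\kappa})$. With $\kappa=\omega(\log\lambda)$ and $\kappa \le \poly(\log\lambda)$, the sampler has depth $\poly(\log\log\lambda)=o(\log\lambda)$, so the whole weak PRF lies in $\mathsf{NC}^1$.
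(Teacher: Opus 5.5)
The gap is at the start of the chain, and it forces a different chain length from the one you propose. Your Box--Muller sampler and your inner hybrid step match the paper: a uniform link state gives hidden uniform coins, hence errors close to $\chi$, and then $\LWE$ makes the next padded affine value uniform.

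The first link has no hidden coins. The chain must begin from a fixed state (the paper uses $\vc y^{(0)}=\vc 0$), so $\vc e^{(1)}$ is a fixed vector and $\vc y^{(1)}$ is an affine function of $\vc a$, which is not pseudorandom. Your hybrid over links therefore has nowhere to start.

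The step you are missing is a coupling argument (\Cref{lemma:weakprffromlweH0indis}). Run the real chain and a chain started from a uniform $\vc y^{(0)}$ on the same key and input. Both share the affine parts $\vc z^{(i)}=\mat S^{(i)}\vc a+\vc p^{(i)}$, whose coordinates are uniform because of the pads $\vc p^{(i)}$. All errors are $B$-bounded, which is the reason for the bounded sampler $S^{(B)}_\chi$. So the two chains' rounded bits agree at a link unless one of at most $\kappa^{\tau}$ coordinates lies within $B$ of a rounding boundary. That happens with probability at most $\kappa^{\tau}\cdot 4B/q=\kappa^{\tau}/\lambda$. Once the bits agree the chains stay equal, so the outputs differ with probability at most $(\kappa^{\tau}/\lambda)^{\tau-1}$.

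With a polynomial ratio $q/B$, each link buys only an inverse-polynomial factor, so this bound is negligible only when $\tau=\omega(1)$. Your claim that ``there should be only $O(1)$ links'' would leave inverse-polynomial advantage. Within this argument, a constant-length chain would need super-polynomial $q/B$, which is the regime the theorem is meant to avoid. So your fallback ``or at least $O(\log n)$ with the chain length constant'' is not available. This is exactly why the sampler must run in depth $o(\log\lambda)$.

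Your depth accounting must change for the same reason: you cannot pay $O(\log n)$ per link. The paper does the following:
\begin{itemize}
\item It takes $\tau=\log\log\lambda$ links.
\item It computes all affine parts $\mat S^{(i)}\vc a+\vc p^{(i)}$ in parallel up front, since they do not depend on earlier links.
\item Only MSB extraction, the $\poly(\log\kappa)$-depth sampler, and one addition mod $q$ are sequential at each link.
\item It uses widths $\kappa^{(i)}=\kappa^{\tau-i}$, so each link supplies $\kappa$ hidden bits for each error of the next link.
\end{itemize}
With $\kappa=\log^2\lambda$, the total depth is $O(\log n+\tau(\poly(\log\kappa)+\log\log q))=O(\log\lambda)$.
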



\section{Preliminaries}

\paragraph{Notation.}
We use $[n]$ to denote the set $\{1,\ldots,n\}$.
For finite sets $\Set X$ and $\Set Y$, we write $\ff U(\Set X,\Set Y)$ for the set of all functions from $\Set X$ to $\Set Y$.
For a probability distribution $\Dist X$ over a domain $\Set D$, let $\Dist X^n$ denote the $n$-fold product distribution over $\Set D^n$.
The uniform distribution over a finite domain $\Set D$ is denoted by $U(\Set D)$.
We write $X\sim \Dist D$ to denote that the random variable $X$ is distributed according to $\Dist D$.
For a distribution $\Dist D$, we write $s\sample \Dist D$ to sample $s$ according to $\Dist D$; for a finite set $\Set S$, we write $s\sample \Set S$ to sample $s$ uniformly from $\Set S$.
For two distributions $X$ and $Y$ over the same finite domain, $\SD(X,Y)$ denotes their statistical distance.
Throughout the paper, whenever a quantity such as $\log\lambda$ or $\lambda^{2^{-1}}$ is required to be an integer, it is rounded up to the nearest integer.
We use the following vector convention: a vector is treated as a row vector when multiplied on the left and as a column vector when multiplied on the right.
All finite algebraic domains used later, such as $\mathbb Z_q^n$, are implicitly represented by their standard binary encodings when discussing Boolean circuit depth.
\paragraph{Circuit model.}
We work in the non-uniform Boolean circuit model with bounded fan-in $\{\mathrm{AND},\mathrm{OR},\mathrm{NOT}\}$ gates.

When we say that a function family, has circuit size (respectively, depth) $s$ (respectively, $d$), we mean that each function induced by a key can be computed by a circuit of size $s$ (respectively, depth $d$).\footnote{Here the key is hardwired into the circuit. Our results in this paper also hold under the stronger definition where the key is viewed as part of the circuit input.} We now formally define this notion.

\begin{definition}[$(s, d)$-Function Families]
\label{def:functionfamily}
\normalfont
A \emph{function family} $\ff F$ with key space $\Set K$ is a collection
\[
\ff F=\{F_k\colon \Set X\to \Set Y,\; k\in\Set K\}\subseteq \ff U(\Set X,\Set Y).
\]
Let $s,d\in\N$.
The family $\ff F$ is an $(s,d)$-\emph{function family} if for every key $k\in\Set K$, the function $F_k$ is computable by a Boolean circuit of size at most $s$ and depth at most $d$.
When convenient, we also view $\ff F$ as a universal evaluation function $F\colon\Set K\times\Set X\to\Set Y$.
\end{definition}




\begin{definition}[$(q, t, \epsilon)$-Indistinguishable]
\label{def:indistinguishable}
\normalfont
We model an adversary by a circuit equipped with oracle gates interacting with probabilistic experiments, also called \emph{games}.
For an adversary $\att A$ and two games $\game G_0,  \game G_1$, define its \emph{distinguishing advantage} as
\[
\Adv_{\att A}(\game G_0,\game G_1)
\;\coloneqq\;
\bigl|\Pr[\att A \text{ outputs }1 \text{ in } \game G_0]-\Pr[\att A \text{ outputs }1 \text{ in } \game G_1]\bigr|.
\]

For $q,t\in\N$ and $\epsilon > 0$, the games $\game G_0$ and $\game G_1$ are $(q,t,\epsilon)$-\emph{indistinguishable}, denoted $\game G_0\approx_{q,t,\epsilon}\game G_1$, if every size-$t$ adversary making at most $q$ oracle queries has advantage less than $\epsilon$.
For games that give the adversary a sample and no oracle access, we use $q=0$.
We write $\game G_0\approx_{q,\infty,0}\game G_1$ for perfect $q$-query indistinguishability, namely identical views for every adversary making at most $q$ oracle queries.
\end{definition}

\begin{lemma}[Triangle Inequality]
\label{lemma:triangle}
If $\game H_0,\game H_1$ are $(q_1,t_1,\epsilon_1)$-indistinguishable and $\game H_1,\game H_2$ are $(q_2,t_2,\epsilon_2)$-indistinguishable, then $\game H_0,\game H_2$ are $(\min\{q_1,q_2\},\min\{t_1,t_2\},\epsilon_1+\epsilon_2)$-indistinguishable.
\end{lemma}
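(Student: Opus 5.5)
Fix an arbitrary adversary $\att A$ of size at most $\min\{t_1,t_2\}$ that makes at most $\min\{q_1,q_2\}$ oracle queries. The plan is to bound $\Adv_{\att A}(\game H_0,\game H_2)$ directly. For $b\in\{0,1,2\}$, write $p_b \coloneqq \Pr[\att A \text{ outputs } 1 \text{ in } \game H_b]$.

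The first step is to check that $\att A$ is a legitimate adversary for both hypotheses. Its size is at most $t_1$ and at most $t_2$. Its number of queries is at most $q_1$ and at most $q_2$. This uses only that the class of size-$t$, $q$-query adversaries is monotone in $t$ and $q$: an adversary of size at most $\min\{t_1,t_2\}$ has size at most $t_1$, and likewise for the other bounds. By Definition~\ref{def:indistinguishable}, the first hypothesis then gives $|p_0-p_1|<\epsilon_1$ and the second gives $|p_1-p_2|<\epsilon_2$.

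The second step is the ordinary triangle inequality on real numbers:
\[
\Adv_{\att A}(\game H_0,\game H_2)=|p_0-p_2|\le |p_0-p_1|+|p_1-p_2|<\epsilon_1+\epsilon_2 .
\]
Since $\att A$ was arbitrary, this is exactly $(\min\{q_1,q_2\},\min\{t_1,t_2\},\epsilon_1+\epsilon_2)$-indistinguishability.

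The degenerate cases need only a brief remark. When $q=0$ (sample-only games), the same argument goes through unchanged. When $t=\infty$ and $\epsilon=0$ (perfect indistinguishability), equal probabilities for every adversary compose in the same way. I do not expect any step to be an obstacle. The only point needing care is that the probabilities $p_1$ in the two hypotheses refer to the same quantity, namely $\att A$ run in $\game H_1$. This holds because both hypotheses quantify over all adversaries against the same game $\game H_1$.
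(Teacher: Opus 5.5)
Your proof is correct and matches the paper: the paper states this lemma without proof as a standard fact, and your argument (any adversary within the $\min$ bounds is admissible for both hypotheses, then apply $|p_0-p_2|\le|p_0-p_1|+|p_1-p_2|$) is exactly the standard justification it relies on. The monotonicity you invoke is immediate if ``size-$t$'' is read as ``size at most $t$'', follows by padding with dummy gates under an ``exactly $t$'' reading, and your treatment of the perfect case $\epsilon=0$ agrees with the paper's convention for $\approx_{q,\infty,0}$.
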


The next two definitions are for PRF and WPRF securities.
Strong pseudorandomness is the usual PRF experiment: the adversary chooses its oracle queries, possibly adaptively.
Weak pseudorandomness gives the adversary only random input-output pairs, which is the exact security notion that our generic construction will amplify.
\begin{definition}[$(q, t, \epsilon)$-Strong Pseudorandom]
\label{def:pseudorandom}
\normalfont
Let $q, t \in \N$, $\epsilon \geq 0$.
A function family $\ff F=\{F_k\colon\Set X\to\Set Y,\; k\in\Set K\}$ is \emph{$(q,t,\epsilon)$-strong pseudorandom} if $\game G_0\approx_{q,t,\epsilon}\game G_1$, where:
\begin{itemize}
    \item \textbf{Game $\game  G_0$.} Sample a random key $k\sample \Set{K}$, and give the adversary oracle access to $F_k(\cdot)$.
    \item \textbf{Game $\game  G_1$.} Sample a  random function $R \sample \ff U(\Set X, \Set Y)$, and give the adversary oracle access to $R(\cdot)$.
\end{itemize}
\end{definition}

We also define weak pseudorandomness, where the adversary does not obtain oracle access, but only receives input-output pairs on uniformly random inputs.
This yields the notion of a weak pseudorandom function (WPRF), formally defined in the next section.

\begin{definition}[$(q, t, \epsilon)$-Weak Pseudorandom]
\label{def:weakpseudorandom}
\normalfont
Let $q, t \in \N$, $\epsilon \geq 0$.
A function family $\ff G=\{G_k\colon \Set{X}\to \Set{Y}\}_{k\in\Set{K}}$ is \emph{$(q, t,\epsilon)$-weak pseudorandom} if $\game G_0 \approx_{0, t, \epsilon} \game G_1$:
\begin{itemize}

    \item \textbf{Game $\game G_0$.} Sample random inputs $\vc x=(x_{(1)},\ldots,x_{(q)})\sample\Set X^q$ and a random key $k\sample\Set K$, and give the adversary
    \[
    (\vc x,G(k,\vc x))\in\Set X^q\times\Set Y^q,
    \qquad
    G(k,\vc x)\coloneqq (G(k,x_{(1)}),\ldots,G(k,x_{(q)})).
    \]
    \item \textbf{Game $\game G_1$.} Sample $\vc x\sample\Set X^q$ and $\vc u=(u_{(1)},\ldots,u_{(q)})\sample\Set Y^q$, and give the adversary $(\vc x,\vc u)$.
\end{itemize}
\end{definition}

We will also use the notion of $n$-wise independent hashing.
For our purposes, it is convenient to phrase $n$-wise independent hashing as a special case of perfect strong pseudorandomness against $n$ queries.
\begin{definition}[$n$-Wise Independent Hashing]
\label{def:kwise}
\normalfont
Let $n \in \N$.
A {function family} $\ff H$ is an \emph{$n$-wise independent hashing} if $\ff H$ is $(n, \infty, 0)$-strong pseudorandom. Equivalently, on every set of at most~$n$ distinct inputs, the outputs are independent and uniform.
\end{definition}

A basic consequence of $2$-wise independence is that collisions are unlikely on any fixed set of distinct inputs.
\begin{lemma}[Collision Probability]
\label{lemma:collision}
Let $\ff H = \{H_k\colon \Set X \to \Set Y, \, k\in \Set K\}$ be a $2$-wise independent hashing.
For any distinct $x_{(1)}, \dots, x_{(q)} \in \Set X$, it holds that
$$
\Pr_{k \sample \Set K}\left[H_k\left(x_{(1)}\right), \dots H_k\left(x_{(q)}\right) \text{ not all distinct}\right] \leq \frac{q^2}{2} \cdot \frac{1}{|\Set Y|}
$$
\end{lemma}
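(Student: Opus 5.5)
The plan is a union bound over all pairs of queries. The collision event is the union, over the $\binom{q}{2}$ unordered pairs $\{a,b\}\subseteq[q]$ with $a\neq b$, of the events
\[
E_{a,b} \coloneqq \bigl\{H_k(x_{(a)}) = H_k(x_{(b)})\bigr\}.
\]
So it suffices to bound $\Pr_{k\sample\Set K}[E_{a,b}]$ for a single fixed pair and then sum the bounds.

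To handle one pair, I would apply Definition~\ref{def:kwise} with $n=2$. Since $x_{(a)}\neq x_{(b)}$, the pair $\bigl(H_k(x_{(a)}),H_k(x_{(b)})\bigr)$ is exactly distributed as two independent uniform elements of $\Set Y$. This uses the stated equivalent form: on any at most $2$ distinct inputs, the outputs are independent and uniform. If only the game-based form is available, I would derive this by taking the distinguisher that queries $x_{(a)}$ and $x_{(b)}$ and outputs the indicator of equality. Its advantage is $0$, so the collision probability equals the one under a truly random function $R\sample\ff U(\Set X,\Set Y)$. That probability is $\sum_{y\in\Set Y}|\Set Y|^{-2}=1/|\Set Y|$.

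Summing over pairs then gives
\[
\Pr\bigl[\exists\, a\neq b:\ H_k(x_{(a)})=H_k(x_{(b)})\bigr]\le \binom{q}{2}\frac{1}{|\Set Y|}\le \frac{q^2}{2}\cdot\frac{1}{|\Set Y|}.
\]

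There is no real obstacle here. The only subtle point is the pair-collision probability: it must be computed from perfect $2$-query indistinguishability applied to an unbounded adversary, which is legitimate because the definition uses $t=\infty$ with advantage $0$.
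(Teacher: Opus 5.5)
Your proposal is correct and follows essentially the same route as the paper's proof sketch: $2$-wise independence gives collision probability $1/|\Set Y|$ for each pair of distinct inputs, and a union bound over at most $q^2/2$ pairs gives the claim. Your extra derivation of the pairwise collision probability from the game-based (perfect $2$-query indistinguishability) form of the definition is a valid elaboration that the paper leaves implicit.
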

\begin{proof}[Proof sketch]
For every pair of distinct inputs, $2$-wise independence makes the two hash values independent and uniform, so they collide with probability $1/|\Set Y|$.
A union bound over the $\le q^2/2$ pairs gives the claim.
\end{proof}

We will repeatedly combine function families in two simple ways.
The sum operation XORs independent outputs and is used to amplify the randomness guarantee.
The product operation concatenates independent outputs and is used to extend the range of pseudorandom objects.

\begin{definition}[Sum of Function Families]
\normalfont
\label{def:sumoffunctionfamily}
Let $\Set Y = \bina^{*}$ and 
let $n\in \N$.
For $i \in [n]$,
let $\ff F_i = \{F_k\colon \Set X \to \Set Y, k \in \Set K_i\}$ be a $(s_i, d_i)$-function family.
The \emph{sum} of these families is
\[
\ff F_1\oplus \cdots \oplus \ff F_n \;\coloneqq\; \{\, F_{k_1}\oplus \cdots \oplus  F_{k_n} \colon \Set{X}\to \Set Y,\;  (k_1, \cdots,  k_n) \in \Set K_1 \times \cdots \times \Set K_n \,\},
\]
where $$(F_{k_1}\oplus \cdots \oplus F_{k_n})(x)\coloneqq F_{k_1}(x)\oplus \cdots \oplus F_{k_n}(x) \in \Set Y.$$

It is a $(\sum_i s_i,\max_i d_i+O(\log n))$-function family.
\end{definition}


The product of function families corresponds to evaluating all component functions on the same input $x$, and then concatenating their outputs.
\begin{definition}[Product of Function Families]
\label{def:concatoffunctionfamily}
\normalfont
Let $n\in \N$ and let $\ff F  = \{ F_k \colon \Set{X}\to \mathcal{Y}, \, k\in \Set K\}$ be an $(s, d)$-function family.
The $n$-product of $\ff F$ is
\[
\ff F^{\times n} 
\;\coloneqq\;
\{\, (F_{k_1}, \dots,  F_{k_n}) \colon \Set X \to \mathcal{Y}^n,\;  (k_1, \dots, k_n) \in \Set K^n \,\},
\]
where $$(F_{k_1}, \dots, F_{k_n})(x) \coloneqq (F_{k_1}(x), \dots, F_{k_n}(x)) \in \Set Y^n.$$

It is an $(ns,d)$-function family.
\end{definition}

The first closure property says that adding an independent pseudorandom component preserves pseudorandomness.
Intuitively, as long as one summand already looks random, the whole sum also looks random.
\begin{lemma}[Direct Sum Lemma]
\label{lemma:superposition}
Let $\Typ \in \{\text{strong}, \text{weak}\}$.
$\ff F_1 \oplus \cdots \oplus \ff F_n$ is $(q, t, \epsilon)$-$\Typ$ pseudorandom, if there exists $i \in [n]$ such that $\ff F_i$ is $(q, t, \epsilon)$-$\Typ$ pseudorandom.
\end{lemma}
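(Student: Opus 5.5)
The plan is a direct reduction: any distinguisher against the sum $\ff F_1\oplus\cdots\oplus\ff F_n$ becomes a distinguisher against the single pseudorandom summand $\ff F_i$, with the same number of queries and essentially the same size. Fix $i$ such that $\ff F_i$ is $(q,t,\epsilon)$-$\Typ$ pseudorandom. Because the keys $k_1,\dots,k_n$ of the sum are sampled independently, we can split the key into $k_i$ and the remaining keys $k_{-i}=(k_j)_{j\neq i}$. The central observation is simple. If $R\sample \ff U(\Set X,\Set Y)$ is a uniformly random function and $g$ is any fixed function independent of $R$, then $x\mapsto R(x)\oplus g(x)$ is again a uniformly random function. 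Here $\Set Y$ is a fixed-length string space, so XOR with a fixed string is a bijection on it.

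For the strong case, suppose $\att A$ distinguishes oracle access to $F_{k_1}\oplus\cdots\oplus F_{k_n}$ from a random function $R$. We build $\att B$ as follows. It is given an oracle $\mathcal O$, which is either $F_{k_i}$ or a random function. It samples $k_{-i}$ itself and runs $\att A$. Each query $x$ of $\att A$ is answered with $\mathcal O(x)\oplus\bigoplus_{j\ne i}F_{k_j}(x)$. When $\mathcal O=F_{k_i}$, this reproduces game $\game G_0$ for the sum exactly. When $\mathcal O=R$, the observation above (applied conditionally on $k_{-i}$) shows that $\att A$ sees a uniformly random function, which is exactly $\game G_1$. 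Hence $\att B$ has the same advantage as $\att A$ and uses the same number of queries.

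The weak case is the same argument on samples. $\att B$ receives $(\vc x,\vc v)$, samples $k_{-i}$, and hands $\att A$ the pair $(\vc x,\vc v\oplus \bigoplus_{j\ne i}F_{k_j}(\vc x))$, with the XOR taken coordinatewise. When $\vc v=F_{k_i}(\vc x)$, this is the real game for the sum. When $\vc v$ is uniform and independent of $\vc x$ and $k_{-i}$, the result is uniform and independent of $\vc x$, which is the ideal game.

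The main obstacle is bookkeeping the adversary's size. $\att B$ has $k_{-i}$ hardwired (we are in the non-uniform model, and an averaging argument lets us fix the best choice of $k_{-i}$) and must evaluate the other $n-1$ families on each query. So $\att B$ has size $t+q\sum_{j\neq i}s_j+O(q\,n\log|\Set Y|)$ rather than $t$. To obtain the lemma exactly as stated, with the same $t$, one reads $t$ as being measured up to this additive overhead, which is the standard convention in this paper's uses. Alternatively, one states the parameters as $(q,t-\mathrm{poly},\epsilon)$, with the overhead absorbed wherever the lemma is applied.
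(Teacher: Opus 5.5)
Your reduction is correct and is the standard argument. The paper states this lemma in its preliminaries without proof, so there is no different route to compare against.

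Your averaging step, which hardwires some fixed $k_{-i}^*$, is sound for the reason you give. For every fixed $k_{-i}$, the function $R\oplus\bigoplus_{j\neq i}F_{k_j}$ is again a uniform random function. In the weak game, $\vc v\oplus\bigoplus_{j\ne i}F_{k_j}(\vc x)$ is uniform given $\vc x$. So the ideal-world acceptance probability does not depend on $k_{-i}$, and some fixed choice retains the full advantage.

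Your caveat about circuit size is more than a convention. With the same $t$ on both sides, the lemma is false in general, so the loss you compute is necessary. For example:
\begin{itemize}
\item Take $\ff F_2=\{c\}$ for a single fixed $c\colon\bina^{\lambda}\to\bina$, and $\ff F_1=\{x\mapsto\langle k,x\rangle\oplus c(x)\}_{k\in\bina^\lambda}$.
\item For a random $c$, a concentration-plus-union-bound argument should give that $\ff F_1$ is weakly pseudorandom with $2\lambda$ samples against all circuits of size $2^{\Omega(\lambda)}$. The reason is that, over a random $c$, the outputs on distinct sample points are uniform.
\item Yet $\ff F_1\oplus\ff F_2=\{x\mapsto\langle k,x\rangle\}$ is broken by a polynomial-size Gaussian-elimination circuit.
\end{itemize}

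So the right statement is the one your reduction yields, with $t$ replaced by $t-q\sum_{j\neq i}s_j-O(qn\log|\Set Y|)$. This matches how the paper charges $t-ns$ in the product lemma. In the paper's only application, the key-uniformization lemma, the other summand is the constant one-bit function $H_b$. There the overhead is at most $q$ extra gates and is absorbed by polynomial security, so nothing downstream is affected.
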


The second closure property lets us increase the output length while paying only a linear loss in security.
\begin{lemma}[Product Preserves Pseudorandomness]
\label{lemma:productpreserve}
Let $\Typ \in \{\text{strong}, \text{weak}\}$.
If the $(s, d)$-function family $\ff F \colon \Set X \to \Set Y$ is $(q, t, \epsilon)$-$\Typ$ pseudorandom,
then $\ff F^{\times n}\colon \Set X \to \Set Y^n$ is a $(ns, d)$-function family that is $(q, t - n s, n\epsilon)$-$\Typ$ pseudorandom.
\end{lemma}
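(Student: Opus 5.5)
The claim is that $\ff F^{\times n}$ is an $(ns,d)$-function family that is $(q,t-ns,n\epsilon)$-$\Typ$ pseudorandom. The size and depth part is immediate: the $n$ component circuits run in parallel on the same input, giving size $ns$ and depth $d$. The security part I would prove by a standard hybrid argument over the $n$ coordinates. For $j\in\{0,\dots,n\}$, define hybrid $\game H_j$ in which the adversary interacts with the map $x\mapsto (R_1(x),\dots,R_j(x),F_{k_{j+1}}(x),\dots,F_{k_n}(x))$. Here $R_1,\dots,R_j\sample\ff U(\Set X,\Set Y)$ are independent random functions and $k_{j+1},\dots,k_n\sample\Set K$ are independent keys. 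In the strong case the adversary gets oracle access to this map. In the weak case it gets $q$ uniformly random inputs $\vc x$ together with the map's values on them. Then $\game H_0$ is the real game for $\ff F^{\times n}$. Also $\game H_n$ is the ideal game, because a tuple of $n$ independent uniform functions into $\Set Y$ is exactly a uniform function into $\Set Y^n$.

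Next I would show $\game H_{j-1}\approx_{q,t-ns,\epsilon}\game H_j$ for each $j\in[n]$. Given an adversary $\att A$ of size $t-ns$ against these two hybrids, build $\att B$ against the $\Typ$-pseudorandomness of $\ff F$. First, $\att B$ hardwires keys $k_{j+1},\dots,k_n$ and circuits for $F_{k_{j+1}},\dots,F_{k_n}$; non-uniformity lets us fix the best choice, or one can average. Then $\att B$ forwards each query $x$ of $\att A$ to its own oracle, obtaining coordinate $j$. It fills coordinates $j+1,\dots,n$ using the hardwired circuits.

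The one delicate point is coordinates $1,\dots,j-1$, which must be independent random functions. In the strong, adaptive setting $\att B$ cannot hardwire a whole random function. I would handle this by having $\att B$ answer these coordinates via fresh uniform strings for each distinct query, which is consistent on repeated queries. Alternatively, fix them non-uniformly as a best-case table restricted to the queried points. The cleanest fix is to average over the random functions and fix a good choice, since only $q$ points are ever evaluated. Hardwiring $q$ values of each $R_i$ might cost more than $ns$ in size. To avoid this, I would reorder the hybrids so that the hardwired part is always the keyed functions. The slack $ns$ covers at most $n$ circuits of size $s$. The random-coordinate answers are obtained by fixing, via averaging, those random functions to be functions that are themselves computed by size-$s$ circuits $F_{k'}$. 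That is legitimate only after an inductive argument, so the simplest honest route is to charge the simulation of random coordinates to the non-uniform advice and note the paper's size bookkeeping.

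In the weak case this issue disappears. $\att B$ receives $(\vc x,\vc u)$, where $\vc u$ is either $F_k(\vc x)$ or uniform. It plugs $\vc u$ into coordinate $j$, samples the earlier coordinates as fresh uniform strings, and computes the later ones with hardwired keys. The size overhead is at most $ns$, so $\att B$ has size at most $t$ and advantage less than $\epsilon$. Finally, the Triangle Inequality (Lemma~\ref{lemma:triangle}) applied across the $n$ hybrids gives total advantage less than $n\epsilon$, proving the claim. The main obstacle is the bookkeeping for simulating the already-randomized coordinates within the size budget in the adaptive strong case. I would resolve it by the averaging and fixing argument, noting that a random function only needs to be consistent on the at most $q$ queried points.
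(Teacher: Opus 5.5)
The paper states this lemma without proof, and your coordinate-by-coordinate hybrid combined with the triangle inequality (Lemma~\ref{lemma:triangle}) is surely the intended argument. It correctly gives the $n\epsilon$ loss. One small fix first: in the weak case, define the hybrids with independent uniform blocks in $\Set Y^q$ instead of $R_i(\vc x)$, which is what your reduction actually does. The paper's ideal weak game draws $\vc u$ uniformly from $\Set Y^q$, and this differs from $R(\vc x)$ when $\vc x$ has repeated entries. With your definition, $\game H_n$ is not exactly the ideal game and $\att B$ does not exactly simulate $\game H_j$.

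The genuine gap is the size bookkeeping, exactly where you conclude that $\att B$ has size at most $t$. $\att B$ must evaluate each hardwired $F_{k_{j+1}},\dots,F_{k_n}$ on every one of the $q$ sample points (weak case) or on every one of $\att A$'s up to $q$ queries (strong case). In a circuit, each evaluation needs its own copy of the size-$s$ circuit. The overhead is therefore $q(n-j)s$, not at most $ns$. So $|\att A|\le t-ns$ does not give $|\att B|\le t$ once $q\ge 2$.

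In the strong case there is a second problem: none of your options for the already-randomized coordinates fits a budget independent of $q$ and $|\Set X|$. Every ordering of the hybrids leaves the already-switched coordinates random, so reordering does not help. A function fixed by averaging must be tabulated on every point an adaptive $\att A$ might query, and those points depend on the challenge answers. Lazy sampling with fixed coins uses on the order of $q^2\log|\Set X|$ gates for equality tests and multiplexing. Replacing $R_i$ by some $F_{k'}$ presupposes the statement being proved, and an induction compounds the losses. In the circuit model, non-uniform advice \emph{is} circuit size, so it cannot be charged elsewhere.

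What your argument does prove is the lemma with a weaker size parameter:
\begin{itemize}
\item Weak case: $t-q(n-1)s$. Fix the uniform blocks as constants by averaging.
\item Strong case: $t-q(n-1)(s+s_H)$. Replace each $R_i$ by a member of a $q$-wise independent family with circuits of size $s_H$. This is perfectly indistinguishable for $q$-query adversaries by Definition~\ref{def:kwise}. Then fix the hash keys by averaging.
\end{itemize}
State and prove that version rather than deferring to ``the paper's size bookkeeping''. The stated $t-ns$ appears to be loose bookkeeping in the paper. The weaker bound suffices for every use of the lemma there, since only polynomial overheads matter; for instance, the reduction in Lemma~\ref{lemma:H1indis} already budgets $q^2\cdot\poly(\lambda)$ extra gates.
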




Conversely, pseudorandomness is also preserved when we discard part of the output.
\begin{lemma}[Output Restriction Preserves Pseudorandomness]
\label{lemma:RangeRestrictionpreservepseudorandom}
Let $\Typ \in \{\text{strong}, \text{weak}\}$.
Let  $\ff F = \{ F_k \colon \mathcal{X} \to \mathcal{Y}^n\}$ be $(q, t, \epsilon)$-$\Typ$ pseudorandom.
Then for any $m' \leq n$, $\ff F' = \{ F'_k \colon \mathcal{X} \to \mathcal{Y}^{m'}\}$ is $(q, t, \epsilon)$-$\Typ$ pseudorandom, where $F'_k(x) \coloneqq (F_k(x))_{1\dots m'} \in \Set Y^{m'}$.
\end{lemma}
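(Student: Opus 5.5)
This is a direct reduction. Suppose an adversary $\att A$ of size $t$ making at most $q$ queries distinguishes $\ff F'$ from a random function $\Set X\to\Set Y^{m'}$ with advantage at least $\epsilon$, where $\Typ=\text{strong}$. We will build an adversary $\att B$ against $\ff F$ with the same query count and essentially the same size. $\att B$ runs $\att A$. Whenever $\att A$ queries $x$, $\att B$ forwards $x$ to its own oracle, receives $z\in\Set Y^n$, and returns the prefix $z_{1\dots m'}$ to $\att A$. At the end, $\att B$ outputs whatever $\att A$ outputs.

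The two games are handled separately. In $\game G_0$ for $\ff F$, the oracle is $F_k$, so $\att A$ sees exactly $F'_k$; this is $\game G_0$ for $\ff F'$. In $\game G_1$, the oracle is a uniform $R\in\ff U(\Set X,\Set Y^n)$. Its coordinatewise restriction $x\mapsto R(x)_{1\dots m'}$ is then a uniform element of $\ff U(\Set X,\Set Y^{m'})$. This holds because the values $R(x)$ are independent and uniform over $\Set Y^n$, and projecting a uniform element of $\Set Y^n$ to its first $m'$ coordinates gives a uniform element of $\Set Y^{m'}$. Hence $\att B$ achieves the same advantage as $\att A$, which contradicts the assumed security of $\ff F$.

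The weak case is identical, except that $\att B$ receives $(\vc x,\vc z)$ with $\vc z\in(\Set Y^n)^q$. It truncates each $z_{(j)}$ to its first $m'$ coordinates and hands the result to $\att A$. Again $G_0$ maps to $G_0$, and uniform $\vc u\in(\Set Y^n)^q$ maps to uniform vectors in $(\Set Y^{m'})^q$, while $\vc x$ is unchanged.

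The one point needing care, and the only possible obstacle, is the size bookkeeping. The statement keeps the bound $t$ unchanged, so truncation must cost nothing. In the circuit model this holds: truncation just discards output wires of the oracle gate and adds no gates. So $\att B$ has size at most that of $\att A$, and no loss in $t$ or $\epsilon$ occurs.
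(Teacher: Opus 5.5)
Your reduction is correct. Forwarding queries (or samples) and truncating each answer to its first $m'$ coordinates maps $\game G_0$ to $\game G_0$ and $\game G_1$ to $\game G_1$ exactly, adds no gates, and so loses nothing in $q$, $t$, or $\epsilon$. The paper states this lemma without proof, and your argument is the standard one it implicitly relies on.
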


\begin{corollary}[Range Extension]
\label{corollary:rangeextension}
Let $\Typ \in \{\text{strong}, \text{weak}\}$.
If there exists a $(s, d)$-function family $\ff F = \{F_k \colon \Set X \to \Set Y^m, k \in \Set K\}$ that is $(q, t, \epsilon)$-$\Typ$ pseudorandom,
then there exists a $(ns, d)$-function family $\ff F' = \{F_k \colon \Set X \to \Set Y^n, k \in \Set K^n\}$ that is $(q, t - ns, n\epsilon)$-$\Typ$ pseudorandom.
\end{corollary}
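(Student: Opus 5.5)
Since $\ff F=\{F_k\colon\Set X\to\Set Y^m\}$ has outputs that are $m$-tuples, I will use Lemma~\ref{lemma:productpreserve} to widen the range to enough copies of $\Set Y^m$, and then Lemma~\ref{lemma:RangeRestrictionpreservepseudorandom} to keep only the first $n$ coordinates. If $n\le m$, output restriction alone gives the claim, since the parameters $(ns,d)$ and $(q,t-ns,n\epsilon)$ are weaker than $(s,d)$ and $(q,t,\epsilon)$. So assume $n>m$.

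\textbf{Step 1 (product).} Set $n'\coloneqq\lceil n/m\rceil\le n$ and form $\ff F^{\times n'}=\{(F_{k_1},\dots,F_{k_{n'}})\colon\Set X\to(\Set Y^m)^{n'}\}$ with keys $(k_1,\dots,k_{n'})\in\Set K^{n'}$. By Definition~\ref{def:concatoffunctionfamily} this is an $(n's,d)$-function family. By Lemma~\ref{lemma:productpreserve} (for the given $\Typ$) it is $(q,t-n's,n'\epsilon)$-$\Typ$ pseudorandom. Because $n'\le n$, it is also $(q,t-ns,n\epsilon)$-$\Typ$ pseudorandom, as indistinguishability is monotone in both the size bound and the advantage.

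\textbf{Step 2 (restriction).} Identify $(\Set Y^m)^{n'}$ with $\Set Y^{mn'}$ where $mn'\ge n$. Let $F'$ output the first $n$ coordinates. Lemma~\ref{lemma:RangeRestrictionpreservepseudorandom} preserves $(q,t-ns,n\epsilon)$-$\Typ$ pseudorandomness. Discarding wires cannot increase circuit size or depth, so $\ff F'$ is an $(ns,d)$-function family.

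The statement indexes keys by $\Set K^n$, whereas the construction uses only $n'$ components. To match the statement literally, I would simply take $n$ copies in Step 1, i.e.\ use $\ff F^{\times n}$ directly, which yields exactly the parameters $(ns,d)$ and $(q,t-ns,n\epsilon)$, followed by restriction to $n$ coordinates. I expect no real obstacle: the only points to check are this bookkeeping with the key space and that the adversary-size loss $ns$ from Lemma~\ref{lemma:productpreserve} is as stated.
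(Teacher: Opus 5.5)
Your proposal is correct and is the argument the paper intends: the corollary is stated without proof right after \Cref{lemma:productpreserve} and \Cref{lemma:RangeRestrictionpreservepseudorandom}, and your final version (take $\ff F^{\times n}$, then keep the first $n$ coordinates) yields exactly an $(ns,d)$-function family with key space $\Set K^n$ that is $(q,t-ns,n\epsilon)$-$\Typ$ pseudorandom. The order of the two steps is immaterial, and the $n'=\lceil n/m\rceil$ detour is unnecessary: restricting each copy to a single coordinate and then taking the $n$-fold product, as the paper does in the proof of \Cref{lemma:keyuniformization}, gives the same bounds.
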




%

\subsection{Pseudorandom Functions}
\label{sec:prf-definitions}

In this section, we formally define pseudorandom functions.

\begin{definition}[Polynomially Bounded Functions]
\normalfont
A function $f (\cdot)$ is \emph{polynomially bounded} if there exists a polynomial $p(\cdot)$, such that $f(n) \leq p(n)$, for $n \in \N$.
\end{definition}

\begin{definition}[Negligible Functions]
\normalfont
A function $\funceps (\cdot)$ is \emph{negligible} if for every $c>0$, there exists $n_c\in\N$ such that $\funceps(n) < n^{-c}$ for all $n \ge n_c$.
\end{definition}

\begin{definition}[Function Family Ensembles]
\label{def:ffe}
\normalfont
Let $p_1(\cdot), p_2(\cdot)$ be two polynomially bounded functions.
A \emph{function family ensemble} $\ffe F = \{\ff F_{\lambda}\}_{\lambda \in \N}$ is an infinite set of function families, where 
$$\ff F_{\lambda} = \{ F_k \colon \bina^{\lambda} \to \bina^{p_1(\lambda)}, \, k \in \bina^{p_2(\lambda)}\}$$

Let $t(\cdot), d(\cdot)$ be functions of $\lambda$.
$\ffe{F} = \{\ff F_{\lambda}\}_{\lambda \in \N}$ is a $(t, d)$-\emph{function family ensemble} if for every $\lambda \in \N$, $\ff F_{\lambda}$ is a $(t(\lambda), d(\lambda))$-function family.
$\ffe{F}$ is \emph{efficient} if there exist two polynomials $t(\cdot)$ and $d(\cdot)$ such that $\ffe{F}$ is a $(t, d)$-function family ensemble.
Moreover, $\ffe{F}$ is in $\mathsf{NC}^1$ if it is a $(\poly(\lambda), O(\log \lambda))$-function family ensemble.
\end{definition}


\begin{definition}[PRFs~\cite{FOCS:GoldreichGM84,JACM:GoldreichGM86}]
\label{def:prf}
\normalfont
Let $\func q(\cdot)$, $\func t(\cdot)$, and $\funceps (\cdot)$ be functions of the security parameter $\lambda$.
An efficient function family ensemble $\ffe{F}=\{\ff F_\lambda \}_{\lambda\in\N}$ is a \emph{$(\func q, \func t, \funceps)$-PRF} if for every $\lambda\in\N$, $\ff F_\lambda$ is $(q(\lambda), t(\lambda), \funceps(\lambda))$-strong pseudorandom.

$\ffe{F}$ is a (\emph{polynomially secure}) \emph{PRF}  if for any polynomials $\func q(\cdot)$, $\func t(\cdot)$, there exists a negligible function $\funceps(\cdot)$, such that $\ffe{F}$ is a $(\func q, \func t, \funceps)$-\emph{PRF}.

\end{definition}

WPRFs are defined analogously to PRFs, except that the definition is with respect to weak pseudorandomness rather than strong pseudorandomness.

\begin{definition}[WPRFs~\cite{FOCS:NaorR95,JCSS:NaorR99}]
\label{def:weakprf}
\normalfont

Let $\func q(\cdot)$, $\func t(\cdot)$, and $\funceps (\cdot)$ be functions of the security parameter $\lambda$.
An efficient function family ensemble $\ffe{G}=\{\ff G_\lambda \}_{\lambda\in\N}$ is a \emph{$(\func q, \func t, \funceps)$-WPRF} if for every $\lambda\in\N$, $\ff G_\lambda$ is $(q(\lambda), t(\lambda), \funceps(\lambda))$-weak pseudorandom.

$\ffe{G}$ is a (\emph{polynomially secure}) \emph{WPRF}  if for any polynomials $\func q(\cdot)$, $\func t(\cdot)$, there exists a negligible function $\funceps(\cdot)$, such that $\ffe{G}$ is a $(\func q, \func t, \funceps)$-\emph{WPRF}.
\end{definition}

\subsection{From Non-Adaptive to Full-Fledged PRFs}
\label{sec:almostrandom}

Our generic construction in \Cref{sec:taperingGGM} is easiest to analyze through a non-adaptive statistical property.
Roughly, after a short hashed prefix separates the adversary's queries, the construction becomes perfectly random unless the hash key falls into a bad set.
The notion of almost-randomness captures this situation: the bad set may depend on the queried inputs and on part of the key, but outside the bad set the remaining randomness makes the answers perfectly uniform.

The left-monotonicity requirement is a natural technical condition that allows the above non-adaptive condition to imply security even against adaptive adversaries. Intuitively, it requires that, as an adaptive transcript grows, once the queried set becomes bad, adding more queries cannot make it good again.

\begin{definition}[Left-Monotone Sets]
{\normalfont
Let $\Set X$, $\AUX$ be two sets. 
Let $\Set X^*$ be the power set of $\Set X$.
A set $\Set S \subseteq \Set X^* \times \AUX$ is \emph{left-monotone} if for every $(\vc x, aux) \in \Set S$, and every $\vc x' \supseteq \vc x$, it holds that $(\vc x', aux) \in \Set S$.
}

\end{definition}

Informally, a family is almost-random if, except for a small bad event depending on the queried inputs, its outputs on any fixed set of distinct inputs are perfectly uniform.
\begin{definition}[$(q, \epsilon)$-Almost-Random]
{\normalfont
\label{def:almostrandom}
Let $q \in \N$.
Let 
$$\ff F = \{ F_{r, aux} \colon \Set X \to \Set Y \}_{(r, aux) \in \Set K}$$ 
be a function family with key space $\Set K = \Set R \times \AUX$.
$\ff F$ is $(q, \epsilon)$-\emph{almost-random}
if there exists a left-monotone set $\BAD \subseteq \Set X^* \times \AUX$, such that for every $\vc x= \left(x_{(1)},\ldots,x_{(q)} \right)\in \Set X^q$ with all $x_{(1)},\ldots,x_{(q)}$ distinct, it holds that
\begin{enumerate}
    \item $\Pr_{aux \sample \AUX}[(\vc x, aux) \in \BAD] < \epsilon$.
    \item Fix any $aux \in \AUX$ such that $(\vc x, aux) \notin \BAD$. 
    Let $Y \coloneqq F_{r, aux}(\vc x)$ be a random variable support on $\Set Y^q$, where the randomness is over $r \sample \Set R$. Then $Y$ is uniformly distributed over  $\Set Y^q$.
\end{enumerate}
}
\end{definition}


A key advantage of almost-randomness is that it amplifies very cleanly under sum of independent copies.
The next lemma quantifies this amplification.
\begin{lemma}[Sum Amplifies Almost-Randomness]
\label{lemma:XORAmplifiesAlmost-Randomness}
Let $n \in \N$.
If $\ff F \colon \Set X \to \Set Y$ is $(q, \epsilon)$-almost-random, then $\ff F^{\oplus n} \colon \Set X \to \Set Y$ (\Cref{def:sumoffunctionfamily}) is $(q, \epsilon^n)$--almost-random. 
\end{lemma}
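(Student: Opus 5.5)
The plan is to write the summed family explicitly with the right key structure and then build its bad set directly from $\BAD$. The key of $\ff F^{\oplus n}$ is $((r_1,aux_1),\dots,(r_n,aux_n))$. I regroup it as $\vc r=(r_1,\dots,r_n)\in\Set R^n$ together with $\vc{aux}=(aux_1,\dots,aux_n)\in\AUX^n$, so that $\ff F^{\oplus n}$ has the form required by \Cref{def:almostrandom}, with auxiliary space $\AUX^n$. Writing $\BAD\subseteq\Set X^*\times\AUX$ for the left-monotone set witnessing that $\ff F$ is $(q,\epsilon)$-almost-random, I define
\[
\BAD_n \coloneqq \{(\vc x,(aux_1,\dots,aux_n)) : (\vc x,aux_i)\in\BAD \text{ for every } i\in[n]\}.
\]
In words, the summed family is bad only when every copy is bad.

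Left-monotonicity of $\BAD_n$ is immediate. If $(\vc x,aux_i)\in\BAD$ for all $i$ and $\vc x'\supseteq\vc x$, then left-monotonicity of $\BAD$ gives $(\vc x',aux_i)\in\BAD$ for all $i$.

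For condition 1, fix distinct $x_{(1)},\dots,x_{(q)}$. The $aux_i$ are sampled independently and uniformly, so
\[
\Pr\bigl[(\vc x,\vc{aux})\in\BAD_n\bigr]=\prod_{i=1}^{n}\Pr_{aux_i}\bigl[(\vc x,aux_i)\in\BAD\bigr]<\epsilon^n.
\]

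For condition 2, fix $\vc{aux}$ with $(\vc x,\vc{aux})\notin\BAD_n$. Then some index $i^*$ satisfies $(\vc x,aux_{i^*})\notin\BAD$. By almost-randomness of $\ff F$, the vector $F_{r_{i^*},aux_{i^*}}(\vc x)$ is uniform over $\Set Y^q$ when $r_{i^*}$ is uniform. The other summands $\bigoplus_{i\neq i^*}F_{r_i,aux_i}(\vc x)$ depend only on $(r_i)_{i\neq i^*}$, which is independent of $r_{i^*}$. So I condition on those other $r_i$. A fixed vector XORed with a uniform vector in $\Set Y^q=(\bina^{m})^q$ is again uniform, since XOR by a constant is a bijection. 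Averaging over the conditioning, the full output $\bigoplus_i F_{r_i,aux_i}(\vc x)$ is uniform over $\Set Y^q$.

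There is no real obstacle here. The one point needing care is that the bad set is allowed to depend on $\vc x$ together with the auxiliary key, and this is exactly what makes the index $i^*$ well-defined once $\vc x$ and $\vc{aux}$ are fixed. The XOR step also requires $\Set Y$ to be a group under $\oplus$, which holds because \Cref{def:sumoffunctionfamily} takes $\Set Y=\bina^*$ with outputs of a fixed length.
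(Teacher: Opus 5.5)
Your proposal is correct and matches the paper's proof step for step. The paper defines the same bad set (the summed family is bad only when every copy is bad), inherits left-monotonicity coordinatewise, bounds the bad probability by $\epsilon^n$ using independence of the $aux_i$, and gets uniformity from the one good copy XORed with independent summands.
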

\begin{proof}
See \Cref{appendix:XORAmplifiesAlmost-Randomness}.
\end{proof}

Finally, we recall the theorem showing that almost-randomness already implies full adaptive pseudorandomness (i.e., strong pseudorandomness).
Together with the previous amplification lemma, this will be the bridge from our non-adaptive analysis to the final PRF guarantee.
\begin{lemma}[Almost-Random $\Rightarrow$ Strong Pseudorandom (Lemma 3.2, \cite{TCC:BermanHKN13,JOC:BermanHKN19})]
\label{lemma:almostrandomimplypseudorandom}
If $\ff F$ is $(q,\epsilon)$-almost-random, then $\ff F$ is $(q,\infty,\epsilon)$-strong pseudorandom.
\end{lemma}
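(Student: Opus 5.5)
Our plan is to prove the lemma by a hybrid argument in the style of the standard proof for non-adaptive-to-adaptive transformations, with left-monotonicity of $\BAD$ used to handle adaptivity. Fix an adaptive adversary $\att A$ (of unbounded size) making at most $q$ queries; without loss of generality it is deterministic and never repeats a query. We couple the two games by sampling $aux \sample \AUX$ in both. In $\game G_1$ we answer with a random function, independent of $aux$. We define the event $E$ that the full transcript's query set $\vc x = (x_{(1)},\ldots,x_{(q)})$ satisfies $(\vc x, aux)\in\BAD$. The goal is to show that the distribution of transcripts in $\game G_0$ and $\game G_1$ agree outside $E$, and that $\Pr[E]<\epsilon$ in $\game G_1$. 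The standard identical-until-bad argument then gives advantage at most $\epsilon$.

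The first step is to analyze transcripts when $aux$ is fixed. A deterministic adaptive adversary's transcript is determined by the answer vector $\vc y\in\Set Y^q$. Each $\vc y$ determines a query sequence $\vc x(\vc y)$ of distinct inputs, and the transcript with answers $\vc y$ occurs iff $F_{r,aux}(\vc x(\vc y))=\vc y$. In $\game G_1$ this transcript has probability exactly $|\Set Y|^{-q}$. In $\game G_0$, if $(\vc x(\vc y),aux)\notin\BAD$, condition~2 of \Cref{def:almostrandom} says $F_{r,aux}(\vc x(\vc y))$ is uniform over $\Set Y^q$ for $r\sample\Set R$. So the transcript again has probability exactly $|\Set Y|^{-q}$. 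Hence, for every fixed $aux$, the two transcript distributions agree pointwise on all transcripts whose full query set is good.

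Next we bound the bad mass. Averaging over $aux$, the statistical distance between the transcripts is at most the $\game G_1$ probability of landing on a bad transcript. That probability is $\Pr_{aux,\vc y\sample U(\Set Y^q)}[(\vc x(\vc y),aux)\in\BAD]$. Since $aux$ is independent of $\vc y$ in $\game G_1$, we fix $\vc y$ and apply condition~1 to the fixed distinct tuple $\vc x(\vc y)$, which gives probability $<\epsilon$. Averaging over $\vc y$ yields $<\epsilon$. Here we use that the total probabilities of good transcripts coincide, so the bad masses in the two games are also equal. The distance is therefore at most the bad mass, which is $<\epsilon$.

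The main subtlety is an adversary that makes fewer than $q$ queries, or a transcript whose good/bad status must be judged on partial query sets. Left-monotonicity handles this. We pad every adversary to exactly $q$ distinct queries with dummy queries chosen after its real ones. Padding only enlarges the query set, so if the real set is bad, the padded set is bad too. Applying the argument to the padded set, which is fine since condition~2 on the full set implies uniformity on any subset, shows the padded view and hence the real view are $\epsilon$-close. This yields $(q,\infty,\epsilon)$-strong pseudorandomness.
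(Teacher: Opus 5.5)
Your proof is correct. The paper gives no proof of its own here; it imports Lemma~3.2 of Berman--Haitner--Komargodski--Naor, so your argument is a self-contained replacement.

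Your route is the transcript-counting (H-coefficient) argument. For fixed $aux$ and a deterministic non-repeating adversary, the event that the answers equal $\vc y$ is exactly the event $F_{r,aux}(\vc x(\vc y))=\vc y$. Condition~2 therefore gives every good transcript probability $|\Set Y|^{-q}$ in both games. The distance is then bounded by the bad mass measured in $\game G_1$, where $\vc x(\vc y)$ is independent of $aux$ and condition~1 applies pointwise. This cleanly avoids the real obstacle, namely that in $\game G_0$ the adaptively chosen queries are correlated with $aux$.

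You never actually use left-monotonicity, though. In the padding step the real view is a deterministic function of the padded view, so data processing alone transfers the bound; the remark that padding preserves badness plays no role. In fact, under this definition monotonicity is without loss of generality. Conditions~1 and~2 only mention $q$-element sets, so replacing $\BAD$ by $\{(S,aux):\exists T\subseteq S,\ |T|=q,\ (T,aux)\in\BAD\}$ leaves them unchanged while making the set left-monotone. The hypothesis matters only for arguments that test the bad event on transcript prefixes step by step, and your evaluation on the final padded $q$-set sidesteps that.

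The one implicit requirement in your argument is $|\Set X|\ge q$, needed to pad with fresh distinct inputs. Without it the definition is vacuous and the lemma is false. It holds in the application, where $\Set X=\bina^{\lambda}$.
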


\section{Tapering-GGM: A Generic Construction}
\label{sec:taperingGGM}

This section presents our generic weak-to-strong PRF transformation that drives all the main results of the paper.
The construction can be viewed as a \emph{tapering} version of the GGM tree: as the evaluation moves down the tree, the state length shrinks geometrically, and therefore the depth of each subsequent WPRF call also shrinks.
The security proof has two parts.
First, we use the computational security of WPRF for the top constant number of levels, at which point all queries are shattered with high probability.
Second, below that shattering layer, the property of key-uniformity assures uniformity information-theoretically, so no computational security is required from the WPRFs. The latter is formalized next.

\begin{definition}[Key-Uniform WPRFs]
\label{def:hashingweakprf}
\normalfont
A WPRF $\ffe G$ is a \emph{key-uniform} if $\ffe G$ is a $1$-wise independent hashing.
Equivalently, for every security parameter $\lambda$ and every input $r$ in the input domain of $\ff G_\lambda$, the random variable $G_k(r)$ is uniform when $k$ is sampled uniformly from the key space.
\end{definition}


\begin{lemma}[Key-uniformization]
\label{lemma:keyuniformization}
Let $d(\cdot)$ be a function of $\lambda$.
Assume there exists a depth-$d$ WPRF. Then there exists a depth-$(d+1)$ key-uniform WPRF
\[
 \ffe G = \left\{\ff G_{\lambda} = \{ G_k \colon \bina^{p(\lambda)} \to \bina^{\lambda^{1/2}}, \; k \in \bina^{\lambda} \} \right\}_{\lambda \in \N},
\]
 where $p(\lambda)$ is a polynomial.
 \end{lemma}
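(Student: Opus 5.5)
The plan is to start from an arbitrary depth-$d$ WPRF ensemble, re-parameterize it so that the key length is exactly $\lambda$ and the output length is exactly $\lambda^{1/2}$, and then add a one-time pad to force key-uniformity. Concretely, suppose $\ffe G' = \{\ff G'_{\mu}\}$ is a depth-$d$ WPRF with $G'_{k'}\colon \bina^{\mu}\to\bina^{p_1(\mu)}$ and keys in $\bina^{p_2(\mu)}$.

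\textbf{Step 1: fix the sizes.} Let $\lambda_0 = \lambda^{1/2}$ and choose the largest admissible $\mu$ such that $p_2(\mu) \le \lambda - \lambda_0$. Because $p_2$ is a polynomial, $\mu$ is polynomially related to $\lambda$, and hence polynomial security in $\mu$ gives security negligible in $\lambda$. If $p_1(\mu) \ge \lambda_0$, truncate the output to $\lambda_0$ bits using \Cref{lemma:RangeRestrictionpreservepseudorandom}. If $p_1(\mu) < \lambda_0$, apply range extension (\Cref{corollary:rangeextension}) with $n = \lceil \lambda_0/p_1(\mu)\rceil = \poly(\lambda)$ independent keys on the same input, and then truncate. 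In that case the key budget must accommodate $n\cdot p_2(\mu)$ bits, so $\mu$ should instead be chosen so that $n\, p_2(\mu) \le \lambda - \lambda_0$. If no valid $\mu$ exists for small $\lambda$, we are free to output anything, since the statement is asymptotic. Any leftover key bits are padded with unused bits, which do not affect security. The resulting intermediate family $\ff H_\lambda$ has output length $\lambda_0$, input length $p(\lambda) = \mu$, key length at most $\lambda-\lambda_0$, depth $d(\mu) \le d(\lambda)$ (assuming monotone depth, or directly from the non-uniform circuit), and remains weakly pseudorandom with negligible loss.

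\textbf{Step 2: key-uniformize.} Define the key as $k = (k_H, s)$ with $s \in \bina^{\lambda_0}$, and set $G_k(r) \coloneqq H_{k_H}(r) \oplus s$. This adds one layer of XOR gates, giving depth $d+1$. Key-uniformity is immediate: for a fixed $r$, the output $H_{k_H}(r)\oplus s$ is uniform because $s$ is uniform and independent of $k_H$. For weak pseudorandomness, the map $x\mapsto s$ is a constant function with a random key, so $\ff G = \ff H \oplus \ff S$ where $\ff S$ is the family of constant functions, and \Cref{lemma:superposition} with the summand $\ff H$ shows that $\ff G$ inherits the weak pseudorandomness of $\ff H$.

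The only real obstacle is the parameter bookkeeping in Step 1. We must ensure simultaneously that the key length is exactly $\lambda$, the output length is exactly $\lambda^{1/2}$, the security loss $n\epsilon$ together with the time loss $ns$ stays negligible and polynomial respectively, and $\mu$ remains polynomially related to $\lambda$. I would handle this by choosing $\mu = \lambda^{c}$ for a small constant $c$ determined by the degrees of $p_1$ and $p_2$, and checking these inequalities directly.
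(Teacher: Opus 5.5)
Your proposal is correct and follows essentially the same route as the paper. You reparameterize the given WPRF, XOR in a one-time pad viewed as the family of constant functions (so \Cref{lemma:superposition} keeps weak pseudorandomness while forcing $1$-wise independence, at the cost of one layer of depth), and fix the output length via \Cref{lemma:RangeRestrictionpreservepseudorandom} or \Cref{corollary:rangeextension}.

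The paper only orders these steps differently. It restricts to a one-bit output with key length $\lambda^{1/2}-1$, pads that bit, and then takes the $\lambda^{1/2}$-fold product; this is the same as concatenating first and XORing with $s=(b_1,\ldots,b_{\lambda^{1/2}})$. Your explicit choice $\mu=\lambda^{c}$ (preferable to the ``largest admissible'' $\mu$, which need not be polynomial in $\lambda$ if $p_2$ grows slowly), together with your monotonicity assumption on $d$, makes precise the bookkeeping that the paper leaves implicit.
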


\begin{proof}
By reparameterizing the underlying ensemble and restricting the output to one bit if necessary, we may assume that the starting WPRF has the form
\[
\ffe F = \left\{\ff F_{\lambda} = \{ F_k \colon \bina^{p(\lambda)} \to \bina, \; k \in \bina^{\lambda^{1/2}-1} \} \right\}_{\lambda \in \N}.
\]
Define a family of one-bit output constant functions
\[
\ffe H = \left\{\ff H_{\lambda} = \{ H_b \colon \bina^{p(\lambda)} \to \bina, \; H_b(x)=b, \; b \in \bina\} \right\}_{\lambda \in \N}.
\]
For every fixed input $x$, $H_b(x)$ is uniform over $\bina$ when $b$ is uniform, and therefore $\ffe H$ is $1$-wise independent.
Now set
\[
\ffe G^{(1)} \coloneqq \ffe F \oplus \ffe H.
\]
By the direct-sum lemma (\Cref{lemma:superposition}), $\ffe G^{(1)}$ remains a WPRF, and remains $1$-wise independent.
Finally, take the $\lambda^{1/2}$-fold product of this one-bit output function family and concatenate the outputs.
By range extension (\Cref{corollary:rangeextension}), this gives output length $\lambda^{1/2}$, key length $\lambda$, and only a polynomial loss in the weak-pseudorandomness parameters.
The additional pad contributes depth $1$, and the product is evaluated in parallel, so the resulting depth is $d+1$.
\end{proof}

We now describe the tapering tree that is central to our construction.
There are three global parameters.
The branching factor $\Delta$ is polynomial in the security parameter.
The number of layers $\ell$ is a super-constant, so the number of leaves $\Delta^{\ell}$ is super-polynomial.
The state length at each level is
\[
\lambda^{(i)} \coloneqq \lambda^{2^{-i+1}},
\]
so the state length drops from $\lambda^{(i)}$ to $\lambda^{(i+1)}$ at level $i$.
There are $m$ independent such trees in order to amplify almost-randomness by XOR at the end.

\begin{figure}[!ht]
  \begin{framed}
    \begin{center}
      \textbf{$\TGGM$}
    \end{center}

    \medskip
    
    \textbf{On input} $\tilde{x} \in [\Delta]^{\ell}$:
    \begin{enumerate}
          \item $y^{(1)} \gets \key_{G} \in \bina^{\lambda}$.
          \item For $i = 1, \dots, \ell$  sequentially:
          $$
          y^{(i + 1)} \gets \PRG^{(i)}_{\tilde{x}_i}\left(y^{(i)} \right) \in \bina^{\color{red}{\lambda^{(i+1)}}},
          $$
          where $\PRG^{(i)}_{j} \colon \{0,1\}^{\color{red}{\lambda^{(i)}}} \to \bina^{\color{red}{\lambda^{(i+1)}}}$ for $j \in [\Delta]$ is defined as:
        $$\PRG^{(i)}_{j}(\cdot) \coloneqq G_{\lambda^{(i)}}\left(\cdot, \, r^{(i)}_{j} \right)$$
          \item Output $y^{(\ell + 1)} \in \bina^{\lambda^{(\ell+1)}}$.
      \end{enumerate}
        \medskip
    \textbf{Depth:} 
    $$
    d_{\TGGM}(\lambda) = \sum_{i = 1}^{\ell} d_{\ff G}(\lambda^{(i)}),
    $$ 
    where $d_{\ff G}(\lambda^{(i)})$ is the depth needed to evaluate the WPRF $\ff G_{\lambda^{(i)}}$.
  \end{framed}
  \caption{The core of the tapering tree. At level $i$, the state has length $\lambda^{(i)}$, and the WPRF call shrinks it to length $\lambda^{(i+1)}$.}
  \label{fig:TGGM}
\end{figure}

\begin{construction}[Generic Construction of PRFs]
\label{construct:framework}
{\normalfont
Let $\lambda \in \N$ be a security parameter.
We define the following parameters.
\begin{itemize}
    \item $m = \lambda$, the number of independent tapered trees whose outputs are XORed.
    \item $\ell = \log \log \log \lambda$, the number of layers in each tapered tree.
    \item $\Delta = \lambda$, the number of children per node in each layer.
\end{itemize}

\medskip

\noindent \textbf{Building blocks.}
\begin{enumerate}
     \item A key-uniform WPRF $$\ffe{G} = \left\{\ff G_{\lambda} = \{G_k \colon \Set XG_{\lambda} \to \Set {YG}_{\lambda}, \; k \in \Set {KG}_{\lambda}\right\}_{\lambda \in \N},$$
     where $\Set XG_{\lambda} = \bina^{p(\lambda)}$ for a polynomial $p$, and
     $$
     \Set YG_{\lambda} = \bina^{\lambda^{1/2}}, \;  \Set KG_{\lambda} = \bina^{\lambda}.
     $$ 
  \item A 2-wise independent hashing  $$\ff H_{\lambda} = \{ H_k \colon \bina^{\lambda} \to [\Delta]^{\ell}, \; k \in \Set {KH}_{\lambda}\}.$$
\end{enumerate}

\noindent  \textbf{Function specification.}
We will define a function family $\ff F_{\lambda}$:
$$
\ff F_{\lambda}  = \{ F_{\key_F} \colon \bina^{\lambda} \to \bina^{\lambda^{1/\log\log \lambda}}, \; \key_F \in \Set K_{\lambda}\}.
$$
The final PRF is defined as 
\[
\ffe{F}' = \{\ff F'_{\lambda}\coloneqq \ff F_{\lambda}^{\oplus m} \}_{\lambda \in \N},
\] where the operator $\oplus$ is defined in \Cref{def:sumoffunctionfamily}.  Thus a key for $\ff F'_{\lambda}$ consists of $m$ independent keys for $\ff F_\lambda$, and the output is their XOR.

\medskip

\noindent  \textbf{Tapering parameters.}
For $i\in[\ell+1]$, define
\[
\lambda^{(i)} \coloneqq \lambda^{2^{-i + 1}}.
\]
In particular, $\lambda^{(1)}=\lambda$ and $\lambda^{(\ell + 1)} = \lambda^{1/\log\log \lambda}$.

\medskip

\noindent  \textbf{Function key for one tree.}
$$
\Set K_{\lambda} = \Set KG_{\lambda} \times 
\Set KH_{\lambda}  \times 
\prod_{i = 1}^{\ell} (\Set XG_{\lambda^{(i)}})^{\Delta}
$$
In words, a key for $\ff F_{\lambda}$ is $\key_{F} = (\key_{G}, \key_{XG}, \key_{H})$, where
\begin{itemize}
  \item $\key_{G}  \in \Set KG_{\lambda} = \{0, 1\}^{\lambda}$ is the initial state, equivalently the key for the top-level WPRF $\ff G_{\lambda}$.
  \item $\key_{XG} = \{(r^{(i)}_1,\ldots,r^{(i)}_{\Delta}) \in \Set {XG}_{\lambda^{(i)}}^{\Delta}\}_{i\in[\ell]}$ is the input pool for the WPRFs used at the corresponding levels.
  \item $\key_{H} \in \Set KH_{\lambda}$ is the key for the pairwise independent hash family.
\end{itemize}

\noindent \textbf{Function evaluation.}
On input $x\in\{0,1\}^{\lambda}$, the function $F_{\key_F}$ with key $\key_{F}$ outputs $y \in \bina^{\lambda^{(\ell+1)}}$ as follows:
\[
\tilde{x} \gets H_{\key_{H}}(x) \in [\Delta]^{\ell},
\qquad
y \gets \TGGM(\tilde{x})\in\{0,1\}^{\lambda^{(\ell+1)}},
\]
where the macro $\TGGM(\cdot)$ is defined in \Cref{fig:TGGM}.

}
\end{construction}

\begin{Remark}
{\normalfont
The input pool for the WPRFs $G_{\lambda^{(i)}}$ consists of $\Delta$ random elements. Note that $\Delta(=\poly(\lambda))$ is super-polynomial in the key size $\lambda^{(i)}(=o(\lambda))$ for large enough $i$. This component is conceptually analogous to a PRG with super-polynomial stretch, but does not require any super-polynomial hardness assumption (for security proof).
}
\end{Remark}

\begin{lemma}[Depth Preserving]
\label{lemma:depthofTGGM}
For any function $\func d(\lambda) = \Omega(\log \lambda)$ such that $d(\lambda)/\log \lambda$ is non-decreasing,
if the WPRF $\ffe G$ can be computed in depth $d(\lambda)$, 
then the function family ensemble  $\ffe {F}'$ defined in \Cref{construct:framework} can also be computed in depth $O(d(\lambda))$.
In particular, if $\ff G$ is in $\mathsf{NC}^1$, then $\ff F'$ is in $\mathsf{NC}^1$.
\end{lemma}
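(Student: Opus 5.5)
We bound the depth of the circuit computing $F'_{\key}$ by splitting it into three stages and bounding each separately: (i) hashing $x$ to $\tilde x = H_{\key_H}(x)$, (ii) selecting, for each level $i$, the input $r^{(i)}_{\tilde x_i}$ from the pool, and (iii) the sequential tapered evaluation $\TGGM$ of \Cref{fig:TGGM}. We then account for the XOR of the $m=\lambda$ trees, which by \Cref{def:sumoffunctionfamily} adds only $O(\log m)=O(\log\lambda)$ depth, since the trees are evaluated in parallel.

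For stage (i), we instantiate the $2$-wise independent family $\ff H_\lambda$ with a standard affine hash over $\mathrm{GF}(2)$, namely $x\mapsto Ax\oplus b$ with output encoded in binary in $[\Delta]^\ell$ (rounding $\Delta$ up to a power of two if needed). Each output bit is then a parity of at most $\lambda+1$ bits, so this stage has depth $O(\log\lambda)$. For stage (ii), the key is hardwired, and each $r^{(i)}_j$ has polynomial length. Selecting $r^{(i)}_{\tilde x_i}$ is therefore a $\Delta$-way multiplexer indexed by $\log\Delta=O(\log\lambda)$ bits, of depth $O(\log\lambda)$. All $\ell$ selections can run in parallel, because they depend only on $\tilde x$ and not on the states $y^{(i)}$. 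This is exactly the point that distinguishes our construction from vanilla GGM in the earlier remark: selection is preprocessed once rather than paid at every level. So stages (i) and (ii) together cost $O(\log\lambda)$, which is $O(d(\lambda))$ because $d=\Omega(\log\lambda)$.

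The main step is stage (iii). Its depth is $\sum_{i=1}^{\ell} d(\lambda^{(i)})$, since the WPRF call at level $i$ has input $r^{(i)}_{\tilde x_i}$ already available and is a circuit of depth $d(\lambda^{(i)})$ in $(y^{(i)}, r)$. If the key is treated as hardwired rather than as a circuit input, we use the footnoted stronger convention that the key is an input. Write $d(\lambda)=g(\lambda)\log\lambda$ with $g$ non-decreasing. Then
\[
d(\lambda^{(i)}) = g(\lambda^{2^{-i+1}})\cdot 2^{-i+1}\log\lambda \le g(\lambda)\, 2^{-i+1}\log\lambda = 2^{-i+1} d(\lambda),
\]
using $\lambda^{(i)}\le\lambda$ and monotonicity of $g$. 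Summing the geometric series gives at most $2d(\lambda)$. The Key-uniformization lemma (\Cref{lemma:keyuniformization}) costs only an additive $+1$ per level, hence $O(\ell)=O(\log\log\log\lambda)$ in total, which is negligible. The ceiling rounding of $\lambda^{(i)}$ perturbs each term by at most a constant factor, since $\lambda^{(i)}\ge\lambda^{1/\log\log\lambda}$ tends to infinity and $\log\lceil\lambda^{(i)}\rceil\le 2\log\lambda^{(i)}$ for large $\lambda$.

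I expect the only real subtlety to be this monotonicity/rounding bookkeeping at small $\lambda^{(i)}$; everything else is direct. Adding up the stages gives total depth $O(\log\lambda)+2d(\lambda)+O(\ell)+O(\log\lambda)=O(d(\lambda))$. The $\mathsf{NC}^1$ case follows by taking $d(\lambda)=c\log\lambda$, for which $g$ is constant.
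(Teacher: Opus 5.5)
Your proof is correct and matches the paper's argument step for step: hashing and parallel selection of the $r^{(i)}_{\tilde x_i}$ cost $O(\log\lambda)$ depth, the sequential path costs $\sum_{i} d(\lambda^{(i)}) \le \sum_i 2^{-i+1} d(\lambda)$ by monotonicity of $d(\lambda)/\log\lambda$, and the final XOR adds $O(\log m)$. Your extra remarks go beyond what the paper's proof spells out: the states $y^{(i)}$ act as dynamically computed WPRF keys, so the key-as-input depth convention is needed, and you handle the rounding bookkeeping; the $+1$ from key-uniformization is already absorbed into $d$, since the lemma is stated for the key-uniform family.
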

\begin{proof}

The hash value $H_{\key_H}(x)$ can be computed in depth $O(\log\lambda)$ by evaluating a degree-$1$ polynomial.
Once $\tilde{x}=H_{\key_H}(x)$ is known, all elements $r^{(i)}_{\tilde{x}_i}$ can be selected in parallel, with depth $O(\log \Delta)=O(\log\lambda)$.
The only inherently sequential part is the tapering path: level $i$ invokes $G_{\lambda^{(i)}}(\cdot,r^{(i)}_{\tilde{x}_i})$ after the previous level has produced a state.
Thus one tree has depth
\[
O(\log\lambda)+\sum_{i=1}^{\ell} d(\lambda^{(i)}).
\]
The final $m$-fold XOR in $\ff F'_{\lambda}=\ff F_\lambda^{\oplus m}$ adds depth $O(\log m)=O(\log\lambda)$.
Therefore,
\begin{align*}
    \func d_{\ffe {F}'}(\lambda)
    &\le O(\log \lambda) + \sum_{i = 1}^{\ell} d(\lambda^{(i)}) \\
    &\le O(\log \lambda) + \sum_{i = 1}^{\ell} \frac{\log \lambda^{(i)}}{\log \lambda}\cdot d(\lambda) \\
    &\le O(\log \lambda) + \sum_{i = 1}^{\ell} 2^{-i+1}\cdot d(\lambda) \\
    &\le O(d(\lambda)).
\end{align*}
The second line uses the non-decreasing monotonicity of $d(\lambda)/\log\lambda$, and the last line uses $d(\lambda)=\Omega(\log\lambda)$.
\end{proof}

\begin{theorem}[Main theorem]
\label{theorem:main}
Let $m=\lambda$, $\Delta = \lambda$, $\ell = \log \log \log \lambda$, and $\lambda^{(i)} = \lambda^{2^{-i+1}}$ for $i \in [\ell]$.
For $q \in \N$,  let
\[
\ell'_q \coloneqq \lceil 2\log_{\Delta} q\rceil,
\]
let $\lambda_q$ be the smallest $\lambda$ such that $\ell'_q < \ell$.
There exists a constant $c_{s} > 0$ determined by the size of the key-uniform WPRF $\ffe{G}$ such that, for any $q,t\in\N$ and any $\lambda\ge \lambda_q$, the following holds:
If for every $i\in[\ell']$, the family $G_{\lambda^{(i)}}$ is $(\Delta,t,\epsilon_i)$-weak pseudorandom, then $\ff F'_{\lambda}$ is $(q,t-q^2\cdot \lambda^{c_s},\epsilon')$-strong pseudorandom, where
\[
\epsilon' = 2^{-m}+\sum_{i=1}^{\ell'_q} q\cdot m\cdot \epsilon_i.
\]
\end{theorem}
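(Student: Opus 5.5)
The plan is to combine a computational hybrid argument over the top $\ell'_q$ levels of each of the $m$ trees with an information-theoretic argument below that level. For the lower part I will use the almost-randomness machinery of \Cref{sec:almostrandom}. Write $\ell'=\ell'_q$; since $\lambda\ge\lambda_q$ we have $\ell'<\ell$, so at least one level lies below the cut.

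\textbf{Step 1 (hybrid family).} Define a hybrid tree $\tilde{\ff F}_\lambda$. The state $y^{(\ell'+1)}$ is no longer computed by the first $\ell'$ WPRF calls. Instead it is $R(\tilde x_1,\dots,\tilde x_{\ell'})$ for a truly random function $R\colon[\Delta]^{\ell'}\to\bina^{\lambda^{(\ell'+1)}}$. The remaining levels $\ell'+1,\dots,\ell$ are evaluated exactly as in $\TGGM$.

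I will show that $\ff F'_\lambda=\ff F_\lambda^{\oplus m}$ and $\tilde{\ff F}_\lambda^{\oplus m}$ are $(q,t-q^2\lambda^{c_s},\sum_{i\le\ell'}qm\epsilon_i)$-indistinguishable against adaptive oracle adversaries. This goes by a sequence of hybrids, one tree at a time (giving the factor $m$) and one level $i\le\ell'$ at a time. At level $i$, the hybrid in which levels $<i$ are already random assigns each visited level-$i$ node an independent uniform key in $\bina^{\lambda^{(i)}}$. Its $\Delta$ children are then $G_{\lambda^{(i)}}(k,r^{(i)}_j)$ for $j\in[\Delta]$, where the $r^{(i)}_j$ are uniform and shared.

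I then do an inner hybrid over the at most $q$ level-$i$ nodes the adversary actually visits, replacing each node's child vector by uniform. The reduction receives one $(\Delta,t,\epsilon_i)$-weak challenge $(\vc r,\vc u)$ and embeds it at the $h$-th newly visited node, chosen lazily as queries arrive. Nodes before the $h$-th get lazily sampled uniform children and later nodes get fresh keys. Every other component of all $m$ trees (hash key, lower-level pools, other trees) is sampled and evaluated by the reduction itself.

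The obstacle is that the $r^{(i)}_j$ are shared across all nodes at level $i$, while each challenge comes with its own $\vc r$. I would handle this by noting that in the hybrid with levels $<i$ already random, the level-$i$ pool is used only at level $i$. So I would either let each inner step use the challenge's $\vc r$ as the common pool, generating the other nodes' children with fresh keys on that same $\vc r$ (this is fine because the pool is uniform and the children are either real outputs under fresh keys or uniform), or argue via \Cref{lemma:productpreserve} on the $q$-fold product with shared inputs. Each inner step costs $\epsilon_i$, giving $q\epsilon_i$ per level and tree. The simulation costs about $q$ queries times $\poly(\lambda)$ evaluation plus bookkeeping of up to $q$ node tables, which I bound by $q^2\lambda^{c_s}$ for a constant $c_s$ depending on the size of $\ffe G$. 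I expect this step, getting the adaptive lazy embedding and the running-time accounting to match exactly, to be the main technical burden.

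\textbf{Step 2 (almost-randomness).} I show $\tilde{\ff F}_\lambda$ is $(q,\tfrac12)$-almost-random. Take $aux=(\key_H,\{r^{(i)}_j\}_{i>\ell'})$ and $r=(R,\key_G\text{-independent data})$. Let $\BAD$ consist of pairs $(\vc x,aux)$ such that two inputs of $\vc x$ have hashed $\ell'$-prefixes that collide. $\BAD$ is left-monotone, since adding inputs preserves a collision. Projecting $H$ onto its first $\ell'$ coordinates is still 2-wise independent, so \Cref{lemma:collision} gives
\[
\Pr[\BAD]\le \frac{q^2}{2\Delta^{\ell'}}\le\frac12 .
\]
Off $\BAD$, the prefixes are distinct, so the values $u_j=R(\cdot)$ are independent and uniform. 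Each output equals $T'_j(u_j)$, where $T'_j$ is a composition of maps $k\mapsto G(k,r)$ with $r$ fixed by $aux$. By key-uniformity (\Cref{def:hashingweakprf}) each such map sends uniform to uniform, so $T'_j(u_j)$ is uniform, and independence across $j$ follows from independence of the $u_j$. Hence the outputs are uniform on $\Set Y^q$.

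\textbf{Step 3 (conclusion).} \Cref{lemma:XORAmplifiesAlmost-Randomness} makes $\tilde{\ff F}_\lambda^{\oplus m}$ a $(q,2^{-m})$-almost-random family. \Cref{lemma:almostrandomimplypseudorandom} then makes it $(q,\infty,2^{-m})$-strong pseudorandom. Combining this with Step 1 via \Cref{lemma:triangle} yields the claimed bound $\epsilon'=2^{-m}+\sum_{i=1}^{\ell'}qm\epsilon_i$.
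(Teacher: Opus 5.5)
Your proposal is correct and follows essentially the same route as the paper. The paper likewise collapses the top $\ell'_q$ levels of each tree into a random function on prefixes through tree-by-tree, level-by-level hybrids. It resolves the shared-pool issue via \Cref{lemma:productpreserve} on the $q$-fold product with shared hidden inputs; your inlined inner hybrid, which reuses the challenge's $\vc r$ as the level-$i$ pool, is the same argument unfolded. The paper then shows the collapsed family is $(q,\tfrac12)$-almost-random using hashed-prefix collisions and key-uniformity, and concludes with XOR amplification, \Cref{lemma:almostrandomimplypseudorandom}, and the triangle inequality. The only difference is that you place the lower-level input pools in $aux$ rather than in $\Set R$, which is immaterial because $\BAD$ depends only on $\key_H$.
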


\begin{corollary}
\label{corollary:main}
If $\ffe{G}$ is a key-uniform WPRF, then $\ffe{F}'$ is a PRF.    
\end{corollary}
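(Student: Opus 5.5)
We need to prove Corollary~\ref{corollary:main} from Theorem~\ref{theorem:main}, so the plan is a parameter instantiation: fix arbitrary polynomials $q(\lambda)$ and $t(\lambda)$, and exhibit a negligible $\epsilon'(\lambda)$ such that $\ff F'_\lambda$ is $(q(\lambda),t(\lambda),\epsilon'(\lambda))$-strong pseudorandom for all large $\lambda$. Efficiency of $\ffe F'$ is immediate from the construction: key and circuit size are polynomial, since $\ell\le\log\lambda$ and $m=\Delta=\lambda$, and the depth bound comes from Lemma~\ref{lemma:depthofTGGM}. Finitely many small $\lambda$ do not affect negligibility.

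\textbf{Step 1: the shattering depth is a constant.} Write $q(\lambda)\le\lambda^{c}$. Then $\ell'_q=\lceil 2\log_\lambda q\rceil\le 2c+1$ is a constant independent of $\lambda$. Since $\ell=\log\log\log\lambda\to\infty$, we have $\ell'_q<\ell$ for all sufficiently large $\lambda$, so the hypothesis $\lambda\ge\lambda_q$ of Theorem~\ref{theorem:main} holds eventually.

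\textbf{Step 2: choose the WPRF security parameters.} We apply Theorem~\ref{theorem:main} with running time $t^*(\lambda)\coloneqq t(\lambda)+q(\lambda)^2\lambda^{c_s}$, which is a polynomial. For each $i\le\ell'_q$, the level-$i$ security parameter is $\mu_i=\lambda^{2^{-i+1}}\ge\lambda^{2^{-2c}}$, so $\lambda\le\mu_i^{2^{2c}}$ is polynomially related to $\mu_i$. Hence $\Delta=\lambda$ samples and time $t^*(\lambda)$ are bounded by fixed polynomials $Q(\mu_i)$ and $T(\mu_i)$, uniformly over the constantly many $i$. Because $\ffe G$ is a polynomially secure WPRF, there is a negligible $\nu$ such that $\ff G_{\mu}$ is $(Q(\mu),T(\mu),\nu(\mu))$-weak pseudorandom. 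Weak pseudorandomness is monotone in the number of samples (an adversary can ignore extra samples), so $\epsilon_i\le\nu(\lambda^{2^{-i+1}})$.

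\textbf{Step 3: bound the error.} Theorem~\ref{theorem:main} gives advantage at most
\[
2^{-\lambda}+\sum_{i=1}^{\ell'_q} q\,\lambda\,\nu\bigl(\lambda^{2^{-i+1}}\bigr).
\]
This is a sum of constantly many terms. Each term is a polynomial in $\lambda$ times $\nu(\lambda^{\delta})$ for a constant $\delta\ge 2^{-2c}>0$. Since $\nu(\lambda^\delta)$ is negligible in $\lambda$ for any constant $\delta>0$, the whole sum is negligible.

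\textbf{Main obstacle.} The only delicate point is the uniformity in Step 2. The number of levels $\ell'_q$ and the exponents $2^{-i+1}$ must be constants, which holds only because $q$ is polynomial; this is why the geometric tapering never drives the relevant security parameter below $\lambda^{\Omega(1)}$. I would therefore state Step 1 carefully before invoking the polynomial security of $\ffe G$.
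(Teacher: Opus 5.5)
Your proof is correct, but it is organized differently from the paper's proof in \Cref{Appendix:polysecure}.

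The paper argues by contradiction. It fixes an adversary family of size $\lambda^{c_1}$ that breaks $\ff F'_\lambda$ with advantage at least $\lambda^{-c_2}$ on an infinite set of $\lambda$. For each such $\lambda$, it uses \Cref{theorem:main} to get a WPRF adversary that breaks $\ff G$ at some index in the tapering set $\mathcal{T}_\lambda=\{\lambda^{2^{-j+1}} : j\in[2c_1]\}$. It then needs a finite-intersection lemma about these sets and a pigeonhole argument to conclude that $\ffe G$ is broken on infinitely many security parameters.

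You instead argue directly from the ``for all polynomials there is a negligible function'' form of \Cref{def:weakprf}. Every relevant level index $\mu$ satisfies $\lambda\le\mu^{2^{2c}}$, so the sample count $\Delta$ and the size $t+q^2\lambda^{c_s}$ are bounded by fixed polynomials in $\mu$. A single negligible $\nu$ therefore covers all levels and all $\lambda$ at once. Negligibility also survives the substitution $\mu=\lceil\lambda^{\delta}\rceil$ for a constant $\delta>0$.

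Your route gives a shorter proof with no combinatorics. The complication that one index $\mu$ can be a tapering level for several different $\lambda$ disappears, because your polynomial bounds hold uniformly over all such $\lambda$. In the paper, this is what the intersection lemma is for, together with the need to fix a single reduction adversary at each index; that step requires some care with quantifiers. The paper's formulation in turn matches the more familiar ``non-negligible advantage on infinitely many $\lambda$'' language.

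Two small points in your argument should be stated explicitly:
\begin{itemize}
\item You may bound $q$ and $t$ by monotone polynomials, so that $T(\mu)\coloneqq t^*(\mu^{2^{2c}})$ dominates $t^*(\lambda)$.
\item Weak pseudorandomness is monotone in both the number of samples and the adversary size.
\end{itemize}
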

\begin{proof}
The proof is a quantifier unpacking of \Cref{theorem:main}: for every polynomial query number and circuit size, the shattering level $\ell'$ is a constant determined by the polynomial degree, so all WPRF calls used in the hybrid proof have polynomially related security parameters, giving us negligible $\epsilon_i$ for all $i \in [\ell']$.
See \Cref{Appendix:polysecure} for the full argument.

\end{proof}

\begin{proof}[Proof of \Cref{theorem:main}]
High-level idea: The proof follows a two-stage hybrid argument.
The first stage replaces the first $\ell'_q$ levels of each of the $m$ tapered trees by truly random functions, one level and one tree at a time; this is the only place where WPRF (computational) security is used.
The purpose of this stage is to shatter the $q$ queries, so that with high probability they all reach distinct nodes at level $\ell'_q$ (recall that initially they all start from the same node, namely the root of the tree).
In the second stage, we show that, after replacing the first $\ell'_q$ layers, the resulting PRF is statistically close to a random function: conditioned on the bad event not occurring (i.e., all $q$ queries are shattered), the key-uniformity property is used to argue the uniformity of the $q$ leaves (corresponding to the shattered $q$ queries), and hence, the PRF output. Finally, $m$ independent XOR copies amplify the probability of the bad event to negligible  $2^{-m}$. A formal proof follows.

Fix $q, t \in \N$ and fix any $\lambda$ such that $\ell'_q < \ell$.
We define the following hybrids.

\begin{itemize}
      \item Hybrid $\game H_0$: the adversary is given  oracle access to a random element of the function family $\ff F'_{\lambda}$, and can make at most $q$ oracle queries.
      \item Hybrid $\game H_{1, m', j}$, for $m' \in [m], j \in \{0, \dots, \ell'_q\}$: the adversary is given oracle access to a random element of a hybrid function family $\ff F'_{(m', j)}$, which is defined below in \Cref{def:hybridprf}, and can make at most $q$ oracle queries.
      \item Hybrid $\game H_2$: the adversary is given oracle access to a random function $R(\cdot)$, and can  make at most $q$ oracle queries.
\end{itemize}

\begin{definition}[Hybrid function family $\ff F'_{(m', j)}$]
\label{def:hybridprf}
For $m' \in [m], j \in \{0, \dots, \ell'_q\}$, define the function family $$
\ff F'_{(m', j)} \coloneqq \ff F_{(\ell'_q)}^{\oplus (m' - 1)} \oplus  \ff F_{(j)} \oplus 
\ff F_{(0)}^{\oplus (m - m')},
$$
where $\ff F_{(j)}$ is defined the same as $\ff F_{\lambda}$ except that instead of invoking $\TGGM$, it invokes $\TGGM_{(j)}$ (\Cref{fig:hybridTGGM}).
\end{definition}

By definition, $\game H_0$ is identical to $\game H_{1,1,0}$, and for every $m'<m$, $\game H_{1,m',\ell'}$ is identical to $\game H_{1,m'+1,0}$.

\begin{figure}[!ht]
  \begin{framed}
    \begin{center}
      \textbf{Hybrid $\TGGM_{(j)}$}
    \end{center}

    \medskip
    
    \textbf{On input} $\tilde{x} \in [\Delta]^{\ell}$:
    \begin{enumerate}
          \item If $j=0$, set $y^{(1)}\gets \key_G$. If $j\ge 1$, set
          $
          y^{(j+1)} \gets R(\tilde{x}_{1\dots j}) \in \bina^{\lambda^{(j+1)}}$,
          where $R$ is a random function treated as part of the key.
          \item For $i=j+1,\dots,\ell$ sequentially, compute
          \[
          y^{(i+1)} \gets \PRG^{(i)}_{\tilde{x}_i}\left(y^{(i)}\right) \in \bina^{\lambda^{(i+1)}},
          \]
          where $\PRG^{(i)}_{h}\colon\bina^{\lambda^{(i)}}\to\bina^{\lambda^{(i+1)}}$ for $h \in [\Delta]$ is defined by
          \[
          \PRG^{(i)}_{h}(\cdot)\coloneqq G_{\lambda^{(i)}}(\cdot,r^{(i)}_h).
          \]
          \item Output $y^{(\ell+1)}\in\bina^{\lambda^{(\ell+1)}}$.
      \end{enumerate}
  \end{framed}
  \caption{Hybrid $\TGGM_{(j)}$, where the first $j$ levels have been collapsed into a random function on length-$j$ prefixes.}
  \label{fig:hybridTGGM}
\end{figure}

\begin{lemma}
\label{lemma:H1indis}
For every $m'\in[m]$ and every $j\in\{0,\dots,\ell'_q-1\}$, the games $\game H_{1,m',j}$ and $\game H_{1,m',j+1}$ are $(q,t-q^2\cdot\lambda^{c_s},q\epsilon_{j+1})$-indistinguishable.
\end{lemma}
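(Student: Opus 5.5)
We prove the lemma by a hybrid over the (at most $q$) distinct level-$j$ prefixes that the queries actually reach, using the weak pseudorandomness of $\ff G_{\lambda^{(j+1)}}$ once per hybrid step. The two games differ only in the $m'$-th tree: in $\game H_{1,m',j}$ it uses $\TGGM_{(j)}$, in $\game H_{1,m',j+1}$ it uses $\TGGM_{(j+1)}$. All other trees, namely the $m'-1$ copies of $\ff F_{(\ell'_q)}$ and the $m-m'$ copies of $\ff F_{(0)}$, use keys independent of the $m'$-th tree. The reduction can therefore sample those keys itself and simulate the other trees in size about $q\cdot m\cdot \poly(\lambda)$, which is absorbed into $q^2\lambda^{c_s}$.

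Inside the $m'$-th tree, in $\TGGM_{(j)}$ the value at level $j+2$ for a prefix $\tilde{x}_{1\dots j+1}$ is $G_{\lambda^{(j+1)}}\bigl(R(\tilde x_{1\dots j}), r^{(j+1)}_{\tilde x_{j+1}}\bigr)$. Here $R(\tilde x_{1\dots j})$ plays the role of an independent uniform WPRF key for each distinct length-$j$ prefix. When $j=0$ there is a single key, $\key_G$, which is uniform. The pool $(r^{(j+1)}_1,\dots,r^{(j+1)}_\Delta)$ is uniform and shared across all prefixes, and $R(\cdot)$ is never revealed. In $\TGGM_{(j+1)}$ this value is replaced by a fresh random function of the length-$(j+1)$ prefix.

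The core step is a sub-hybrid over the distinct length-$j$ prefixes $p_1,p_2,\dots$ in the order they first appear among the adaptive queries; there are at most $q$ of them. In sub-hybrid $k$, the first $k$ prefixes use a truly random function at level $j+2$ and the rest use $G(R(p),\cdot)$. Going from sub-hybrid $k-1$ to sub-hybrid $k$, we embed a weak-PRF challenge $(\vc r,\vc u)\in (\Set XG)^{\Delta}\times(\Set YG)^{\Delta}$ for $\ff G_{\lambda^{(j+1)}}$ with $\Delta$ samples. The reduction sets the pool $r^{(j+1)}_h\coloneqq r_h$ and answers prefix $p_k$'s children by $y^{(j+2)}\coloneqq u_h$ for child index $h$. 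For later prefixes $p_{k'}$ with $k'>k$, it samples its own key $R(p_{k'})$ and computes $G(R(p_{k'}),r_h)$, which it can do because it knows the $r_h$. Earlier prefixes get lazily sampled random values. Below level $j+2$ the reduction holds the remaining pool elements and evaluates the rest of the tree in size $q\cdot\poly(\lambda)$, so the whole reduction runs in size $t+q^2\lambda^{c_s}$ for a suitable $c_s$ depending on the size of $\ff G$.

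If $\vc u=G_k(\vc r)$, the view is sub-hybrid $k-1$, because $R(p_k)$ is a uniform key independent of everything else. If $\vc u$ is uniform, the view is sub-hybrid $k$. Hence each step costs at most $\epsilon_{j+1}$, and the $q$ steps give the total $q\epsilon_{j+1}$. The endpoints match $\game H_{1,m',j}$ and $\game H_{1,m',j+1}$: when all prefixes use random values indexed by $(p,h)$, this is exactly a random function on length-$(j+1)$ prefixes.

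The main obstacle is adaptivity together with the shared pool. Because the pool $r^{(j+1)}$ is common to all prefixes, the reduction must fix it from the challenge before knowing which prefix is the $k$-th. It resolves this by assigning the challenge to the $k$-th \emph{newly appearing} prefix online, which is valid since keys of unseen prefixes are independent and identically distributed. A secondary point is that the weak-PRF game supplies the inputs $\vc r$ to the adversary. This only helps the reduction, which uses them as the secret pool, and it is consistent with the hypothesis of $(\Delta,t,\epsilon_i)$ weak pseudorandomness.
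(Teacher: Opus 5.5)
Your proof is correct and is essentially the paper's argument. The paper likewise assigns each newly appearing length-$j$ prefix a fresh index online and plants the challenge as the level-$(j+2)$ states, with the pool $r^{(j+1)}$ taken from the challenge inputs. The difference is that it embeds a single $q$-key, shared-input challenge $\bigl(G(k_i,a_h)\bigr)_{i\in[q],h\in[\Delta]}$ and gets the factor $q$ from the product lemma (\Cref{lemma:productpreserve}), whereas you unroll that key-by-key hybrid yourself. The paper's packaging means its reduction never uses the pool $\vc a$, so the same step can be read as a synthesizer challenge in the CDH section; your inlined steps instead need the revealed $r_h$ to compute $G(R(p_{k'}),r_h)$ for later prefixes.
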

\begin{proof}
Suppose that there exists a $t$-sized circuit $D$ that distinguishes $\game H_{1, m', j}$ from $\game H_{1, m', j +1}$ with advantage $q\epsilon$. We use $D$ to build a circuit $D'$ of size $t+q^2\cdot \poly(\lambda)$ that distinguishes the following games with advantage $q \epsilon_{j+1}$, By \Cref{lemma:productpreserve}, this contradicts with the assumption that $G_{\lambda^{(j+1)}}$ is a $(\Delta, t, \epsilon_{j + 1})$-WPRF.
\begin{itemize}
    \item Game $\game G_0$: Sample $\Delta$ random inputs $\vc a = (a_1,\ldots,a_{\Delta}) \sample \Set XG^{\Delta}_{\lambda^{(j+1)}}$, sample $q$ random keys $\vc k = (k_1, \dots, k_q) \sample \Set KG^q$, and give the adversary $G(\vc k, \vc a)$\footnote{In fact, we prove a stronger statement in which \(\vc a\) is not revealed to the adversary. Thus, this challenge may also be viewed as a synthesizer challenge (see \Cref{corollary:Depth-PreservingBootstrappingsynthesizers}).}, where $G(\vc k, \vc{a})\coloneqq (G(k_1, \vc a), \dots, G(k_q, \vc a)) \in \Set {YG}^{q\Delta}_{\lambda^{(j+2)}}$.
    \item Game $\game G_1$: Sample $\Delta$ random inputs $\vc{a} \sample \Set XG^{\Delta}_{\lambda^{(j+1)}}$, sample $q\Delta$ random outputs $\vc u \sample \Set {YG}^{q\Delta}_{\lambda^{(j+2)}}$, and give the adversary $\vc u$.
\end{itemize}

The distinguisher $D'$ obtains $\vc b = (b_{i, j})_{i\in [q], j \in [\Delta]}$ as input, then samples a random key $\vc k$ for $\ff F'_{(\ell', j + 1)}$ and modifies its $\ell'$-th fold: It answers $D$'s query $x$ by evaluating $\ff F'_{(\ell', j + 1)}$ with key $\vc k$ and returns the output, except that when evaluating the $\ell'$-th fold $\ff F_{(j + 1)}$, instead of computing $y^{(j+1)} \gets R(x_{1\dots j+1})$, 
it sets $y^{(j+1)} \gets b_{i, x_{j+1}}$, 
where the index $i = i(x_{1\dots j}) \in [q]$ is chosen to ensure consistency among $D$’s queries. For example it can be set to the smallest previously unused index when $x_{1\dots j}$ occurs for the first time. This tracking of queries can be implemented with $q^2\cdot \poly(\lambda)$ additional gates.
\begin{itemize}
    \item If $D'$ receives $\vc b$ from Game $\game G_0$, then $y^{(j+1)} = G(k_i, a_{x_{j+1}})$, where $i$ is chosen to be consistent with $x_{1,\dots j}$. Therefore, $D'$ perfectly simulates $H_{1, m', j}$.
    \item If $D'$ receives $\vc b$ from Game $\game G_1$, then $y^{(j+1)} = u_{i, a_{x_{j+1}}}$, where $i$ is chosen to be consistent with $x_{1,\dots j}$. Therefore $D'$ perfectly simulates $H_{1, m', j + 1}$.
\end{itemize}
Therefore, $D'$ has the same distinguishing advantage as $D$.
\end{proof}

\begin{lemma}
\label{lemma:H1H2indis}
Games $\game H_{1, m, \ell'_q}$ and $\game H_{2}$ are $(q, \infty, 2^{-m})$-{indistinguishable}.
\end{lemma}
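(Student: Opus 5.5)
The plan is to show that $\ff F'_{(m,\ell'_q)} = \ff F_{(\ell'_q)}^{\oplus m}$ is $(q,2^{-m})$-almost-random in the sense of \Cref{def:almostrandom}. Then \Cref{lemma:almostrandomimplypseudorandom} gives $(q,\infty,2^{-m})$-strong pseudorandomness, which is exactly the claimed indistinguishability from $\game H_2$. By \Cref{lemma:XORAmplifiesAlmost-Randomness}, it suffices to show that a single copy $\ff F_{(\ell'_q)}$ is $(q,1/2)$-almost-random.

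For a single copy, split the key as follows. Take $aux \coloneqq \key_H$, and let $r$ consist of the random function $R$ on length-$\ell'_q$ prefixes, the pool $\key_{XG}$ for levels $i>\ell'_q$, and the lower-level WPRF keys, which are produced by $R$ and the chain. Define
\[
\BAD \coloneqq \{(\vc x,\key_H) : \exists\, x\neq x'\in \vc x \text{ with } H_{\key_H}(x)_{1\dots \ell'_q}=H_{\key_H}(x')_{1\dots \ell'_q}\}.
\]
This set is left-monotone, since adding inputs can only create new prefix collisions.

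For condition 1, note that truncating a 2-wise independent hash to its first $\ell'_q$ coordinates is still 2-wise independent, with range $[\Delta]^{\ell'_q}$. \Cref{lemma:collision} then bounds the collision probability by $\frac{q^2}{2}\Delta^{-\ell'_q}\le \frac{q^2}{2}\cdot q^{-2}=1/2$, using $\ell'_q\ge 2\log_\Delta q$. If a strict inequality is needed, I would use the slack from the ceiling, or note that the bound is $\le 1/2$ and the amplified bound $\le 2^{-m}$ suffices.

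For condition 2, fix $\key_H$ outside $\BAD$. The $q$ queries then have distinct prefixes $\tilde x_{(1),1\dots\ell'_q},\dots$, so the values $y^{(\ell'_q+1)}_{(k)}=R(\cdot)$ are independent and uniform. Each output has the form $T'_k(y^{(\ell'_q+1)}_{(k)})$, where $T'_k$ is the composition of the maps $G_{\lambda^{(i)}}(\cdot,r^{(i)}_{\tilde x_{(k),i}})$ for $i>\ell'_q$.

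The main obstacle is that the maps $T'_k$ for different $k$ share the same pool elements $r^{(i)}_h$, so they are not independent. The fix is to condition on $\key_{XG}$ entirely. Once $\key_{XG}$ is fixed, each $T'_k$ is a deterministic map, and by key-uniformity (\Cref{def:hashingweakprf}) each layer sends a uniform key to a uniform output, so the composition $T'_k$ does too. Since the inputs $y^{(\ell'_q+1)}_{(k)}$ are independent across $k$ and independent of $\key_{XG}$, the outputs $T'_k(y^{(\ell'_q+1)}_{(k)})$ are independent and uniform for every fixed $\key_{XG}$, hence also unconditionally. This establishes condition 2, and the amplification and transfer lemmas above complete the argument.
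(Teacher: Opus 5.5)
Your proof is correct and follows the paper's argument essentially step for step. You reduce, via \Cref{lemma:XORAmplifiesAlmost-Randomness} and \Cref{lemma:almostrandomimplypseudorandom}, to $(q,1/2)$-almost-randomness of a single copy, take $aux=\key_H$ with $\BAD$ the prefix-collision set, use the collision lemma for condition 1, and use key-uniformity to push the independent uniform level-$\ell'_q$ states down to the leaves for condition 2. Your explicit conditioning on $\key_{XG}$ to handle the shared pool elements is a clarification the paper leaves implicit. For strictness in condition 1, note that the ceiling need not give slack, but the union bound runs over only $\binom{q}{2}<q^2/2$ pairs, and that is enough.
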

\begin{proof}
Since $\ff F'_{(m,\ell'_q)}=\ff F_{(\ell'_q)}^{\oplus m}$, it suffices by \Cref{lemma:XORAmplifiesAlmost-Randomness} and \Cref{lemma:almostrandomimplypseudorandom} to prove that $\ff F_{(\ell'_q)}$ is $(q,1/2)$-almost-random (\Cref{def:almostrandom}). Intuitively, the only bad event is that two distinct oracle queries hash to the same length-$\ell'_q$ prefix; outside this event, the random function inside $\TGGM_{(\ell'_q)}$ gives independent uniform states, and key-uniformity keeps them uniform down to the leaves.

Let $\AUX=\Set {KH}_{\lambda}$ be the hash key space, and let $\Set R$ denote all remaining key material of $\ff F_{(\ell'_q)}$, including the random function used by $\TGGM_{(\ell'_q)}$ and the input pools. Define 
$$
\BAD \coloneqq \{(\vc x, \key_{H}) \in \Set X^i \times \AUX: \, H_{\key_H}(x_{(1)})_{1\dots \ell'_q}, \dots, H_{\key_H}(x_{(i)})_{1\dots \ell'_q} \text{ not all distinct}\}_{i \in \N}.
$$
$\BAD$ is left-monotone. 

Fix any tuple of distinct inputs $\vc x=(x_{(1)},\ldots,x_{(q)})\in\Set X^q$.
We verify the two conditions in \Cref{def:almostrandom}. 
\begin{enumerate}
    \item By \Cref{lemma:RangeRestrictionpreservepseudorandom} and \Cref{def:kwise}, restricting output range of $\ff H$ from $[\Delta]^{\ell}$ to $[\Delta]^{\ell'}$ preserves $2$-wise independent hashing. By \Cref{lemma:collision}, it holds that
    \begin{equation*} 
    \begin{split}
        \Pr_{aux \in \AUX}[(\vc x, aux) \in \BAD] &= \Pr_{\key_H \gets \Set {KH}}[H_{\key_H}(x_{(1)})_{1\dots \ell'}, \dots, H_{\key_H}(x_{(q)})_{1\dots \ell'} \text{ not all distinct}]\\
        &< \frac{q^2}{2} \cdot \frac{1}{\Delta^{2\log_{\Delta}q}} \\
        &< 1/2.
    \end{split}
    \end{equation*}
    \item 
    Second, fix any $\key_H$ such that $(\vc x,\key_H)\notin\BAD$ and write $\tilde{x}_{(i)}\coloneqq H_{\key_H}(x_{(i)})$ for $i \in [q]$.
Then the prefixes $(\tilde{x}_{(1)})_{1\dots \ell'_q},\ldots,(\tilde{x}_{(q)})_{1\dots \ell'_q}$ are distinct. Therefore, the corresponding states at level $\ell'$
\[
R((\tilde{x}_{(1)})_{1\dots\ell'_q}),\ldots,R((\tilde{x}_{(q)})_{1\dots\ell'_q})
\]
are independent and uniform.

For the $i$-th query, define transformation (fixing the WPRF input) acting on the state at level $j=\ell'_q+1, \dots, \ell$ as
\[
T_j^{(i)}(\cdot) \coloneqq G_{\lambda^{(j)}}(\cdot,r^{(j)}_{(\tilde{x}_{(i)})_j})
\qquad\text{for } j\in\{\ell'+1,\ldots,\ell\}.
\]
By key-uniformity of $G$., each $T_j^{(i)}$ is unbiased which maps a uniform input to a uniform output.
Hence
\[
\left(
T_i^{(\ell)}\circ\cdots\circ T_i^{(\ell'_q+1)}
\left(R((\tilde{x}_{(i)})_{1\dots\ell'_q})\right)
\right)_{i\in[q]}
\sim U\!\bigl((\bina^{\lambda^{(\ell+1)}})^q\bigr).
\]
This is exactly the joint output distribution of $\ff F_{(\ell'_q)}$ on the fixed tuple $\vc x$.
\end{enumerate}
Thus $\ff F_{(\ell'_q)}$ is $(q,1/2)$-almost-random.
The $m$-fold XOR is therefore $(q,2^{-m})$-almost-random, and \Cref{lemma:almostrandomimplypseudorandom} gives the lemma.
\end{proof}

We now combine the hybrids.
There are $m\ell'_q$ adjacent transitions of the form handled by \Cref{lemma:H1indis}, and the final transition from $\game H_{1,m,\ell'_q}$ to $\game H_2$ is handled by \Cref{lemma:H1H2indis}.
Applying the triangle inequality (\Cref{lemma:triangle}) gives total distinguishing advantage at most
\[
2^{-m}+\sum_{i=1}^{\ell'_q} q\cdot m\cdot\epsilon_i,
\]
against adversaries of size at most $t-q^2\lambda^{c_s}$ making at most $q$ oracle queries.
This proves \Cref{theorem:main}.
\end{proof}

\begin{corollary}[Main result]
\label{corollary:Depth-PreservingBootstrapping}
For any function $\func d(\lambda) = \Omega(\log \lambda)$ such that $d(\lambda)/\log \lambda$ is non-decreasing, if there exists WPRFs computable in depth $\func d(\lambda)$, then there exists PRFs computable in depth $O(\func d(\lambda))$.
In particular, if there exists WPRFs in $\mathsf{NC}^1$, then there exists PRFs in $\mathsf{NC}^1$.
\end{corollary}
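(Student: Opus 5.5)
The plan is to chain three results already established. We start from a depth-$d$ WPRF with $d(\lambda)=\Omega(\log\lambda)$ and $d(\lambda)/\log\lambda$ non-decreasing. First, apply \Cref{lemma:keyuniformization} to obtain a key-uniform WPRF $\ffe G$ with the exact shape required by \Cref{construct:framework}: input length $p(\lambda)$, output length $\lambda^{1/2}$, and key length $\lambda$. Its depth is $d(\lambda)+1$. We set $d'(\lambda)\coloneqq d(\lambda)+1$ and check that it still satisfies the hypotheses of \Cref{lemma:depthofTGGM}. Clearly $d'=\Omega(\log\lambda)$. The monotonicity of $d'/\log\lambda$ is less clean because of the $1/\log\lambda$ term. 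To avoid this, we simply use $2d(\lambda)\ge d(\lambda)+1$ as the depth bound, for $\lambda$ large enough. This preserves the monotone ratio and changes the depth only by a constant factor.

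Second, instantiate \Cref{construct:framework} with this $\ffe G$, using the hash family given by a degree-$1$ polynomial over a suitable field as the pairwise-independent family. By \Cref{corollary:main}, the resulting ensemble $\ffe F'$ is a (polynomially secure) PRF. Third, by \Cref{lemma:depthofTGGM} applied with depth function $2d(\lambda)$, the ensemble $\ffe F'$ is computable in depth $O(d(\lambda))$. This step accounts for the hashing, the selection of the $r^{(i)}_{\tilde x_i}$, the tapered path, and the $m$-fold XOR. Since $\ffe F'$ has input length $\lambda$ and output length $\lambda^{1/\log\log\lambda}$, it already meets \Cref{def:ffe} with polynomially bounded lengths. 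If a different output length is wanted, we can apply range extension or restriction (\Cref{corollary:rangeextension}, \Cref{lemma:RangeRestrictionpreservepseudorandom}), neither of which changes depth.

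For the $\mathsf{NC}^1$ special case, take $d(\lambda)=c\log\lambda$. Then $d/\log\lambda$ is constant and hence non-decreasing, so the resulting PRF has depth $O(\log\lambda)$ and polynomial size. Size is polynomial because $m,\Delta,\ell$ and all component families are polynomial in $\lambda$.

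I expect the main obstacle to be bookkeeping rather than conceptual. The issues are the side conditions on $d$ after the $+1$ from key-uniformization, and ensuring that the generic $\ffe G$ supplied by \Cref{lemma:keyuniformization} is defined at every security parameter $\lambda^{(i)}$ used by the tapered levels, with the rounding convention for non-integer powers. Both are handled by the reparameterization already built into that lemma and by the constant-factor slack in $O(\cdot)$.
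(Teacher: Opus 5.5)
Your proposal is correct and follows the paper's proof: key-uniformize via \Cref{lemma:keyuniformization}, instantiate \Cref{construct:framework}, bound the depth with \Cref{lemma:depthofTGGM}, and obtain security from \Cref{corollary:main}. Your use of the bound $2d(\lambda)\ge d(\lambda)+1$ to keep the ratio to $\log\lambda$ non-decreasing fills in a detail the paper skips, and it is a valid way to handle it.
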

\begin{proof}
By \Cref{lemma:keyuniformization}, any depth-$\func d(\lambda)$ WPRF can be converted into a key-uniform WPRF with depth $\func d(\lambda)+1$.
Applying \Cref{construct:framework} to this key-uniform WPRF and using \Cref{lemma:depthofTGGM} gives evaluation depth $O(\func d(\lambda))$.
Security follows from \Cref{corollary:main}.
\end{proof}

\section[\texorpdfstring{$\mathsf{NC}^1$ PRFs from $\LWE$}{NC1 PRFs from LWE}]{$\mathsf{NC}^1$ PRFs from $\LWE$}\label{sec:LWEPRF}

The goal of this section is to instantiate the depth-preserving weak-to-strong transformation of \Cref{corollary:Depth-PreservingBootstrapping} with LWE-based WPRFs whose evaluation is in $\mathsf{NC}^1$.
The section has two logically separate parts.
First, we build a deterministic low-depth circuit that samples, up to negligible statistical error, the rounded Gaussian errors used in Regev's formulation of $\LWE$~\cite{STOC:Regev05,JACM:Regev09}.
Second, we plug this sampler into a chaining construction in the style of Kim~\cite{EC:Kim20}, obtaining a WPRF from polynomial-modulus $\LWE$.
The depth bottleneck is the sampler: once the errors can be generated in $\poly(\log\log\lambda)$ depth, the entire chain fits inside logarithmic depth.

\paragraph{Notation switch.}
Throughout this section, the symbol $q$ denotes the modulus in $\LWE$ (and no longer the number of oracle queries).

We recall the Learning with Errors assumption.

\begin{definition}[Rounded Gaussian Distribution~\cite{STOC:Regev05,JACM:Regev09}]
\normalfont
Let $q \in \N$ and $\alpha \in (0,1)$.
The rounded Gaussian distribution $\chi=\chi_{\alpha q}$ over $\Z_q$ is obtained by sampling $x\sample\mathcal N(0,\alpha^2)$ over $\mathbb R$, outputting $\lfloor qx\rceil$, and reducing the result modulo~$q$.
Equivalently, $\chi$ is the distribution of $\lfloor X\rceil \bmod q$ for $X\sample\mathcal N(0,(\alpha q)^2)$.
\end{definition}

\begin{definition}[$\LWE$ Assumption~\cite{STOC:Regev05,JACM:Regev09}]
\hypertarget{$\LWE$}
{\normalfont
\label{def:lwe}
Let $n, m, q, t\in \N$. Let $\alpha \in (0, 1)$.
Let $\chi = \chi_{\alpha q}$ be a rounded Gaussian distribution.
The $(n, m, q, \alpha, t, \epsilon)$-$\LWE$ assumption states that $\game G_0 \approx_{t, \epsilon} \game G_1$, where
\begin{itemize}
    \item \textbf{Game $G_{0}$}: Sample $\mat A \sample \mathbb{Z}_q^{n\times m}$, $\vc s\sample \mathbb{Z}_q^{n}$, and $\vc e\sample \chi^{m}$. Give the adversary $(\mat A, \vc s\cdot \mat A + \vc e)$.
    \item \textbf{Game $G_{1}$}: Sample $\mat A \sample \mathbb{Z}_q^{n\times m}$, and $\vc u \sample \mathbb{Z}_q^{m}$. Give the adversary $(\mat A, \vc u)$.
\end{itemize}
Now let $\func \alpha(\cdot)$ be a function of $n$.
The $\LWE_{\func \alpha}$ assumption states that for every polynomially bounded modulus $\func q(n)$, sample length $\func m(n)$, and time bound $\func t(n)$, there exists a negligible function $\funceps(n)$ such that the $(n,\func m(n),\func q(n),\func \alpha(n),\func t(n),\funceps(n))$-$\LWE$ assumption holds for all $n\in\N$.
}
\end{definition}


\begin{lemma}[Coupling bound]
\label{lem:couplingbound}
\normalfont
Let $P$ and $Q$ be distributions over a finite set $\Omega$.
Let $(X,Y)$ be any joint distribution over $\Omega\times \Omega$ such that $X\sim P$ and $Y\sim Q$.
\[
\SD(P,Q) \;\le\; \Pr[X\neq Y].
\]
\end{lemma}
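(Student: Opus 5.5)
The plan is to apply the standard coupling argument directly to the definition of statistical distance, written as $\SD(P,Q)=\max_{S\subseteq\Omega}\bigl(P(S)-Q(S)\bigr)$. This is equivalent to the half-$\ell_1$ form $\frac12\sum_{\omega}|P(\omega)-Q(\omega)|$, with the maximum attained at $S^*=\{\omega: P(\omega)>Q(\omega)\}$. I would record this equivalence first, since it is a one-line check.

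Fix any event $S\subseteq\Omega$ and any coupling $(X,Y)$ with $X\sim P$ and $Y\sim Q$. Then
\[
P(S)=\Pr[X\in S]=\Pr[X\in S, X=Y]+\Pr[X\in S, X\neq Y].
\]
The first term is at most $\Pr[Y\in S]=Q(S)$, because on the event $X=Y$ we have $X\in S$ if and only if $Y\in S$. The second term is at most $\Pr[X\neq Y]$. Hence $P(S)-Q(S)\le \Pr[X\neq Y]$ for every $S$. Taking the maximum over $S$, and in particular $S=S^*$, gives $\SD(P,Q)\le\Pr[X\neq Y]$.

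An alternative route uses the half-$\ell_1$ form directly. The idea is to bound $\Pr[X=Y]\le\sum_\omega\min(P(\omega),Q(\omega))$, which holds because $\Pr[X=Y=\omega]\le\min(P(\omega),Q(\omega))$ for each $\omega$. One then uses the identity $1-\sum_\omega\min(P(\omega),Q(\omega))=\frac12\sum_\omega|P(\omega)-Q(\omega)|$. Either route works. The only mildly delicate step is making sure the normalization of $\SD$ matches the paper's convention, namely the half-$\ell_1$ form, and this is settled by the initial equivalence check. No real obstacle is expected.
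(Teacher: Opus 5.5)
Your proof is correct: bounding $P(S)=\Pr[X\in S,\,X=Y]+\Pr[X\in S,\,X\neq Y]\le Q(S)+\Pr[X\neq Y]$ and maximizing over $S$ is the standard coupling argument, and the overlap route via $\Pr[X=Y]\le\sum_{\omega}\min(P(\omega),Q(\omega))$ is equally valid. The paper states this lemma without proof as a standard fact, and in its applications the coupled pair (e.g., $(E,E')\in\mathbb{Z}_q\times\mathbb{Z}_q$) always takes values in a finite set even when the underlying randomness is continuous, so your finite-$\Omega$ argument covers every use.
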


The construction below uses the rounded Gaussian formulation directly.
Thus, we need a sampler that is simultaneously accurate enough for the $\LWE$ reduction and shallow enough not to dominate the eventual WPRF.
The next subsection provides exactly this sampler.

\label{sec:PRFfromLWE}
\subsection{Rounded Gaussian Sampler}

We need a deterministic Boolean circuit $S_\chi$ that maps uniform bits to a distribution statistically close to $\chi_{\alpha q}$.
The standard analytic route is the Box--Muller transform: two uniform real numbers are mapped to a standard Gaussian using $\ln$, $\sqrt{\cdot}$, and $\cos$.
The technical work in this subsection is to show that each of these operations can be replaced by a finite-precision, low-depth approximation without changing the rounded output except with negligible probability.
The proof is organized as a pipeline: approximate $\ln$ by a polynomial, approximate $\cos$ by a polynomial, approximate $\sqrt{\cdot}$ by Newton iteration, combine them in Box--Muller transform, and finally argue that rounding is stable away from integer half-boundaries.

\paragraph{Notation.} Let $\kappa\in\N$.
Let $\mathbb{R}_{(\kappa)}$ denote the set 
$\{t\cdot 2^{-\kappa}, t\in \mathbb{Z}\}$.
For a subset $S \subset \mathbb{R}$, let $S_{(\kappa)}$ denote the set $S\cap \mathbb{R}_{(\kappa)}$.
\begin{itemize}
    \item For $r\in\mathbb{R}$, 
    let $\Trunc_{\kappa}(r) \in \mathbb{R}_{(\kappa)}$ be the value obtained by truncating to $\kappa$ fractional bits.
    \item For a bit string $u\in\{0,1\}^{\kappa}$, we (abusively) use $u$ to denote
$\mathsf{int}(u)/2^{\kappa}\in [0, 1)_{(\kappa)}$, where $\mathsf{int}(u) = \sum_{i \in [\kappa]} 2^{i-1} \cdot u_i$.
\end{itemize}

These conventions let us describe the sampler as a real-valued analytic computation while keeping track of the actual circuit, which only manipulates fixed-point numbers.
All truncation errors will be chosen exponentially small in $\kappa$, whereas the modulus satisfies $\log q=o(\kappa)$ in the accuracy lemma.

The Box--Muller transform is convenient here because it reduces Gaussian sampling to a small number of elementary functions.
Each elementary function has a standard low-depth approximation, and the singularity of $\ln r_1$ near zero can be isolated into a negligible bad event.

\begin{lemma}[Box--Muller Transform~\cite{box1958note}]
\normalfont
On input $(r_1,r_2)\in(0,1)^2$, define $\Phi:(0,1)^2\to\mathbb{R}$ by
\[
\Phi(r_1,r_2)\coloneqq \sqrt{-2\ln r_1}\cdot \cos(2\pi r_2).
\]
Then $\Phi(r_1, r_2) \sim \mathcal{N}(0, 1)$ for $(r_1, r_2) \sample (0, 1)^2$.
\end{lemma}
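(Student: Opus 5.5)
We will prove the statement by computing the law of $(R,\Theta)$ with $R \coloneqq \sqrt{-2\ln r_1}$ and $\Theta \coloneqq 2\pi r_2$, and then use the polar change of variables to identify $(R\cos\Theta, R\sin\Theta)$ as a pair of independent standard Gaussians. Projecting onto the first coordinate then gives $\Phi(r_1,r_2)\sim\mathcal N(0,1)$.

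\textbf{Step 1: the law of $(R,\Theta)$.} Since $r_1$ and $r_2$ are independent, so are $R$ and $\Theta$, and $\Theta$ is uniform on $(0,2\pi)$. The map $r_1\mapsto\sqrt{-2\ln r_1}$ is a strictly decreasing bijection from $(0,1)$ onto $(0,\infty)$. Hence, for $\rho>0$,
\[
\Pr[R>\rho]=\Pr\bigl[r_1<e^{-\rho^2/2}\bigr]=e^{-\rho^2/2}.
\]
Differentiating, $R$ has density $\rho e^{-\rho^2/2}$ on $(0,\infty)$, which is the Rayleigh distribution.

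\textbf{Step 2: the Gaussian pair.} Let $(Z_1,Z_2)\sim\mathcal N(0,I_2)$, with joint density $\frac{1}{2\pi}e^{-(z_1^2+z_2^2)/2}$. Write it in polar coordinates $z_1=\rho\cos\theta$, $z_2=\rho\sin\theta$; the Jacobian is $\rho$. The density in $(\rho,\theta)$ is then
\[
\frac{1}{2\pi}\,\rho\, e^{-\rho^2/2}\quad\text{on }(0,\infty)\times(0,2\pi).
\]
This factorizes as a Rayleigh density in $\rho$ times a uniform density in $\theta$, and it coincides with the joint law of $(R,\Theta)$ from Step 1. The polar map is a bijection off a null set, so pushing forward gives
\[
(R\cos\Theta,\,R\sin\Theta)\overset{d}{=}(Z_1,Z_2).
\]
Taking the first coordinate, $\Phi(r_1,r_2)=R\cos\Theta\sim\mathcal N(0,1)$.

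\textbf{Main point of care.} The only delicate step is the change of variables: the polar map must be a diffeomorphism away from a measure-zero set (the ray $\theta=0$ and the origin), and the Jacobian factor $\rho$ must be tracked correctly. Everything else is routine computation.

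An alternative that avoids the pair $(Z_1,Z_2)$ is to check characteristic functions directly. Conditioning on $R$ and averaging over the uniform $\Theta$ gives $\mathbb E[e^{itR\cos\Theta}\mid R]=J_0(tR)$. One then needs the integral $\int_0^\infty J_0(t\rho)\,\rho e^{-\rho^2/2}\,d\rho=e^{-t^2/2}$, which is a classical Bessel identity. The polar argument is cleaner, so we use it.
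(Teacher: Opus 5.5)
Your proof is correct and is the standard argument. The paper does not prove this lemma at all; it simply cites Box and Muller, whose original justification is this same change-of-variables computation: the Rayleigh law of $\sqrt{-2\ln r_1}$, an independent uniform angle, and the polar Jacobian $\rho$, which together identify $(R\cos\Theta, R\sin\Theta)$ as a standard bivariate Gaussian.
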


For an arbitrary $u_1\in(0,1)$, we first write $u_1=2^{-s}x$ with $x\in[1/2,1)$.
The identity $\ln u_1=-s\ln 2+\ln x$ then reduces the only nontrivial approximation task to the compact interval below, where the Taylor series has a geometrically decaying tail.

\begin{lemma}[Polynomial Approximation of $\ln x$]
\label{lemma:lnapprox}
Let $\kappa \in \N$.
There exists a polynomial $p = p_{\ln, \kappa}$ of degree $\kappa$ such that
\[
    \sup_{x\in[1/2,1]}\bigl|\ln x - p(x)\bigr| \le 2^{-\kappa}.
\]
\end{lemma}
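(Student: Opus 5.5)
We will use the truncated Taylor expansion of $\ln$ around $1$. Write $x = 1-y$ with $y \in [0,1/2]$ for $x\in[1/2,1]$. Then
\[
\ln x = \ln(1-y) = -\sum_{k\ge 1}\frac{y^k}{k},
\]
and the series converges absolutely on this range because $|y|\le 1/2<1$. We define
\[
p(x)\coloneqq -\sum_{k=1}^{\kappa}\frac{(1-x)^k}{k}.
\]
This is a polynomial in $x$ of degree exactly $\kappa$, since the leading coefficient comes only from the $k=\kappa$ term and equals $-(-1)^{\kappa}/\kappa\neq 0$. Degree at most $\kappa$ would also suffice.

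The only step is to bound the tail uniformly. For $x\in[1/2,1]$,
\[
\bigl|\ln x - p(x)\bigr| = \sum_{k>\kappa}\frac{y^k}{k} \le \frac{1}{\kappa+1}\sum_{k>\kappa} 2^{-k} = \frac{2^{-\kappa}}{\kappa+1}\le 2^{-\kappa}.
\]
Here we used $y\le 1/2$ and $1/k\le 1/(\kappa+1)$ for every tail index. The bound does not depend on $x$, so it also bounds the supremum.

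At the endpoint $x=1$ we have $y=0$, and both sides vanish. At $x=1/2$ we have $y=1/2$, which is still strictly inside the disk of convergence, so there is no boundary subtlety.

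We do not expect any real obstacle. The only point to check is that expanding around $1$ rather than around $1/2$ still gives a geometric ratio of at most $1/2$ over the whole interval, and it does. Even a cruder tail estimate, dropping the $1/k$ factor, gives $\sum_{k>\kappa}2^{-k}=2^{-\kappa}$, which is exactly the required bound.
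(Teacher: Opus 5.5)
Your proof is correct and matches the paper's: you use the same truncated Taylor polynomial $p(x)=-\sum_{j=1}^{\kappa}(1-x)^j/j$ and bound the tail geometrically using $y\le 1/2$. The only difference is that you keep the $1/(\kappa+1)$ factor for a slightly sharper bound $2^{-\kappa}/(\kappa+1)$, whereas the paper drops the $1/j$ factor and gets exactly $2^{-\kappa}$.
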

\begin{proof}
Write $x=1-y$ where $y\in[0,1/2]$. Then by Taylor expansion of $\ln(1-y)$ at $y=0$, we have
\[
\ln x=\ln(1-y)= -\sum_{j=1}^{\infty}\frac{y^{j}}{j}.
\]
Let the degree-$\kappa$ polynomial $p$ be
\[
p(x)\;\coloneqq\; -\sum_{j=1}^{\kappa}\frac{(1-x)^{j}}{j},
\]
The approximation error is
\[
\Bigl|\ln(x)-p(x)\Bigr|
= \sum_{j=\kappa+1}^{\infty}\frac{y^{j}}{j}
\le \sum_{j=\kappa+1}^{\infty} y^{j}
\leq  \sum_{j=\kappa+1}^{\infty} (1/2)^j \leq 2^{-\kappa}
\]
\end{proof}

The cosine term is easier: the input always lies in the fixed interval $[0,2\pi]$, so a truncated Taylor series gives exponentially small error with degree linear in the precision parameter.

\begin{lemma}[Polynomial Approximation of $\cos x$]
\label{lemma:cosapprox}
Let $\kappa \in \N$.
There exists a polynomial $p = p_{\cos, \kappa}$ of degree $\kappa$ such that
\[
    \sup_{x\in[0, 2\pi]}\bigl|\cos x - p(x)\bigr| \le 2^{-\Omega(\kappa)}.
\]
\end{lemma}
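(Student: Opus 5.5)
We take $p$ to be the degree-$\kappa$ Taylor polynomial of $\cos$ at a suitable centre and bound the remainder with Lagrange's form. The simplest choice is to expand around $x_0=\pi$, the midpoint of $[0,2\pi]$. This gives $|x-x_0|\le\pi$ for every $x$ in the interval. Concretely, set
\[
p(x)\coloneqq\sum_{j=0}^{\kappa}\frac{\cos^{(j)}(\pi)}{j!}(x-\pi)^j .
\]
This is a polynomial of degree at most $\kappa$ in $x$; we can pad with a zero leading term if exact degree $\kappa$ is wanted.

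Every derivative of $\cos$ is bounded by $1$ in absolute value. Lagrange's remainder theorem therefore gives, for all $x\in[0,2\pi]$,
\[
|\cos x-p(x)|\le\frac{|x-\pi|^{\kappa+1}}{(\kappa+1)!}\le\frac{\pi^{\kappa+1}}{(\kappa+1)!}.
\]

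The remaining work is to show that $\pi^{\kappa+1}/(\kappa+1)!=2^{-\Omega(\kappa)}$. Stirling's bound $n!\ge(n/e)^n$ with $n=\kappa+1$ gives
\[
\frac{\pi^{\kappa+1}}{(\kappa+1)!}\le\left(\frac{e\pi}{\kappa+1}\right)^{\kappa+1}.
\]
Once $\kappa+1\ge 2e\pi$ (for instance $\kappa\ge 17$), the base is at most $1/2$, so the error is at most $2^{-(\kappa+1)}\le 2^{-\kappa}$. For the finitely many $\kappa<17$ the error is bounded by an absolute constant. This is absorbed into the $\Omega(\cdot)$ by shrinking the hidden constant, since $2^{-c\kappa}$ is bounded below on that finite range.

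The only step needing any care is this final factorial-versus-exponential comparison, and it is handled by Stirling. Expanding at $0$ instead of $\pi$ would also work, with $(2\pi)^{\kappa+1}/(\kappa+1)!$ in place of $\pi^{\kappa+1}/(\kappa+1)!$, but it needs a larger threshold on $\kappa$.

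For the later depth analysis, note that $p$ has rational coefficients $\pm1/j!$ times powers of $\pi$. In the circuit these will be truncated to fixed-point precision via $\Trunc_\kappa$, at an additional cost of $2^{-\Omega(\kappa)}$ because $|x-\pi|^j\le\pi^j$ is controlled by the same factorial decay. That is not needed for the statement itself.
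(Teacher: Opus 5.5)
Your proof is correct and follows essentially the paper's route (truncated Taylor series plus Stirling). The paper expands at $0$ and bounds the tail by the first omitted alternating term $(2\pi)^{\kappa+2}/(\kappa+2)!$, which implicitly requires the terms to be decreasing, whereas your Lagrange remainder centred at $\pi$ needs no such check and lowers the threshold on $\kappa$; one small correction is that for tiny $\kappa$ the error exceeds $1$ (e.g.\ it equals $2$ at $\kappa=1$, where $p\equiv-1$), so it cannot be absorbed by shrinking $c$ in $2^{-c\kappa}$, and you should instead read $2^{-\Omega(\kappa)}$ in the usual way as holding for all sufficiently large $\kappa$ (or up to a constant factor).
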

\begin{proof}
Use the Taylor expansion at $0$:
\[
\cos x=\sum_{j=0}^{\infty}(-1)^j \frac{x^{2j}}{(2j)!}.
\]
Let the degree-$\kappa$ polynomial $p$ be
\[
p(x)\;\coloneqq\;\sum_{j=0}^{\kappa / 2}(-1)^j \frac{x^{2j}}{(2j)!},
\]
Denote 
\[
a_j(x)\coloneqq \frac{x^{2j}}{(2j)!}.
\]
For $x\in[0,2\pi]$, $j \geq 4$, 
\[
\frac{a_{j+1}(x)}{a_j(x)}
=
\frac{x^2}{(2j+2)(2j+1)}
\le
\frac{(2\pi)^2}{(2j+2)(2j+1)}
\le 1.
\]
Therefore, the alternating-series remainder is bounded by the first omitted term:
\[
|\cos x - p(x)|
\le \frac{x^{\kappa+2}}{(\kappa+2)!}
\le \frac{(2\pi)^{\kappa+2}}{(\kappa+2)!}.
\]
Using Stirling's Approximation $n! \geq (n/e)^n$, we obtain
\[
\frac{(2\pi)^{\kappa+2}}{(\kappa+2)!}
\le \Bigl(\frac{2\pi e}{\kappa+2}\Bigr)^{\kappa+2} \leq 2^{-\Omega(\kappa)}.
\]

\end{proof}

It remains to compute the square root appearing in Box--Muller transform.
We normalize the input to $[1,4)$ and use Newton iteration from the fixed starting point $2$.
The next lemma records both fast convergence over the reals and stability under truncation, which is the property needed by the circuit implementation.

\begin{definition}[Newton iteration for $\sqrt{\cdot}$]
\label{def:newton}
\normalfont
Let $\kappa\in\N$.
Let $x\in[1,4)$.
Define the iteration map
\[
F_x(t)\;\coloneqq\;\frac12\left(t+\frac{x}{t}\right).
\]
Define a sequence $(t_i)_{i\ge 0}$ by setting $t_0\coloneqq 2$ and
\[
t_{i+1}\;\coloneqq\;F_x(t_i)\qquad\text{for all }i\ge 0.
\]
Or equivalently, $t_{i} \coloneqq  F_x^{(i)}(2)$ for $i\geq 1$.
\end{definition}

\begin{lemma}[Convergence and finite-precision stability]
\label{lemma:newton}
\normalfont
Let all parameters be as in~\Cref{def:newton}.
Let $T = \log \kappa$, $r\coloneqq \sqrt{x}$.
Then:
\begin{enumerate}
    \item $t_i\ge r$ for all $i\ge 0$.
    \item After $T$ iterations, it holds that
    \[
    |t_T-r|\;\le\;2^{-\Omega(\kappa)}.
    \]
    \item Define the finite-precision iteration by
    \[
    \widetilde t_0\coloneqq 2,
    \qquad
    \widetilde t_{i+1}\;\coloneqq\; \widetilde{F}_{x, \kappa}(\widetilde t_i)\qquad\text{for all }i\ge 0,
    \]
    where $\widetilde{F}_{x, \kappa}(\cdot) \coloneqq \Trunc_{\kappa} (F_x(\cdot))$.
    After $T$ iterations, it holds that
    \[
    |\widetilde t_T-r|\;\le\;2^{-\Omega(\kappa)}.
    \]
\end{enumerate}
\end{lemma}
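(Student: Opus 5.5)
The plan follows the standard analysis of Newton's method for square roots, in three steps: monotonicity, quadratic convergence from a good starting point, and a perturbation bound for truncation.

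\emph{Step 1 (item 1).} For $t>0$ we have the identity
\[
F_x(t)-r=\frac{(t-r)^2}{2t}\ge 0 .
\]
Hence $t_{i+1}\ge r$ whenever $t_i>0$. Since $t_0=2\ge r$ (because $x<4$), induction gives $t_i\ge r\ge 1$ for all $i$.

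\emph{Step 2 (item 2).} Put $e_i\coloneqq t_i-r\ge 0$. The identity above, together with $t_i\ge r\ge 1$, gives
\[
e_{i+1}=\frac{e_i^2}{2t_i}\le \frac{e_i^2}{2}.
\]
Initially $e_0=2-r\le 1$, so $e_1\le 1/2$. Unrolling, $e_i/2\le (e_{i-1}/2)^2$, so $e_i\le 2\,(e_1/2)^{2^{i-1}}\le 2\cdot 4^{-2^{i-1}}$. This is $2^{1-2^{i}}$. With $T=\log\kappa$ iterations, $2^{T}=\kappa$, which yields $e_T\le 2^{1-\kappa}=2^{-\Omega(\kappa)}$.

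\emph{Step 3 (item 3), the main obstacle.} Truncation can push an iterate slightly below $r$, so item 1 no longer applies and the analysis must tolerate this. I would write
\[
\widetilde t_{i+1}=F_x(\widetilde t_i)-\delta_i,\qquad 0\le\delta_i\le 2^{-\kappa},
\]
and set $\widetilde e_i\coloneqq \widetilde t_i-r$, which may now be negative. Since $F_x(t)-r=(t-r)^2/(2t)\ge 0$ for every $t>0$, we get $\widetilde e_{i+1}\ge -2^{-\kappa}$. Then $\widetilde t_i\ge r-2^{-\kappa}\ge 1/2$, so every iterate stays positive and the identity still applies. This gives
\[
|\widetilde e_{i+1}|\le \widetilde e_i^{\,2}+2^{-\kappa}.
\]

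The plan is to prove by induction that $|\widetilde e_i|\le \max\bigl(2\cdot 2^{-2^{i}},\,c\cdot 2^{-\kappa}\bigr)$ plus lower-order terms, for a small constant $c$. While the doubly exponential term dominates, squaring it reproduces the exact-iteration bound of Step 2. Once the $2^{-\kappa}$ term dominates, its square is negligible compared with the fresh error $2^{-\kappa}$. The exact-iteration bound $e_i/2\le(e_{i-1}/2)^2$ has slack, and I would use it to absorb the additive $2^{-\kappa}$ at each step. The initial regime needs a separate check: $e_0\le 1$ means the first step only gives $e_1\le 1/2$, but from then on the errors stay below $1/2$ and contract. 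The crude recurrence $|\widetilde e_{i+1}|\le \widetilde e_i^{\,2}+2^{-\kappa}$ keeps $|\widetilde e_i|\le 1/2+2^{-\kappa}$ throughout, so it remains in the contracting regime.

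After $T=\log\kappa$ steps this gives
\[
|\widetilde t_T-r|\le 2^{1-\kappa}+O(2^{-\kappa})=2^{-\Omega(\kappa)}.
\]
The delicate point is the bookkeeping that the additive truncation errors never accumulate beyond a constant multiple of $2^{-\kappa}$. This holds because the map $e\mapsto e^2$ is strongly contracting near zero.
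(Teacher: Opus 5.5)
Your proposal is correct. Items~1 and~2 match the paper, except that you track absolute rather than relative error, but item~3 takes a genuinely different route.

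\textbf{The paper's route.} It proves item~3 by a perturbation argument. Using the mean value theorem and $|F_x'(t)|\le\tfrac12$ for $t\ge r/\sqrt2$, it shows by induction that $|\widetilde t_i-t_i|\le 2^{-\kappa+1}$ for all $i$. It then combines this with item~2 by the triangle inequality. This is modular: item~3 reuses item~2 verbatim and needs only first-order contraction of $F_x$.

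\textbf{Your route.} You compare $\widetilde t_i$ directly with the fixed point $r$, via the exact identity $F_x(t)-r=(t-r)^2/(2t)$, valid for all $t>0$. This shows self-correction: each step's error is quadratic in the previous error plus only the fresh truncation error, with no reference to the exact orbit. The same identity also gives a valid proof of item~1. The paper's step from $x/t_i\le r$ to $t_{i+1}\ge\tfrac12(r+r)$ does not follow as written, since an upper bound on $x/t_i$ cannot lower-bound the sum; AM--GM or your identity is what is needed.

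\textbf{A simpler induction.} Since $\widetilde e_{i+1}=\widetilde e_i^{\,2}/(2\widetilde t_i)-\delta_i$ with both terms nonnegative, you actually get $|\widetilde e_{i+1}|\le\max\bigl(\widetilde e_i^{\,2},\,2^{-\kappa}\bigr)$, not just the sum. The first step gives $|\widetilde e_1|\le\tfrac14$, because $\widetilde t_0=2$ exactly. A one-line induction then yields $|\widetilde e_i|\le\max\bigl(2^{-2^i},\,2^{-\kappa}\bigr)$, hence $|\widetilde t_T-r|\le 2^{-\kappa}$, with no accumulation of truncation errors to track.

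\textbf{A small slip.} This also avoids a slip in your sketch. After you bound $1/(2\widetilde t_i)$ by $1$, the hypothesis $|\widetilde e_i|\le 2\cdot 2^{-2^i}$ no longer reproduces itself under squaring. So your claimed $2^{1-\kappa}$ requires keeping the factor $1/(2\widetilde t_i)\approx\tfrac12$, which $\widetilde t_i\ge 1-2^{-\kappa}$ permits. Otherwise you get the weaker but still sufficient bound $2^{-\kappa/2}$.
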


\begin{proof}
We prove them one by one.
\begin{enumerate}
    \item Assume $t_i\ge r>0$. Then $x/t_i\le x/r=r$, and hence
\[
t_{i+1}
=\frac12\left(t_i+\frac{x}{t_i}\right)
\ge \frac12(r+r)=r.
\]
Thus $t_{i+1}\ge r$, and by induction $t_i\ge r$ for all $i\ge 0$.

\item Write $t_i=r(1+\varepsilon_i)$, where $\varepsilon_i\coloneqq (t_i-r)/r\ge 0$.
Then
\[
t_{i+1}
=\frac12\left(r(1+\varepsilon_i)+\frac{r^2}{r(1+\varepsilon_i)}\right)
=\frac r2\left((1+\varepsilon_i)+\frac{1}{1+\varepsilon_i}\right).
\]
Dividing by $r$ and subtracting $1$ gives
\[
\varepsilon_{i+1}
=\frac12\left((1+\varepsilon_i)+\frac{1}{1+\varepsilon_i}\right)-1
=\frac{\varepsilon_i^2}{2(1+\varepsilon_i)}
\le \frac{\varepsilon_i^2}{2},
\]
where the inequality uses $1+\varepsilon_i\ge 1$.
In particular, the relative error decreases quadratically. Since $t_0=2$ and $r\in[1,2)$, we have $\varepsilon_0=(2-r)/r\le 1$, so $\varepsilon_1 \leq 1/2$, and iterating the bound yields $\varepsilon_T\le 2^{-\Omega(2^T)}$. For $T= \log\kappa$, this implies
$|t_T-r|=r\cdot \varepsilon_T\le 2^{-\Omega(\kappa)}$.

\item We prove by induction that $|\widetilde t_i-t_i|\le 2^{-\kappa+1}$ for all $i\ge 0$; the claim then follows by combining this with above.
The base case $i=0$ is immediate since $\widetilde t_0=t_0=2$.

Now assume the claim holds for some $i\ge 0$. We have
\[
F'(t)=\frac12\left(1-\frac{x}{t^2}\right).
\]
Since $t_i \ge \sqrt{x}$, it follows that $\widetilde t_i \ge \sqrt{x}-2^{-\kappa}\ge \sqrt{x}-0.01$.
Therefore, $\frac{x}{t^2}\in[0,2]$, and hence for every
$t \in [\min(t_i,\widetilde t_i),\,\max(t_i,\widetilde t_i)]$ we have
\[
|F'(t)|\le \frac12.
\]
By the mean value theorem,
\[
|F(\widetilde t_i)-F(t_i)|\le \frac12\,|\widetilde t_i-t_i|
\le 2^{-\kappa-1}.
\]
Consequently,
\[
|\widetilde t_{i+1}-t_{i+1}|
\le |\widetilde t_{i+1}-F(\widetilde t_i)| + |F(\widetilde t_i)-F(t_i)|
\le 2^{-\kappa}+2^{-\kappa-1}
\le 2^{-\kappa+1}.
\]

\end{enumerate}
\end{proof}

\paragraph{Putting everything together.}
The sampler below is the finite-precision Box--Muller pipeline.
Step~2 computes $-2\ln u_1$ after normalizing $u_1$ into $[1/2,1)$; Step~3 computes its square root; Step~4 computes the cosine factor; and Steps~5--6 scale and round the resulting approximate standard Gaussian.
\begin{construction}[Rounded Gaussian Sampler $S_{\chi}$]
\label{construct:rounded-gaussian-sampler}
\normalfont
Let $\kappa\in\N$ and let $\chi=\chi_{\alpha q}$ be the rounded Gaussian distribution over $\mathbb{Z}_q$.
We define the rounded Gaussian sampler
\[
S_{\chi}:\{0,1\}^{\kappa}\to \mathbb{Z}_q
\]
as follows.
On input $u=(u_1,u_2)$, where $u_1,u_2\in\{0,1\}^{\kappa/2}$ (interpreted as values in $[0, 1)_{(\kappa/2)}$):
\begin{enumerate}
  \item If $u_1 = 0$, output any default value.
  \item Approximate $z_1 = -2\ln u_1$ as follows:
    \begin{enumerate}
        \item Write $u_1=2^{-s}\cdot x$ where $s \in \N$, $x\in[1/2,1)$. 
        \item $z_1 \gets -2 \cdot (-s\ln 2 + p_{\ln, \kappa}(x))$.
    \end{enumerate}
    \item Approximate $z'_1 = \sqrt{z_1}$ as follows:
    \begin{enumerate}
        \item Write $z_1=2^{2s}\cdot x$ where $s\in\Z$, $x\in[1,4)$. 
        \item $z'_1 \gets 2^s \cdot \widetilde{F}^{(\log \kappa)}_{x, \kappa}(2)$, where $\widetilde{F}^{(T)}_{x,\kappa}(2)$ denotes $T$ steps of the finite-precision Newton iteration.
    \end{enumerate}
  \item Approximate $z_2 = \cos(2\pi u_2)$ as follows:
  \begin{enumerate}
      \item $z_2 \gets p_{\cos, \kappa}(2\pi u_2)$
  \end{enumerate}
  \item (Box--Muller Transformation) $e \gets z'_1 \cdot z_2 \in \mathbb{R}$.
  \item Output $\overline{e}\gets \bigl\lfloor \alpha q \cdot e \bigr\rceil \bmod q \in \mathbb{Z}_q$,
  where $\lfloor\cdot\rceil$ denotes rounding to the nearest integer.
\end{enumerate}
\end{construction}

For inputs in $\{0,1\}^{\kappa n}$, $S_{\chi}$ is extended by applying it independently to each $\kappa$-bit block, producing an output in $\mathbb{Z}_q^n$.

\begin{lemma}[Sampler Depth]
\label{lemma:sampler-depth}
For $\log q = O(\kappa)$, the function $S_{\chi}$ can be computed by a Boolean circuit of depth $O(\poly(\log\kappa))$.
\end{lemma}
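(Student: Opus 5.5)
The plan is to walk through the steps of \Cref{construct:rounded-gaussian-sampler} and bound the depth of each one separately. All values are fixed-point numbers with $O(\kappa)$ bits; integer parts stay $O(\kappa)$ because $s\le\kappa$ and $\log q=O(\kappa)$. The steps run in sequence, so the total depth is the sum of a constant number of stage depths.

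I will use the following standard facts about bounded fan-in circuits on $N$-bit fixed-point numbers, with $N=O(\kappa)$:
\begin{itemize}
\item addition and comparison take depth $O(\log N)$ (carry-lookahead);
\item multiplication takes depth $O(\log N)$ (Wallace tree, then a final adder);
\item division to $N$ bits of precision takes depth $O(\log^2 N)$, via Newton reciprocal iteration with $O(\log N)$ steps of multiplication;
\item the normalizations in Steps 2(a) and 3(a), namely finding the leading one and shifting, take depth $O(\log N)$ (priority encoder plus barrel shifter).
\end{itemize}
Constants such as $\ln 2$, $\alpha q$, $2\pi$ and the Taylor coefficients $1/j$ and $(-1)^j/(2j)!$ are hardwired, truncated to $O(\kappa)$ bits. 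The truncation error this introduces is $2^{-\Omega(\kappa)}$, which belongs to the accuracy analysis and not to the depth.

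The central step is evaluating the degree-$\kappa$ polynomials $p_{\ln,\kappa}$ and $p_{\cos,\kappa}$ without Horner's rule, since Horner would cost depth $\kappa\log\kappa$. Instead:
\begin{itemize}
\item Compute all powers $y^j$ for $j\le\kappa$ in parallel by repeated squaring and products of the squarings indicated by the binary expansion of $j$. This is $O(\log\kappa)$ sequential levels of multiplication, each of depth $O(\log\kappa)$, so depth $O(\log^2\kappa)$ in total, truncating to $O(\kappa)$ bits after each product.
\item Multiply each power by its hardwired coefficient, which adds depth $O(\log\kappa)$.
\item Sum the $\kappa$ terms. With a carry-save (3-to-2) adder tree followed by one final adder this costs depth $O(\log\kappa)$; a plain binary tree of adders gives $O(\log^2\kappa)$, which is also fine.
\end{itemize}
Thus Steps 2 and 4 have depth $O(\log^2\kappa)$.

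Step 3 consists of $T=\log\kappa$ sequential Newton iterations. Each iteration does one division $x/t$ (depth $O(\log^2\kappa)$), one addition, and a halving, so Step 3 has depth $O(\log^3\kappa)$. I expect this to be the dominant cost. Step 5 is a single multiplication. Step 6 is a multiplication by $\alpha q$, a rounding (adding $1/2$ and truncating), and a reduction mod $q$. For the reduction, the value $|\alpha q\, e|$ is at most $\alpha q\cdot O(\sqrt{\kappa})$, and I can assume $\alpha<1$ so that this is $O(q\sqrt{\kappa})$ in magnitude. Then the reduction is either a division by the hardwired $q$ (depth $O(\log^2\kappa)$), or a comparison against all $O(\sqrt{\kappa})$ multiples of $q$ in parallel followed by a selection (depth $O(\log\kappa)$). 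Step 1 is an $O(\log\kappa)$-depth OR and multiplexer. Summing over the steps gives $O(\log^3\kappa)=O(\poly(\log\kappa))$.

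The main obstacle is the polynomial evaluation. One has to make sure the $O(\log\kappa)$ depth of multiplication is not paid $\kappa$ times, and that truncating intermediate powers to $O(\kappa)$ bits keeps the bit lengths bounded so each multiplication really costs only $O(\log\kappa)$. Bounding the magnitudes settles the second point: $|1-x|\le 1/2$ and $|2\pi u_2|\le 2\pi$ give $|y^j|\le (2\pi)^{\kappa}$, so integer parts need only $O(\kappa)$ bits.
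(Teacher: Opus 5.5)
Your proposal is correct and follows the same route as the paper: a step-by-step depth accounting of \Cref{construct:rounded-gaussian-sampler}, resting on the fact that fixed-point arithmetic on $\poly(\kappa)$-bit numbers has depth $\poly(\log\kappa)$. Where the paper only asserts that polynomial evaluation and the Newton steps are polylog-depth, you spell these out: parallel powers plus an adder tree instead of Horner's rule, and an explicit $O(\log^3\kappa)$ bound from $\log\kappa$ Newton steps each containing a division. These are refinements of the same argument rather than a different approach.
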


\begin{proof}
All intermediate fixed-point values have $\poly(\kappa)$ bits under the parameter regime used here, so standard arithmetic on them is computable in depth $\poly(\log\kappa)$.
We now account for the depth of each step of \Cref{construct:rounded-gaussian-sampler}.


\begin{enumerate}
    \item \textbf{Step 1.}
    Testing whether $u_1=0$ can be done in depth $O(\log\kappa)$.

    \item \textbf{Step 2(a).}
    Writing $u_1=2^{-s}\cdot x$ amounts to computing the position of the most significant $1$. This can be done in depth $O(\poly(\log\kappa))$.

    \item \textbf{Step 2(b).}
    The value $p_{\ln,\kappa}(x)$ is the evaluation of a degree-$\kappa$ polynomial on a $O(\kappa)$-bit fixed-point input, which is in depth $O(\poly(\log\kappa))$.

    \item \textbf{Step 3(a).}
    Writing $z_1=2^{2s}\cdot x$ again amounts to finding the position of the most significant $1$.
    This is computable in depth $O(\poly(\log\kappa))$ by the same reasoning as Step 2(a).

    \item \textbf{Step 3(b).} $\widetilde{F}^{(\log \kappa)}_{x,\kappa}(2)$ can be computed using $\log \kappa$ iterations,
    and each iteration consists of a constant number of fixed-point additions/multiplications/divisions and a truncation to $\kappa$
    fractional bits. Each such arithmetic operation on $O(\kappa)$-bit numbers can be implemented in depth $O(\poly(\log\kappa))$.
    Therefore, Step 3(b) has depth $O(\poly(\log\kappa))$.

    \item \textbf{Step 4.}
    Computing $p_{\cos,\kappa}(2\pi u_2)$ is again polynomial evaluation. As in Step 2(b), this takes depth $O(\poly(\log\kappa))$.

    \item \textbf{Step 5.}
    The multiplication $e=z_1'\cdot z_2$ is a multiplication on $\poly(\kappa)$-bit inputs, hence has depth $O(\poly(\log\kappa))$.

    \item \textbf{Step 6.}
    Multiplying by $\alpha q$ and rounding to the nearest integer can be implemented by fixed-point multiplication followed by extracting the
    integer part and applying the rounding rule; reducing modulo $q$ is standard integer arithmetic.
    Since $\log q = O(\kappa)$, all of these are operations on $\poly(\kappa)$-bit integers and hence have depth $O(\poly(\log\kappa))$.
\end{enumerate}
\end{proof}

The depth lemma only says that the sampler is shallow.
The next lemma says that this shallow sampler still has the right distribution: after coupling the finite-precision computation with the ideal Box--Muller transform, the two rounded outputs differ only when either $r_1$ is extremely close to zero or the Gaussian sample lands very close to a rounding boundary.

\begin{lemma}[Rounded Gaussian Sampler Accuracy]
\label{lemma:sampler-sd-real}
Let $\kappa\in\N$ and let $\chi=\chi_{\alpha q}$ be the rounded Gaussian distribution over $\mathbb{Z}_q$.
If $\log q=o(\kappa)$, then
\[
\SD\bigl(\chi,\; S_{\chi}(U_{\kappa})\bigr)\;\le\;2^{-\Omega(\kappa)}.
\]
\end{lemma}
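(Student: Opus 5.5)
We will prove the bound with the coupling bound (\Cref{lem:couplingbound}). The plan is to construct a joint distribution of an ideal rounded Gaussian sample and the sampler's output that agree except with probability $2^{-\Omega(\kappa)}$.

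\textbf{The coupling.} Sample ideal reals $(r_1,r_2)\sample(0,1)^2$. Let $u_1,u_2$ be the truncations of $r_1,r_2$ to $\kappa/2$ fractional bits. These are exactly uniform over $[0,1)_{(\kappa/2)}$, so $S_\chi(u_1,u_2)$ has the distribution $S_\chi(U_\kappa)$. Set
\[
X\coloneqq \bigl\lfloor \alpha q\,\Phi(r_1,r_2)\bigr\rceil \bmod q .
\]
By the Box--Muller lemma, $X\sim\chi$. It then suffices to bound $\Pr[X\neq S_\chi(u_1,u_2)]$.

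\textbf{Good event.} We define $E$ as the intersection of three conditions:
\begin{enumerate}
\item $u_1\ge 2^{-\kappa/4}$, which fails with probability at most $2^{-\kappa/4}$ and keeps us away from the logarithmic singularity;
\item $|\Phi(r_1,r_2)|\le 2^{\kappa/8}$, which holds automatically given the first condition, since $\sqrt{-2\ln u_1}=O(\sqrt\kappa)$ and $|r_1-u_1|\le 2^{-\kappa/2}$;
\item $\alpha q\Phi(r_1,r_2)$ lies at distance at least $\delta\coloneqq 2^{-c\kappa}$ from $\mathbb Z+\tfrac12$, for a small constant $c$.
\end{enumerate}

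For the third condition, $\alpha q\Phi$ is a Gaussian with standard deviation $\alpha q\ge 1$ in the relevant regime; otherwise we rescale, using that $\alpha q\ge$ some constant in Regev's setting. Its density is therefore at most $1/\sqrt{2\pi}$. Within $|\cdot|\le \alpha q\,2^{\kappa/8}$ there are $O(q2^{\kappa/8})$ half-integers. The bad set thus has probability $O(q\,2^{\kappa/8}\delta)$, and the tail beyond that range is negligible. Since $\log q=o(\kappa)$, choosing $c=1/4$ makes this $2^{-\Omega(\kappa)}$.

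\textbf{Error propagation on $E$.} We bound $|e-\Phi(r_1,r_2)|$, where $e=z_1'z_2$, and then show $\alpha q|e-\Phi|<\delta$, which forces the two roundings to agree.

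The input truncation moves $\ln$ by at most $2^{-\kappa/2}/u_1\le 2^{-\kappa/4}$. It moves $\cos(2\pi\cdot)$ by $O(2^{-\kappa/2})$.

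By \Cref{lemma:lnapprox}, after normalization $z_1$ has error at most $2\cdot 2^{-\kappa}$ plus the fixed-point error of $s\ln 2$, where $s\le \kappa/2$. This is $2^{-\Omega(\kappa)}$ provided constants such as $\ln 2$ and $2\pi$ are hardwired to $\Theta(\kappa)$ bits.

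For the square root we use \Cref{lemma:newton}. We have $z_1\in[2^{-\kappa},O(\kappa)]$, so $2^s$ is at most $O(\sqrt\kappa)$. The small-$z_1$ case occurs when $r_1$ is near $1$; there the square root's sensitivity $1/(2\sqrt{z_1})$ could blow up. We handle it by noting that the Newton error bound is relative, namely $2^{-\Omega(\kappa)}$ on $[1,4)$, and by perturbing $\sqrt{\cdot}$ via $|\sqrt a-\sqrt b|\le\sqrt{|a-b|}$. This yields $2^{-\Omega(\kappa)}$ absolute error regardless.

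By \Cref{lemma:cosapprox}, $z_2$ has error $2^{-\Omega(\kappa)}$.

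Multiplying values bounded by $O(\sqrt\kappa)$ and $O(1)$ keeps the total error at $2^{-\Omega(\kappa)}$; call it $2^{-c'\kappa}$. After scaling by $\alpha q\le q=2^{o(\kappa)}$, the error is below $\delta$ once $c<c'$. This fixes the order of constant choices: first $c'$ from the approximations, then $c$.

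\textbf{Main obstacle.} The hard part is this bookkeeping of constants. The final anti-concentration window $\delta$ must be chosen larger than the propagated approximation error yet small enough that the window probability stays $2^{-\Omega(\kappa)}$ after a union over $q\,2^{O(\kappa)}$ half-integers. If the constants do not fit, we will fall back on using $\kappa'=C\kappa$ internal precision for the polynomial and Newton steps. \Cref{lemma:sampler-depth} allows this, since its depth is $\poly(\log\kappa)$ for any $O(\kappa)$ precision.

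Combining the pieces, $\Pr[X\ne S_\chi(u)]\le\Pr[\lnot E]\le 2^{-\Omega(\kappa)}$, and \Cref{lem:couplingbound} finishes the proof.
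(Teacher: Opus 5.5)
Your route is the paper's: the same coupling through shared reals $(r_1,r_2)$ truncated to $\kappa/2$ bits, exclusion of small $r_1$, a bound on the numerical discrepancy, and anti-concentration of $\alpha q\,\Phi$ near $\mathbb Z+\tfrac12$, finished by the coupling bound. Your use of $|\sqrt a-\sqrt b|\le\sqrt{|a-b|}$ is more careful than the paper's mean-value step. The paper bounds $|\partial_{u_1}\Phi|\le 1/(u_1\sqrt{-2\ln u_1})$ but then controls only the factor $1/u_1$. The factor $1/\sqrt{-2\ln u_1}$ is unbounded as $u_1\to1$: for $r_1=1-2^{-\kappa}$ the supremum on the segment is about $2^{\kappa/2}$, so the sup bound yields only $O(1)$. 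Your H\"older estimate is what actually handles that region.

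The step that does not close as written is the constant bookkeeping, and your fallback would not repair it. By your own bounds the dominant error is the input-truncation term. The $\ln$ moves by up to $2^{-\kappa/4}$, which after the H\"older step becomes about $2^{-\kappa/8}$. So $c'$ is at most about $1/8$, and you need $c<1/8$; your stated $c=1/4$ violates $c<c'$. But your window count over $|\Phi|\le 2^{\kappa/8}$ gives $O(q\,2^{\kappa/8}\delta)=2^{(1/8-c)\kappa+o(\kappa)}$, which needs $c>1/8$. Raising the internal precision to $C\kappa$ does not touch this term, because it comes from truncating the sampler's inputs to $\kappa/2$ bits, not from the polynomial or Newton steps.

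The fix is already in your condition 2. On the good event $r_1\ge u_1\ge 2^{-\kappa/4}$, so $|\Phi(r_1,r_2)|\le\sqrt{(\kappa/2)\ln 2}=O(\sqrt\kappa)$, and only $O(\alpha q\sqrt\kappa+1)$ half-integers matter. Simplest is to do what the paper does and drop the free constant $c$: take the window half-width to be exactly $\alpha q$ times your propagated error bound $2^{-c'\kappa}$. Each window then has probability at most $2\cdot 2^{-c'\kappa}/\sqrt{2\pi}$, since the factor $\alpha q$ cancels against the peak density $1/(\alpha q\sqrt{2\pi})$. The total is then $O\bigl((\alpha q\sqrt\kappa+1)\,2^{-c'\kappa}\bigr)=2^{-\Omega(\kappa)}$, using $\alpha q\le q=2^{o(\kappa)}$. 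This holds for every $c'>0$ and every value of $\alpha q$, so it also removes your appeal to $\alpha q\ge1$, which the lemma does not assume.
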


\begin{proof}
We bound the statistical distance via an explicit coupling.
The coupling uses the same underlying real randomness for the ideal Gaussian and for the finite-precision sampler, then shows that the two rounded values agree except on negligible events.
Let $(E,E')$ be jointly distributed over $\mathbb{Z}_q\times \mathbb{Z}_q$ as follows.

\noindent\textbf{Coupling distribution $(E,E')$.}
\begin{enumerate}
    \item Sample $(r_1,r_2)\sample (0,1)^2$. Let $u_1 \gets \Trunc_{\kappa/2}(r_1)$, $u_2 \gets \Trunc_{\kappa/2}(r_2)$.
    \item Let $e\gets \Phi(r_1,r_2)\in\mathbb{R}$.
    \item Let $e'\gets S_{\chi}(u_1, u_2)$.
    \item Let $\overline e \gets \lfloor \alpha q\cdot e\rceil \bmod q$,  $\overline e'\gets \lfloor \alpha q\cdot e'\rceil \bmod q$.
    \item Output $(\overline e,\overline e')$.
\end{enumerate}
By definition of the rounded Gaussian distribution, $E\sim \chi$.
Moreover, $E'\sim S_\chi(U_\kappa)$: indeed, if $r\sample (0,1)$ then $\Trunc_{\kappa/2}(r)$ is uniform over $[0, 1)_{(\kappa/2)}$.
Therefore, by the coupling bound (\Cref{lem:couplingbound}),
it suffices to show that
\[
\Pr[E\neq E'] \;\le\; 2^{-\Omega(\kappa)}.
\]
We first control the numerical discrepancy between $e$ and $e'$.

\begin{Claim}
\label{clm:boxmuller-stability}
Conditioned on the event $r_1\ge 2^{-\kappa/4}$, it holds that
\[
|e-e'|\le \delta,
\]
for $\delta = 2^{-\Omega(\kappa)}$.
\end{Claim}

\begin{proof}
Define
\[
e'' \;\coloneqq\; \Phi(u_1, u_2) \in \mathbb{R}.
\]
By the triangle inequality,
\begin{equation}
\label{eq:triangle}
|e-e'|\;\le\; |e-e''| + |e''-e'|.
\end{equation}
We bound the two terms on the right-hand side.

\noindent \textbf{Bounding $|e'' - e'|$.}
Denote $z_1 = -2\ln u_1, z_2 = \cos 2\pi u_2$.
It holds that there exists $\xi_1, \xi_2, \xi_3 \in 2^{-\Omega(\kappa)}$, such that
\begin{equation*}
\begin{split}
    |e'' - e'| &\leq \left|
    \left(\sqrt{z_1 + \xi_1} + \xi_2\right)\cdot \left(z_2 + \xi_3\right) - \sqrt{z_1}\cdot z_2 \right| \leq 2^{-\Omega(\kappa)} 
\end{split}
\end{equation*}
where the first inequality follows from \Cref{lemma:lnapprox}, \Cref{lemma:cosapprox}, \Cref{lemma:newton}, and the second inequality follows from $|z_1| \leq \kappa$, $|z_2| \leq 1$.

\noindent \textbf{Bounding $|e - e''|$.}
Consider the function $\Phi(u_1,u_2)=\sqrt{-2\ln u_1}\cdot \cos(2\pi u_2)$.
Consider segment between $(r_1,r_2)$ and $(u_1, u_2)$.
Let
\[
(u_1(t),u_2(t)) \coloneqq (r_1,r_2) + t\cdot\bigl(u_1-r_1,\; u_2-r_2\bigr),\qquad t\in[0,1].
\]
Then, by the fundamental theorem of calculus and the chain rule,
\begin{align*}
e''-e
&= \Phi(u_1(1),u_2(1))-\Phi(u_1(0),u_2(0))
= \int_0^1 \frac{d}{dt}\Phi(u_1(t),u_2(t))\,dt \\
&= \int_0^1 \Bigl(\frac{\partial \Phi}{\partial u_1}(u_1(t),u_2(t))\cdot (u_1-r_1)
+ \frac{\partial \Phi}{\partial u_2}(u_1(t),u_2(t))\cdot (u_2-r_2)\Bigr)\,dt.
\end{align*}
Taking absolute values yields
\begin{equation}
\label{eq:mvt}
|e''-e|
\le \sup_{t\in[0,1]}\Bigl|\frac{\partial \Phi}{\partial u_1}(u_1(t),u_2(t))\Bigr|\cdot |u_1-r_1|
 +\sup_{t\in[0,1]}\Bigl|\frac{\partial \Phi}{\partial u_2}(u_1(t),u_2(t))\Bigr|\cdot |u_2-r_2|.
\end{equation}
We now bound the partial derivatives conditioned on $r_1 \geq 2^{-\kappa/4}$.
A direct calculation gives
\[
\frac{\partial \Phi}{\partial u_1}(u_1,u_2)
= -\frac{\cos(2\pi u_2)}{u_1\sqrt{-2\ln u_1}},
\qquad
\frac{\partial \Phi}{\partial u_2}(u_1,u_2)
= -2\pi \sqrt{-2\ln u_1}\cdot \sin(2\pi u_2).
\]
Hence, using $|\cos|\le 1$ and $|\sin|\le 1$,
\[
\Bigl|\frac{\partial \Phi}{\partial u_1}(u_1,u_2)\Bigr|
\le \frac{1}{u_1\sqrt{-2\ln u_1}},
\qquad
\Bigl|\frac{\partial \Phi}{\partial u_2}(u_1,u_2)\Bigr|
\le 2\pi \sqrt{-2\ln u_1}.
\]
Since $r_1\ge 2^{-\kappa/4}$, and $|r_1 - u_1| \leq 2^{-\kappa/2}$, we have
$$u_1(t)\ge \min(r_1, u_1) \geq  r_1-2^{-\kappa/2}\ge \Omega(2^{-\kappa / 4})$$ 
Therefore,
\[
\frac{1}{u_1(t)} \le O(2^{\kappa/4}),
\qquad
\sqrt{-2\ln u_1(t)} \le \poly(\kappa).
\]
Plugging into \eqref{eq:mvt} and using $|u_j-r_j|\le 2^{-\kappa/2}$ for $j\in\{1,2\}$ gives
\[
|e''-e|
\le O(2^{\kappa / 4}) \cdot 2^{-\kappa/2} \;+\; \poly(\kappa) \cdot 2^{-\kappa/2}
\le  2^{-\Omega(\kappa)}.
\]
Combining this with $|e''-e'|\le 2^{-\Omega(\kappa)}$ in \eqref{eq:triangle}, we obtain that
\[
|e-e'| \le 2^{-\Omega(\kappa)}.
\]
\end{proof}

\medskip
Let $\Delta \coloneqq \alpha q\cdot \delta$. Since $\alpha<1$ and $\log q=o(\kappa)$, we have $\Delta = 2^{-\Omega(\kappa)}$.
Define the boundary set
\[
\BAD \;\coloneqq\; \bigcup_{t\in\mathbb{Z}}\Big[t+\tfrac12-\Delta,\;t+\tfrac12+\Delta\Big].
\]
On the event $r_1 \geq 2^{-\kappa / 4}$ we have $|\alpha q\cdot e - \alpha q\cdot e'|\le \Delta$ by \Cref{clm:boxmuller-stability}.
Therefore, if additionally $\alpha q\cdot e\notin \BAD$, then rounding to the nearest integer is stable under $\Delta$-perturbations, and hence
\[
\lfloor \alpha q\cdot e\rceil = \lfloor \alpha q\cdot e'\rceil,
\]
which implies $E=E'$ (mod~$q$).
Consequently,
\[
\Pr[E\neq E']
\le \Pr[r_1 < 2^{-\kappa /4}] + \Pr[\alpha q\cdot e\in \BAD].
\]
Since $r_1\sample (0,1)$, we have $\Pr[r_1 < 2^{-\kappa /4}] = 2^{-\kappa/4}$.
It remains to bound $\Pr[\alpha q\cdot e\in \BAD]$.

\begin{Claim}
\label{clm:gaussian-boundary}
$\Pr[\alpha q\cdot e\in\BAD]\le 2^{-\Omega(\kappa)}$.
\end{Claim}

\begin{proof}
Let $\sigma\coloneqq \alpha q$.
Since $e\sim\mathcal{N}(0,1)$, we have $\sigma e\sim \mathcal{N}(0,\sigma^2)$.
Let $\varphi_\sigma$ denote the density of $\mathcal{N}(0,\sigma^2)$.
Then
\[
\Pr[\sigma e\in\BAD]
= \sum_{t\in\mathbb{Z}} \int_{t+\frac12-\Delta}^{t+\frac12+\Delta}\varphi_\sigma(x)\,dx.
\]
Fix $T\coloneqq \lceil \kappa\sigma\rceil$ and split the sum into $|t|\le T$ and $|t|>T$.
For the central part,
\[
\sum_{|t|\le T}\int_{t+\frac12-\Delta}^{t+\frac12+\Delta}\varphi_\sigma(x)\,dx
\le (2T+1)\cdot (2\Delta)\cdot \varphi_\sigma(0)
\le O(\kappa\sigma)\cdot (2\Delta)\cdot \frac{1}{\sigma\sqrt{2\pi}}
\le 2^{-\Omega(\kappa)}.
\]
For the tail part,
\[
\sum_{|t|>T}\int_{t+\frac12-\Delta}^{t+\frac12+\Delta}\varphi_\sigma(x)\,dx
\le \Pr[|X|>T-1] \le \exp(-\Omega(\kappa^2)),
\]
where $X\sim\mathcal{N}(0,\sigma^2)$ and we used a standard Gaussian tail bound together with $T=\Theta(\kappa\sigma)$.
Combining the two bounds and using $\Delta=2^{-\Omega(\kappa)}$ yields
\[
\Pr[\sigma e\in\BAD]\le O(\kappa)\cdot \Delta + \exp(-\Omega(\kappa^2)) \le 2^{-\Omega(\kappa)}.
\]
\end{proof}
Putting everything together,
\[
\Pr[E\neq E'] \le 2^{-\kappa/4} + 2^{-\Omega(\kappa)} \le 2^{-\Omega(\kappa)}.
\]
Therefore, by \Cref{lem:couplingbound},
\[
\SD\bigl(\chi,\; S_{\chi}(U_{\kappa})\bigr)\le \Pr[E\neq E'] \le 2^{-\Omega(\kappa)}.
\]
\end{proof}

For the WPRF construction we use a bounded version of the sampler.
Boundedness is not needed for sampling accuracy by itself; it is needed in the WPRF security proof. Jumping ahead, the highest bit of a masked LWE sample should be stable after changing the sampled error.
The map below enforces this deterministically by replacing rare large outputs with zero.

\begin{definition}[$B$-Bounded Error Sampler \(S_{\chi}^{(B)}\)]
\label{definition:rounded-gaussian-sampler}
\normalfont
Let \(S_{\chi}\colon \{0,1\}^{\kappa}\to \mathbb{Z}_q\) be the function defined in \Cref{construct:rounded-gaussian-sampler}, and let \(0 < B < q\).
We define the \(B\)-bounded version of \(S_{\chi}\), denoted \(S_{\chi}^{(B)}\), by
\[
S_{\chi}^{(B)}(x)\coloneqq
\begin{cases}
S_{\chi}(x), & \text{if } |S_{\chi}(x)| \le B,\\
0, & \text{otherwise}.
\end{cases}
\]
\end{definition}

The low-depth evaluability and statistical accuracy of the sampler continue to hold for the bounded variant.

\begin{corollary}
\label{corollary:bounded-sampler-depth}
If \(\log q = O(\kappa)\), then the function \(S_{\chi}^{(B)}\) can be computed by a Boolean circuit of depth \(O(\poly(\log \kappa))\).
\end{corollary}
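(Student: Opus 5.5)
The plan is to reduce directly to \Cref{lemma:sampler-depth}. By \Cref{definition:rounded-gaussian-sampler}, $S_{\chi}^{(B)}$ is obtained from $S_{\chi}$ by one extra post-processing stage. The circuit for $S_{\chi}^{(B)}$ will therefore be the depth-$O(\poly(\log\kappa))$ circuit for $S_{\chi}$ from \Cref{lemma:sampler-depth}, followed by a comparison and a multiplexer. It suffices to show that this post-processing stage has depth $O(\poly(\log\kappa))$, since depths of sequentially composed circuits add.

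Concretely, let $v\in\mathbb{Z}_q$ be the output of $S_{\chi}$, represented in binary with $\lceil\log q\rceil = O(\kappa)$ bits as an element of $\{0,\dots,q-1\}$. Interpreting $|v|$ via the centered representative, the condition $|v|\le B$ is equivalent to $v\le B$ or $v\ge q-B$. The constants $B$ and $q-B$ can be hardwired, since we are in the non-uniform circuit model and they depend only on the parameters. Each comparison of an $O(\kappa)$-bit integer against a hardwired constant is computable in depth $O(\log\kappa)$ by the standard prefix/tree comparator. The OR of the two comparison bits then gives a selector bit $c$. The output is $c\wedge v_i$ for each bit $v_i$, which costs constant depth and realizes ``output $v$ if $c=1$, else $0$.''

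Summing, the total depth is $O(\poly(\log\kappa)) + O(\log\kappa) + O(1) = O(\poly(\log\kappa))$. The hypothesis $\log q = O(\kappa)$ is used exactly where it is needed in \Cref{lemma:sampler-depth}, and it also bounds the comparator's bit-length.

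There is no real obstacle here. The only point requiring care is fixing the convention for $|\cdot|$ on $\mathbb{Z}_q$, namely the centered representative in $(-q/2,q/2]$, so that the bound test becomes two unsigned comparisons against constants. If instead $S_{\chi}$ is implemented to retain the signed integer $\lfloor \alpha q e\rceil$ before reduction, the test becomes a single absolute-value comparison, which is again of depth $O(\log\kappa)$.
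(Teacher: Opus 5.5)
Your proposal is correct and follows the paper's proof: compose the depth-$O(\poly(\log\kappa))$ circuit for $S_{\chi}$ from \Cref{lemma:sampler-depth} with a comparison against $B$ and a conditional selection, each of depth $O(\poly(\log\kappa))$ when $\log q = O(\kappa)$. Your extra detail (the centered representative, two comparisons against hardwired constants, and AND-ing each output bit with the selector) simply spells out the paper's one-line argument.
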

\begin{proof}
Compute $S_\chi$, compare the resulting signed representative with $B$, and either keep the value or output $0$.
The comparison and conditional selection have depth $O(\poly(\log\kappa))$ when $\log q=O(\kappa)$, so the claim follows from \Cref{lemma:sampler-depth}.
\end{proof}

\begin{corollary}
\label{corollary:sampler-sd-real}
Let \(\kappa\in\N\), and let \(\chi=\chi_{\alpha q}\) be the rounded Gaussian distribution over \(\mathbb{Z}_q\).
Let \(\sigma=\alpha q\) and \(B=\sigma\cdot \kappa\).
If \(\log q=o(\kappa)\), then
\[
\SD\bigl(\chi,\; S_{\chi}^{(B)}(U_{\kappa})\bigr)\le 2^{-\Omega(\kappa)}.
\]
\end{corollary}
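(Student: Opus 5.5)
By the triangle inequality for statistical distance, I will compare both $\chi$ and $S_{\chi}^{(B)}(U_\kappa)$ to the intermediate distribution $S_{\chi}(U_\kappa)$:
\[
\SD\bigl(\chi, S_{\chi}^{(B)}(U_\kappa)\bigr) \le \SD\bigl(\chi, S_{\chi}(U_\kappa)\bigr) + \SD\bigl(S_{\chi}(U_\kappa), S_{\chi}^{(B)}(U_\kappa)\bigr).
\]
The first term is at most $2^{-\Omega(\kappa)}$ by \Cref{lemma:sampler-sd-real}, since $\log q = o(\kappa)$.

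For the second term I will use the coupling bound (\Cref{lem:couplingbound}), evaluating both samplers on the same input $x \sample U_\kappa$. The two outputs differ only when $|S_\chi(x)| > B$. Hence the second term is at most $\Pr_x\bigl[|S_\chi(x)| > B\bigr]$.

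To bound this probability, I compare it with the same event under $\chi$. The event $\{|\cdot| > B\}$, read on signed representatives in $\mathbb{Z}_q$, is a fixed subset of $\mathbb{Z}_q$. Its probability therefore changes by at most the statistical distance, so
\[
\Pr_x\bigl[|S_\chi(x)| > B\bigr] \le \Pr_{E\sim\chi}\bigl[|E| > B\bigr] + 2^{-\Omega(\kappa)}.
\]
Now write $E = \lfloor X \rceil \bmod q$ with $X \sim \mathcal N(0,\sigma^2)$. If $|X| \le B - 1$, then $|\lfloor X\rceil| \le B$, and since $B < q/2$ reduction mod $q$ preserves the signed representative. So $\Pr[|E| > B] \le \Pr[|X| > \sigma\kappa - 1]$. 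The standard Gaussian tail bound gives $\exp(-\Omega(\kappa^2))$ when $\sigma \ge 1$. If $\sigma < 1$, the threshold is still $\Omega(\kappa)$ standard deviations once $\kappa$ is large, and I would handle this case by a small adjustment to $B$ if needed.

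The only delicate step is the edge-case bookkeeping: the signed-representative convention and the requirement $B < q$. If $B \ge q/2$, the bounding step is vacuous and the claim follows directly from \Cref{lemma:sampler-sd-real}, since then $S_\chi^{(B)} = S_\chi$. Summing the three bounds gives $2^{-\Omega(\kappa)}$.
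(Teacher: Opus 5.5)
Your argument is correct, but it routes through a different intermediate distribution than the paper does.

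The paper passes through $\chi^{(B)}$, the image of $\chi$ under the same map that sends every value outside $[-B,B]$ to $0$. Two consequences follow.
\begin{itemize}
\item $\SD(\chi,\chi^{(B)})\le\Pr[|\chi|>B]$ is a Gaussian-tail statement about the ideal distribution alone.
\item $\SD(\chi^{(B)},S_\chi^{(B)}(U_\kappa))\le\SD(\chi,S_\chi(U_\kappa))$ follows from monotonicity of statistical distance under post-processing.
\end{itemize}
So the paper uses \Cref{lemma:sampler-sd-real} only once.

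You pass through $S_\chi(U_\kappa)$ instead. Your truncation event $\{|S_\chi(x)|>B\}$ therefore lives on the sampler side and must be transported back to $\chi$, which pays the sampler error a second time. This is harmless asymptotically. Your seed coupling plus event transfer is essentially a hands-on proof of the post-processing inequality that the paper simply cites.

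On your $\sigma<1$ aside: you cannot adjust $B$, since the statement fixes $B=\sigma\kappa$, but no adjustment is needed. Because the rounded value is an integer, $|E|>B$ forces $|X|\ge\lfloor B\rfloor+\tfrac12\ge\max(B-\tfrac12,\tfrac12)$. This is at least $\kappa/2$ standard deviations: if $\tfrac{1}{2\sigma}<\tfrac{\kappa}{2}$, then $\kappa-\tfrac{1}{2\sigma}>\tfrac{\kappa}{2}$. Hence $\Pr[|E|>B]\le 2e^{-\kappa^2/8}$ for every $\sigma>0$. The paper's bound via $\Pr[|X|>B-1]$ also silently glosses over this regime.
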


\begin{proof}
Let \(\chi^{(B)}\) denote the \(B\)-bounded rounded Gaussian distribution, obtained by first sampling from \(\chi\), and then replacing the sample by \(0\) whenever its absolute value exceeds \(B\).
By the triangle inequality,
\[
\SD\bigl(\chi,\; S_{\chi}^{(B)}(U_{\kappa})\bigr)
\le
\SD\bigl(\chi,\; \chi^{(B)}\bigr)
+
\SD\bigl(\chi^{(B)},\; S_{\chi}^{(B)}(U_{\kappa})\bigr).
\]

For the first term,
\[
\SD\bigl(\chi,\; \chi^{(B)}\bigr)
\le
\Pr[|\chi|>B].
\]
Moreover, since \(\chi\) is obtained by rounding a Gaussian \(X\sim\mathcal N(0,\sigma^2)\),
\[
\Pr[|\chi|>B]
\le
\Pr[|X|>B-1].
\]
Recalling that \(B=\sigma\kappa\), a standard Gaussian tail bound gives
\[
\Pr[|X|>B-1]
\le
2^{-\Omega(\kappa^2)}.
\]

For the second term, note that both \(\chi^{(B)}\) and \(S_{\chi}^{(B)}(U_{\kappa})\) are obtained from \(\chi\) and \(S_{\chi}(U_{\kappa})\), respectively, by applying the same deterministic post-processing map that sends every value outside \([-B,B]\) to \(0\). Therefore, by monotonicity of statistical distance under post-processing,
\[
\SD\bigl(\chi^{(B)},\; S_{\chi}^{(B)}(U_{\kappa})\bigr)
\le
\SD\bigl(\chi,\; S_{\chi}(U_{\kappa})\bigr)
\le
2^{-\Omega(\kappa)}.
\]

Combining the above bounds proves the corollary.
\end{proof}

\subsection[\texorpdfstring{WPRFs in $\mathsf{NC}^1$ from $\LWE$}{WPRFs in NC1 from LWE}]{WPRFs in $\mathsf{NC}^1$ from $\LWE$}

The $\mathsf{NC}^1$ WPRFs construction of Banerjee, Peikert, and Rosen~\cite{EC:BanerjeePR12} requires a super-polynomial LWE modulus.
We now give WPRFs in $\mathsf{NC}^1$ from polynomial-modulus $\LWE$.
The construction follows a chaining paradigm inspired by Kim~\cite{EC:Kim20}. Each layer we compute many (masked) LWE inner products, then use the highest order bits of the previous layer to deterministically sample fresh errors, and then add those errors to the next layer.
We start with $\kappa^{\tau}$ LWE samples, concatenated as a long vector.
The vector length shrinks by a factor of $\kappa$ at every layer, so after $\tau$ layers only one modulus-$q$ value remains.
The security proof then walks along the chain, replacing one layer at a time by uniform values (under $\LWE$ assumption).

The reason the sampler from the previous subsection is sufficient is that its error is negligible for $\kappa=\log^2\lambda$, while its depth is only $\poly(\log\kappa)=\poly(\log\log\lambda)$.
Thus the $\tau = \omega(1)$ sequential sampler calls still fit within $O(\log\lambda)$ depth.

\begin{construction}[WPRFs from $\LWE$]
\label{construct:weakprffromlwe}
{\normalfont
Let $\lambda \in \N$ be a security parameter.
We define the following parameters.
\begin{itemize}
    \item $n = \lambda^c$ for $c > 1$, the LWE dimension.
    \item $\kappa = \log^2 \lambda$, the bit precision for sampling (continuous) normal distribution.
    \item $\chi = \chi_{\sigma}$, for $\sigma > 0$, the LWE error distribution (rounded Gaussian).

    \item $q = 4 \cdot B \cdot \lambda$, the LWE modulus, where $B = \sigma \cdot \kappa$ is the LWE error bound.
    
    \item $\tau = \log \log \lambda$, the chaining parameter.
\end{itemize}
The WPRF is defined as
$$
\ff G_{\lambda} = \{G_k \colon \Z_q^{n} \to \Z_q, k \in \Set K_{\lambda}\}.
$$

\medskip

\noindent  \textbf{Shorthand Notations.}
Define a sequence of  parameters
\[
\kappa^{(i)} \;\coloneqq\; \kappa^{\,\tau - i} \qquad\text{for } i \in \{0, \dots, \tau\}.
\]
In particular, $\kappa^{(0)}=\kappa^{\tau}$ and $\kappa^{(\tau)} = 1$.

\medskip

\noindent \textbf{Building Blocks.}
\begin{enumerate}
  \item A ($B$-bounded) rounded Gaussian error sampler $S^{(B)}_{\chi} \colon \bina^{\kappa} \to \Z_q$ (\Cref{definition:rounded-gaussian-sampler}).
\end{enumerate}

When $S^{(B)}_\chi$ is applied to a string $r\in\bina^{\kappa\cdot h}$, we split $r$ into $h$ consecutive $\kappa$-bit blocks and apply $S^{(B)}_\chi$ independently to each block, obtaining an element of $\Z_q^h$.
In particular, since $\kappa^{(i-1)}=\kappa\cdot\kappa^{(i)}$, the expression $S^{(B)}_\chi(r^{(i)})$ in the evaluation algorithm below denotes an element of $\Z_q^{\kappa^{(i)}}$.

\medskip

\noindent  \textbf{Function Key.}
$$
\Set K_{\lambda} = 
\prod_{i = 1}^{\tau} \Z_q^{\kappa^{(i)} \times n} \times \Z_q^{\kappa^{(i)}}
$$
In words, a key for $\ff G_{\lambda}$ is $\key_{\ff G} = (\mat S^{(i)}, \vc p^{(i)})_{i \in [\tau]}$, where
\begin{itemize}
  \item $\mat S^{(i)} = (\vc s^{(i)}_1, \dots, \vc s^{(i)}_{\kappa^{(i)}}) \sample \Z_q^{\kappa^{(i)} \times n}$: LWE secrets.
  \item $\vc p^{(i)} \sample \Z_q^{\kappa^{(i)}}$: Padding vector.
\end{itemize}

\medskip

\noindent \textbf{Evaluation.}
On input $\vc a \in \Z_q^{n}$, 
\begin{enumerate}
        \item $\vc z^{(i)} \gets \mat S^{(i)} \cdot \vc a + \vc p^{(i)} \in \Z_q^{\kappa^{(i)}}$ for $i \in [\tau]$ in parallel.
        \item $\vc y^{(0)} \gets \vc 0 \in \Z_q^{\kappa^{(0)}}$.
        \item For $i = 1, \dots, \tau$ sequentially:
        \begin{enumerate}
            \item $r^{(i)} \gets \lceil \vc y^{(i-1)}  \rfloor_2 \in \bina^{\kappa^{(i-1)}}$
            \item $\vc e^{(i)} \gets S^{(B)}_{\chi}(r^{(i)}) \in \Z_q^{\kappa^{(i)}}$
            \item $\vc y^{(i)} \gets \vc z^{(i)}  + \vc e^{(i)}  \in \Z_q^{\kappa^{(i)}}$
        \end{enumerate}
        \item Output $y \gets \vc y^{(\tau)} \in \Z_q$.
    \end{enumerate}
}
\end{construction}

The padding vectors $\vc p^{(i)}$ make the affine terms $\vc z^{(i)}$ uniform when the key is random.
The only dependence across layers is through the rounded bit string $r^{(i)}$, which selects the deterministic error vector for the next layer.
The bounded-error property ensures that, except with small probability, changing the initial state of the chain does not change these selected bit strings; the $\LWE$ hybrids then replace the masked affine terms by uniform values layer by layer.

\begin{lemma}[Log-Depth Evaluation]
\label{lemma:depthofweakprfsfromlwe}
The function family $\ff G_{\lambda}$ defined in \Cref{construct:weakprffromlwe} can be computed in depth
$O(\log n + \tau\cdot (\poly(\log \kappa) + \log \log q)) = O(\log \lambda)$.
\end{lemma}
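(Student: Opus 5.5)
We bound the depth of each phase of the evaluation in \Cref{construct:weakprffromlwe} separately and then add them, since only the chain in Step 3 is sequential.

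\textbf{Step 1 (the affine layer).} All $\vc z^{(i)}=\mat S^{(i)}\cdot\vc a+\vc p^{(i)}$ are computed in parallel, with the key hardwired. Each coordinate is an inner product over $\Z_q$ of length $n$, plus one addition. Multiplying $\log q$-bit numbers takes depth $O(\log\log q)$, for example with Wallace-tree or carry-save multiplication. Summing $n+1$ terms takes depth $O(\log n+\log\log q)$: use an $O(\log n)$-depth carry-save (3-to-2) adder tree, then one final carry-lookahead addition of $O(\log n+\log q)$-bit numbers in depth $O(\log\log(nq))$. The final reduction mod $q$ is a reduction of a number with $O(\log n+\log q)$ bits, which is doable in $O(\log\log(nq))$ depth with the modulus hardwired. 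Alternatively, reduce each product mod $q$ first and then sum. Either way this stage costs $O(\log n+\log\log q)$. It is independent of $\vc y^{(i-1)}$, so it is not repeated along the chain.

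\textbf{Step 3 (the chain).} There are $\tau$ sequential iterations. In each one:
\begin{itemize}
\item Step (a) extracts the top bit of each coordinate. This is a comparison with the constant $q/2$, of depth $O(\log\log q)$.
\item Step (b) applies $S^{(B)}_\chi$ to $\kappa$-bit blocks in parallel. By \Cref{corollary:bounded-sampler-depth} this has depth $\poly(\log\kappa)$, since $\log q=O(\log\lambda)=O(\kappa)$.
\item Step (c) is a coordinatewise addition mod $q$, of depth $O(\log\log q)$.
\end{itemize}
Each iteration therefore costs $\poly(\log\kappa)+O(\log\log q)$, giving the total $O(\log n+\tau(\poly(\log\kappa)+\log\log q))$.

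\textbf{Plugging in parameters.} We have $n=\lambda^c$, so $\log n=O(\log\lambda)$. Next, $\kappa=\log^2\lambda$, so $\poly(\log\kappa)=\poly(\log\log\lambda)$. Also $\log q=O(\log\lambda)$ provided $\sigma\le\poly(\lambda)$, which is the polynomial-ratio regime. Hence $\log\log q=O(\log\log\lambda)$. With $\tau=\log\log\lambda$ the chain contributes $\poly(\log\log\lambda)=o(\log\lambda)$, and the bound is $O(\log\lambda)$.

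\textbf{Main obstacle.} The delicate point is the inner-product step. A naive sequential modular accumulation would cost $\Theta(\log n\cdot\log\log q)$, so I will insist on the carry-save argument to keep it at $O(\log n+\log\log q)$. I also need the fixed-point arithmetic inside the sampler to be genuinely $\poly(\log\kappa)$ depth on $\poly(\kappa)$-bit numbers, which is taken from \Cref{lemma:sampler-depth}.
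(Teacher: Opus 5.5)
Your proposal is correct and uses the same decomposition as the paper. The affine forms $\vc z^{(i)}$ are computed once, in parallel, in depth $O(\log n+\log\log q)$, and each of the $\tau$ sequential chain steps costs $\poly(\log\kappa)+O(\log\log q)$ via \Cref{corollary:bounded-sampler-depth}. Three of your additions make explicit what the paper summarizes as ``standard integer arithmetic'': the carry-save justification that the inner product avoids an $O(\log n\cdot\log\log q)$ blow-up, the $O(\log\log q)$ rather than constant cost of extracting $\lceil\cdot\rfloor_2$ when $q$ is not a power of two, and the proviso $\sigma\le\poly(\lambda)$ behind $\log\log q=O(\log\log\lambda)$.
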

\begin{proof}
    All affine forms $\vc z^{(i)}=\mat S^{(i)}\vc a+\vc p^{(i)}$ are computed in parallel.
Each such computation is a matrix-vector multiplication over $\Z_q$ and has depth $O(\log n+\log\log q)$ using standard integer arithmetic.
The only sequential part is the chain over $i=1,\ldots,\tau$.
At each layer, extracting $\lceil\vc y^{(i-1)}\rfloor_2$ is constant depth, all coordinatewise calls to $S^{(B)}_\chi$ run in parallel and have depth $\poly(\log\kappa)$ by \Cref{corollary:bounded-sampler-depth}, and the final addition over $\Z_q$ has depth $O(\log\log q)$.
Thus the total depth is
\[
O\bigl(\log n+\tau\cdot(\poly(\log\kappa)+\log\log q)\bigr)=O(\log\lambda),
\]
for the parameters in \Cref{construct:weakprffromlwe}.
\end{proof}

The theorem below formalizes the chaining argument.
There are two losses: the $\tau\cdot\kappa^\tau\cdot\epsilon$ term comes from the $\LWE$ hybrids across all coordinates and layers, while the remaining terms account for the coupling error of the initial chain state and the statistical error of the bounded sampler.

\begin{theorem}
\label{theorem:securityofweakprfsfromlwe}
Let all parameters be as in~\Cref{construct:weakprffromlwe}.
For any $m, t \in \N$, if $(m, t, \epsilon)$-$\LWE_{n, q, \chi}$ holds, then $\ff G_{\lambda}$ in \Cref{construct:weakprffromlwe} is a  $(m, t - \poly(\lambda), \epsilon')$-WPRF, where
$$
\epsilon' =  \tau \cdot \kappa^{\tau} \cdot \epsilon + m\cdot \left(\left(\kappa^{\tau}\cdot \frac{1}{\lambda}\right)^{\tau-1} +  \tau \cdot 2^{-\Omega(\kappa)} \right)
$$
In particular, if $\LWE$ assumption holds, then $\ff G$ is a secure WPRF.
\end{theorem}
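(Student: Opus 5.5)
The plan is a hybrid argument along the chain, in the style of Kim's chaining proof. Fix $m$ random inputs $\vc a_1,\dots,\vc a_m\sample\Z_q^n$ and stack them as the columns of $\mat A\in\Z_q^{n\times m}$. For layer $i$, define $\mat Z^{(i)}\coloneqq \mat S^{(i)}\mat A+\vc p^{(i)}\vc 1^{\top}\in\Z_q^{\kappa^{(i)}\times m}$. The output row is determined by $\mat A$, the matrices $\mat Z^{(1)},\dots,\mat Z^{(\tau)}$, and the deterministic maps $r\mapsto S^{(B)}_\chi(r)$.

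The hybrids $\game H_{i}$ for $i=0,\dots,\tau$ work as follows. In $\game H_i$, layers $1,\dots,i$ are computed as in the real scheme, but the state $\vc y^{(i)}$ on every input is replaced by a fresh uniform vector in $\Z_q^{\kappa^{(i)}}$. Layers $i+1,\dots,\tau$ are then computed honestly from that state. Thus $\game H_\tau$ gives uniform outputs, and the real game is $\game H_0$ with the all-zero initial state.

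For the step from $\game H_{i-1}$ to $\game H_i$, I would proceed in three moves.

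First, replace the error $\vc e^{(i)}=S^{(B)}_\chi(\lceil \vc y^{(i-1)}\rfloor_2)$ by true rounded-Gaussian errors. This is valid because in $\game H_{i-1}$ (for $i\ge2$) the state $\vc y^{(i-1)}$ is uniform in $\Z_q^{\kappa^{(i-1)}}$. Its top bits are then nearly uniform; since $q$ is even, they are exactly uniform. By \Cref{corollary:sampler-sd-real}, this costs $m\kappa^{(i)}2^{-\Omega(\kappa)}$ in statistical distance over all coordinates and inputs.

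Second, with genuine errors, each row of $\mat S^{(i)}\mat A+\mat E$ is an LWE instance with a shared $\mat A$. Apply $\LWE$ row by row, giving $\kappa^{(i)}\le\kappa^\tau$ hybrids each costing $\epsilon$; the padding $\vc p^{(i)}$ is simply sampled by the reduction. This yields the $\tau\kappa^\tau\epsilon$ term.

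Third, after this replacement $\vc y^{(i)}$ is uniform, which is exactly $\game H_i$. Each reduction simulates the other layers by sampling their keys itself, so it runs in $\poly(\lambda)$ time.

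The main obstacle is the first layer, where the initial state is $\vc 0$ rather than uniform. There the errors $S^{(B)}_\chi(\lceil\vc 0\rfloor_2)$ are a fixed constant, not Gaussian, so LWE cannot be applied directly at layer $1$. My plan is to treat layer $1$ as purely a uniform mask: $\vc z^{(1)}=\mat S^{(1)}\vc a+\vc p^{(1)}$ is pseudorandom, with no noise needed. A cleaner variant, and my first attempt, is to start the hybrid at layer $2$ and use bounded-error stability. The term $(\kappa^\tau/\lambda)^{\tau-1}$ suggests arguing that a perturbation of size at most $B$ flips a top bit with probability about $2B/q=1/(2\lambda)$. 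Hence, switching between "state derived from the real chain" and "state derived from an independent uniform start" leaves $r^{(i)}$ unchanged except with probability $\kappa^{(i-1)}/\lambda$ per layer. Compounding this across the $\tau-1$ later layers, and union bounding over the $m$ inputs, gives $m(\kappa^\tau/\lambda)^{\tau-1}$.

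Concretely, I would couple the real chain with one in which $\vc y^{(1)}$ is uniform, then show by induction that the rounded bit-strings agree until some coordinate lands within $B$ of a rounding boundary. I expect getting the exact exponent on this coupling term right to be the most delicate bookkeeping. The final parameter check is that $\kappa^\tau=2^{2\log\log\lambda\cdot\log\log\lambda}=\lambda^{o(1)}$, so the bound is negligible.
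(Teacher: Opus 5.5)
Your concrete route matches the paper's proof. You couple the real chain to one started from a uniform state, then replace one layer at a time via $\LWE$, paying the sampler's $2^{-\Omega(\kappa)}$ wherever a uniform state feeds $S^{(B)}_\chi$. A few points need correcting, though.

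First, drop the ``uniform mask'' idea for layer $1$. Across $m$ inputs sharing a key, $\mat S^{(1)}\vc a_j+\vc p^{(1)}$ is an affine function of the public $\vc a_j$, so for $m\gg n$ it is far from independent uniform vectors. It is uniform only input by input, which is exactly why a statistical coupling is needed at the start.

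Second, the coupling invariant is one of \emph{merging}, not ``agree until''. The two chains begin with different rounded strings, and once $r^{(i)}=\widetilde r^{(i)}$ they coincide forever. Given $r^{(i-1)}\neq\widetilde r^{(i-1)}$, they fail to merge only if some coordinate of $\vc z^{(i-1)}$ lies within $B$ of a rounding boundary; this vector is uniform and independent of the history thanks to $\vc p^{(i-1)}$. The bound is therefore a product of these conditional probabilities. A single-layer (or union-bound) estimate is only inverse-polynomial, so the compounding is essential.

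Third, the exponent. If you couple against a uniform $\vc y^{(1)}$, then $r^{(2)}$ and $\widetilde r^{(2)}$ are rounded from independent uniform vectors, not from a shared $\vc z^{(1)}$ plus bounded noise. The first merging opportunity is then at $r^{(3)}$, and you get $(\kappa^{\tau}/\lambda)^{\tau-2}$. That is still negligible, so your argument goes through with this term. The paper instead couples $G$ against $\tilde G^{(0)}$, which starts from a uniform $\vc y^{(0)}$. This gives $\tau-1$ merging opportunities, hence the stated $(\kappa^{\tau}/\lambda)^{\tau-1}$. It also removes the layer-$1$ obstacle you identified: a uniform $\vc y^{(0)}$ has uniform top bits, so $\vc e^{(1)}$ is near-$\chi$ and the $\LWE$ step applies at layer $1$ exactly as at the others. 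That is why the $\tau\kappa^{\tau}\epsilon$ term counts $\tau$ rather than $\tau-1$ $\LWE$ steps.
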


The proof, given in \Cref{proofoftheorem:securityofweakprfsfromlwe}, follows Kim's hybrid strategy~\cite{EC:Kim20} with the bounded sampler above in place of an ideal error sampler.
The sampler accuracy contributes the $2^{-\Omega(\kappa)}$ term, and boundedness is used in the coupling step that compares the real chain with the chain started from a uniform intermediate value.

Combining \Cref{lemma:depthofweakprfsfromlwe} and \Cref{theorem:securityofweakprfsfromlwe}, we have the following corollary.
\begin{corollary}[$\mathsf{NC}^1$ WPRFs from $\LWE$]
\label{corollary:weakprfsfromlwe}
    There exists WPRFs in $\mathsf{NC}^1$ if there exists $\epsilon \geq 0$, such that $\LWE_{\func \alpha}$ assumption holds for $\func \alpha(n) = 1/n^{\epsilon}$.
\end{corollary}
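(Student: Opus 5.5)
The plan is to combine \Cref{lemma:depthofweakprfsfromlwe} (depth) with \Cref{theorem:securityofweakprfsfromlwe} (security), once we fix parameters that match the $\LWE_{\func\alpha}$ assumption. Given $\epsilon\ge 0$ with $\LWE_{\func\alpha}$ hard for $\func\alpha(n)=1/n^{\epsilon}$, we instantiate \Cref{construct:weakprffromlwe} as follows. Take $n=\lambda^c$ for a constant $c>1$, $\kappa=\log^2\lambda$, $\tau=\log\log\lambda$, and $q=4B\lambda$ with $B=\sigma\kappa$. We then choose $\sigma$ so that $\sigma/q=\alpha(n)$ holds up to the constraint $q=4\sigma\kappa\lambda$, i.e. $\alpha = 1/(4\kappa\lambda)$. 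Since $1/(4\kappa\lambda)\ge n^{-\epsilon}=\lambda^{-c\epsilon}$ once $c\epsilon>1$, we will pick $c>\max(1,1/\epsilon)$ when $\epsilon>0$. Larger noise ratio $\alpha$ only makes LWE harder, up to a standard noise-flooding/monotonicity argument, or else we simply pick $\sigma$ polynomial so that the modulus is polynomial. If $\epsilon=0$ the statement is degenerate, and we treat it as any constant ratio, which is handled the same way. With these choices $q=\poly(\lambda)$, so LWE with polynomial modulus applies.

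For depth, $\log q=O(\log\lambda)=o(\kappa)$, so \Cref{corollary:bounded-sampler-depth} and \Cref{lemma:depthofweakprfsfromlwe} give depth $O(\log n+\tau(\poly(\log\kappa)+\log\log q))$. Here $\tau\cdot\poly(\log\kappa)=\log\log\lambda\cdot\poly(\log\log\lambda)=o(\log\lambda)$, so the family is a $(\poly(\lambda),O(\log\lambda))$ ensemble and hence lies in $\mathsf{NC}^1$. Key and input sizes are polynomial because $\kappa^\tau=(\log\lambda)^{2\log\log\lambda}=\lambda^{o(1)}$.

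For security we apply \Cref{theorem:securityofweakprfsfromlwe}. For any polynomials $m,t$, the $\LWE_{\func\alpha}$ assumption supplies a negligible $\epsilon(n)$ for sample count $m$ and time $t+\poly(\lambda)$. We then check that each term of $\epsilon'$ is negligible. First, $\tau\kappa^\tau\epsilon=\lambda^{o(1)}\cdot\mathrm{negl}$ is negligible. Second, $(\kappa^\tau/\lambda)^{\tau-1}\le(\lambda^{-1/2})^{\log\log\lambda-1}$ is superpolynomially small. Third, $\tau 2^{-\Omega(\log^2\lambda)}$ is negligible. Multiplying the last two terms by the polynomial $m$ keeps them negligible.

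The main obstacle is the parameter bookkeeping linking the rigid relation $q=4\sigma\kappa\lambda$ to the prescribed ratio $\alpha=n^{-\epsilon}$. In particular, we must ensure that the hardness assumed at ratio $n^{-\epsilon}$ implies hardness at the construction's ratio $1/(4\kappa\lambda)$, which is not exactly a power of $n$. We would handle this by choosing $c$ large enough that $1/(4\kappa\lambda)\ge n^{-\epsilon}$. We would then invoke the standard fact that LWE hardness is monotone in the noise, via adding extra rounded-Gaussian noise, whose sum is statistically close to the rounded Gaussian of the larger parameter. Alternatively, we could adjust $\sigma$ and $q$ by polylog factors so that the ratio matches exactly.
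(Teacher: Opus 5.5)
Your skeleton matches the paper's, which simply combines \Cref{lemma:depthofweakprfsfromlwe} with \Cref{theorem:securityofweakprfsfromlwe}. Your depth count and term-by-term negligibility check are fine. You also correctly spot a point the paper passes over: \Cref{construct:weakprffromlwe} forces $\sigma/q=1/(4\kappa\lambda)$, which is not of the form $n^{-\epsilon}$.

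However, the bridge you build for this mismatch fails. Adding independent rounded-Gaussian noise does not give anything negligibly close to a rounded Gaussian. Up to exponentially small aliasing terms, the characteristic function of $\lfloor X\rceil$ with $X\sim\mathcal N(0,s^2)$ is $e^{-s^2t^2/2}\cdot\frac{\sin(t/2)}{t/2}$. A sum of two rounded Gaussians therefore carries the factor $\frac{\sin(t/2)}{t/2}$ twice, while any single rounded Gaussian carries it once. That factor is not Gaussian in $t$, so no choice of parameters cancels it. At $t\approx 1/s$ the mismatch is of order $s^{-2}$, and still of order $s^{-4}$ after matching variances. Since $|\phi_P(t)-\phi_Q(t)|\le 2\,\SD(P,Q)$, the statistical distance is inverse-polynomial, because $\LWE_{\alpha}$ only concerns polynomial moduli. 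Noise flooding needs a super-polynomial gap, so it does not apply either. Monotonicity could perhaps be rescued with discrete-Gaussian noise instead, but that would need its own proof.

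You do not need any change of noise rate. Take the rate to be exactly $\alpha=n^{-\epsilon}$ with $n=\lambda^{c}$ and $\epsilon>0$; any $c>1$ works, so $c\epsilon>1$ is not needed. Choose any polynomial $q$ and set $\sigma=\alpha q$, $B=\sigma\kappa$. In other words, drop the equality $q=4B\lambda$.

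In the proof of \Cref{theorem:securityofweakprfsfromlwe}, that equality is used only in the boundary estimate of \Cref{claim:h0h1collision}, namely $\Pr\bigl[z\in[-B,B]\cup[q/2-B,q/2+B]\bigr]\le 4B/q$. With the new parameters, $4B/q=4\kappa n^{-\epsilon}=\lambda^{-c\epsilon+o(1)}$. The coupling term becomes $m\cdot(\kappa^{\tau}\cdot 4B/q)^{\tau-1}=m\cdot\lambda^{-\Omega(\log\log\lambda)}$, which is still negligible. The $\LWE$ hybrids now use exactly the assumed rate, and $B<q$ and $\log q=o(\kappa)$ still hold.

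This is what your \emph{adjust $\sigma$ and $q$} alternative must amount to. As long as $q=4\sigma\kappa\lambda$ is kept, no adjustment of $\sigma$ and $q$ changes the ratio. Also, for $c\epsilon>1$ the gap to $n^{-\epsilon}$ is polynomial, not polylogarithmic.

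Finally, $\epsilon=0$ is not \emph{handled the same way}. A constant rate gives $4B/q=\Theta(\kappa)>1$, so the boundary bound is vacuous and even $B<q$ fails. Hardness at a large rate also gives nothing at the smaller rate the construction needs. The statement should be read with $\epsilon>0$.
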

\begin{proof}
    Under the stated inverse-polynomial $\LWE$ assumption, \Cref{theorem:securityofweakprfsfromlwe} gives weak pseudorandomness for the family in \Cref{construct:weakprffromlwe}; the choice $\kappa=\log^2\lambda$ makes the sampler error negligible.
By \Cref{lemma:depthofweakprfsfromlwe}, the same family is computable in depth $O(\log\lambda)$, and hence lies in $\mathsf{NC}^1$.
\end{proof}

Combining \Cref{corollary:weakprfsfromlwe} and \Cref{corollary:main}, we have the following corollary.
\begin{corollary}[$\mathsf{NC}^1$ PRFs from $\LWE$]
There exists PRFs in $\mathsf{NC}^1$ if there exists $\epsilon \geq 0$, such that $\LWE_{\func \alpha}$ assumption holds for $\func \alpha(n) = 1/n^{\epsilon}$.
\end{corollary}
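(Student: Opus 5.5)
This follows by composing two earlier results. First, \Cref{corollary:weakprfsfromlwe} gives a WPRF family $\ffe G$ (from \Cref{construct:weakprffromlwe}) in $\mathsf{NC}^1$, under the assumption that $\LWE_{\func\alpha}$ holds for $\func\alpha(n)=1/n^{\epsilon}$. Second, \Cref{corollary:Depth-PreservingBootstrapping}, applied with $d(\lambda)=c\log\lambda$, turns any $\mathsf{NC}^1$ WPRF into an $\mathsf{NC}^1$ PRF; note that $d(\lambda)/\log\lambda$ is then constant, hence non-decreasing. So the plan is to check that the hypotheses of the second result are met by the output of the first.

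The steps, in order, are as follows.
\begin{enumerate}
\item Massage the WPRF into the required syntax. \Cref{construct:weakprffromlwe} outputs one element of $\Z_q$ on an input in $\Z_q^n$. We encode inputs as bit strings of length $p(\lambda)=n\lceil\log q\rceil$. For the output, we keep a single bit: for instance, the top bit of an element that is uniform up to negligible bias. Alternatively, we can use the low-order bit, since $q$ is chosen by us and can be taken a power of two. Either choice makes weak pseudorandomness carry over to a one-bit output by \Cref{lemma:RangeRestrictionpreservepseudorandom}, up to the negligible bias.
\item Apply \Cref{lemma:keyuniformization}. This reparameterizes the family to key length $\lambda$ and output length $\lambda^{1/2}$, adds a one-time pad, and takes products. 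The result is a key-uniform WPRF of depth $d+1=O(\log\lambda)$.
\item Invoke \Cref{lemma:depthofTGGM} and \Cref{corollary:main} on \Cref{construct:framework}. This yields a polynomially secure PRF $\ffe F'$ of depth $O(\log\lambda)$, i.e.\ in $\mathsf{NC}^1$.
\end{enumerate}

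The main obstacle is the bookkeeping in the reparameterization of step 2. \Cref{construct:framework} evaluates $\ff G$ at the smaller security parameters $\lambda^{(i)}$, and it needs negligible advantage only at levels $i\le\ell'_q$, where $\lambda^{(i)}$ is still polynomially related to $\lambda$. I would therefore verify two things. First, the LWE-based WPRF instantiated at parameter $\lambda^{(i)}$ (with dimension $n=(\lambda^{(i)})^c$) still enjoys negligible-in-$\lambda$ advantage against $\poly(\lambda)$ adversaries. This holds because $\LWE_{\func\alpha}$ is assumed for every polynomial modulus, sample count and running time, and a polynomial in $\lambda$ is a polynomial in $\lambda^{(i)}$ when $i$ is constant. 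Second, the depth bound $O(\log\lambda^{(i)})$ of \Cref{lemma:depthofweakprfsfromlwe} holds uniformly in $i$, so that the geometric sum in \Cref{lemma:depthofTGGM} applies. The deeper levels, with $i>\ell'_q$, only require key-uniformity, which holds perfectly thanks to the pad. Consequently their loss of computational security is irrelevant, as in the proof of \Cref{theorem:main}.
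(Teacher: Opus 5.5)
Your proposal is correct and follows the paper's argument, which is just the composition of \Cref{corollary:weakprfsfromlwe} with the generic weak-to-strong transformation. The paper cites \Cref{corollary:main}, but your citation of \Cref{corollary:Depth-PreservingBootstrapping}, which also covers key-uniformization and depth, is the more precise one. One small correction to your step 1: for a general polynomial modulus, the top bit (or rounding bit) of a uniform element of $\Z_q$ has inverse-polynomial bias, not negligible bias, so you should commit to the power-of-two $q$ option, which also makes your bit-encoding of $\Z_q^n$ inputs exact.
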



\begin{corollary}[$\mathsf{NC}^1$ PRFs from $\GapSVP$]
There exists PRFs in $\mathsf{NC}^1$ if $\GapSVP$ is hard to approximate within $\tilde{O}(n^{1+\epsilon})$ factor for any $\epsilon \geq 0$.
\end{corollary}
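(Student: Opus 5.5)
The corollary should follow by chaining two reductions: first a known worst-case-to-average-case reduction from $\GapSVP$ to $\LWE_{\func\alpha}$, and then the preceding corollary ($\mathsf{NC}^1$ PRFs from $\LWE$).

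For the first link we use Regev's quantum reduction~\cite{STOC:Regev05,JACM:Regev09}, or the classical reduction of~\cite{STOC:BrakerskiLPRS13}. In the form we need, it says the following. Suppose $\alpha q\ge 2\sqrt n$ and $q=\poly(n)$. Then solving $\LWE_{n,q,\chi_{\alpha q}}$ on average, with continuous Gaussian error, is at least as hard as approximating $\GapSVP$ within factor $\tilde O(n/\alpha)$ in the worst case. Setting $\alpha(n)=1/n^{\epsilon}$ gives approximation factor $\gamma=\tilde O(n^{1+\epsilon})$, which is exactly the hypothesis of the corollary.

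Two bookkeeping points must be checked to plug this into \Cref{def:lwe}.

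\textbf{Error distribution.} Our definition uses the rounded Gaussian $\chi_{\alpha q}$. Rounding samples $(\vc a,b)$ with continuous error to $(\vc a,\lfloor b\rceil)$ converts continuous-Gaussian $\LWE$ samples into rounded-Gaussian ones and maps uniform samples to uniform samples, as in Lemma~4.3 of~\cite{STOC:Regev05}. So rounded-Gaussian $\LWE$ is at least as hard.

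\textbf{Quantifiers over the modulus.} \Cref{def:lwe} quantifies over every polynomially bounded modulus $q(n)$. The reduction, however, only directly covers moduli with $\alpha q\ge 2\sqrt n$. I therefore check that the modulus used in \Cref{construct:weakprffromlwe} is in the good regime. There $q=4\sigma\kappa\lambda$ with $\sigma=\alpha q$. This is circular as written, so read it as $\alpha=\sigma/q=1/(4\kappa\lambda)$. With $n=\lambda^c$ this gives $\alpha=\tilde\Theta(n^{-1/c})$, so we choose $c\ge 1/\epsilon$ to get $\alpha\ge 1/n^{\epsilon}$. We then pick $\sigma=2\sqrt n$, giving $q=\poly(n)$ and $\alpha q\ge 2\sqrt n$. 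With this choice the reduction applies to exactly the $\LWE$ instance the WPRF uses.

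Decision versus search is handled by the standard search-to-decision reduction for polynomial $q$ (for prime $q$, or via~\cite{STOC:BrakerskiLPRS13}).

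The rest is routine. \Cref{theorem:securityofweakprfsfromlwe} turns this specific $\LWE$ hardness into a WPRF in $\mathsf{NC}^1$. \Cref{corollary:Depth-PreservingBootstrapping} then yields the PRF in $\mathsf{NC}^1$.

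The main obstacle is the parameter consistency in the last check. The WPRF's error bound and modulus have to be tuned so that the same $(n,q,\alpha)$ satisfies the reduction's constraint $\alpha q\ge 2\sqrt n$ and also gives approximation factor $\tilde O(n^{1+\epsilon})$. Everything else is citing the earlier results.
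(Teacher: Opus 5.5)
Your proposal is correct and follows the paper's route. You apply the Regev/BLPRS worst-case-to-average-case reduction with $\alpha(n)=1/n^{\epsilon}$ to get the $\tilde O(n^{1+\epsilon})$ approximation factor for $\GapSVP$, then invoke the $\LWE$-based $\mathsf{NC}^1$ PRF result. Your extra bookkeeping is a real tightening of the paper's one-line proof: you tune $c$ and $\sigma$ so that the specific instance used by the WPRF construction satisfies $\alpha q\ge 2\sqrt n$, instead of relying on $\LWE_{\alpha}$ for every polynomial modulus as the paper implicitly does, and the reduction does not cover arbitrary moduli (for $\alpha q\ll 1$ the problem is easy).
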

\begin{proof}
The standard worst-case-to-average-case reductions for $\LWE$ with rounded Gaussian errors~\cite{STOC:Regev05,STOC:BrakerskiLPRS13} base $\LWE_{\alpha}$ on the hardness of lattice problems, including $\GapSVP$, within approximation factor $\tilde O(n/\alpha)$.
Substituting $\alpha(n)=1/n^\epsilon$ and applying the previous corollary gives the claim.
\end{proof}

\section[\texorpdfstring{$\mathsf{NC}^1$ PRFs from $\LPN$}{NC1 PRFs from LPN}]{$\mathsf{NC}^1$ PRFs from $\LPN$}\label{sec:LPNPRF}

\label{sec:LPNassumption}
This section gives an $\mathsf{NC}^1$ PRF from $\LPN$. To this end, we combine the recent construction of WPRFs in $\mathsf{NC}^1$ from standard $\LPN$ due to Ding, Jain, and Komargodski~\cite{STOC:DingJK25} with our generic depth-preserving weak-to-strong transformation from \Cref{corollary:Depth-PreservingBootstrapping}.
 
 We start by recalling the Learning Parity with Noise assumption and then state the corollary.
\begin{definition}[$\LPN$ Assumption~\cite{C:BlumFKL93}]
\hypertarget{LPN}
{\normalfont
\label{def:lpn}
Let $n,m,t\in\N$, and let $\mu\in(0,1)$.
Let $\Ber_{\mu}$ denote the Bernoulli distribution with parameter $\mu$, namely $\Pr[\Ber_{\mu}=1]=\mu$ and $\Pr[\Ber_{\mu}=0]=1-\mu$.
The $(n,m,\mu,t,\epsilon)$-$\LPN$ assumption states that $\game G_0 \approx_{t,\epsilon} \game G_1$, where
\begin{itemize}
    \item Game $\game G_0$: Sample $\mat A\sample \mathbb{F}_2^{n\times m}$, $\vc s\sample \mathbb{F}_2^n$, and $\vc e\sample \Ber_\mu^m$. Give the adversary $(\mat A,\vc s\cdot \mat A+\vc e)$.
    \item Game $\game G_1$: Sample $\mat A\sample \mathbb{F}_2^{n\times m}$ and $\vc u\sample \mathbb{F}_2^m$. Give the adversary $(\mat A,\vc u)$.
\end{itemize}
Now let $\func\mu=\func\mu(n)$ be a function of $n$.
The $\LPN_{\func\mu}$ assumption states that for every polynomial $\func m=\func m(n)$ and every polynomial $\func t=\func t(n)$, there exists a negligible function $\funceps=\funceps(n)$ such that the $(n,m(n),\mu(n),t(n),\epsilon(n))$-$\LPN$ assumption holds for all $n\in\N$.
}
\end{definition}

The following theorem of Ding, Jain, and Komargodski~\cite{STOC:DingJK25}  provides the WPRFs in $\mathsf{NC}^1$ from $\mathsf{LPN}$, as needed.
\begin{theorem}[{\cite[Theorem 1]{STOC:DingJK25}}]
\label{thm:DJK25-lpn-wprf}
If there exists $\epsilon\ge 0$ such that the $\LPN_{\func\mu}$ assumption holds for $\func\mu(n)=n^{-\epsilon}$, then there exist WPRFs in $\mathsf{NC}^1$.
\end{theorem}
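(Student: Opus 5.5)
This statement is imported verbatim from \cite{STOC:DingJK25}, so the formal justification in this paper is simply the citation. If I had to reconstruct the argument, I would mirror \Cref{construct:weakprffromlwe} with $\mathbb{F}_2$ in place of $\Z_q$, taking Kim's chaining paradigm \cite{EC:Kim20} as the template and the proof of \Cref{theorem:securityofweakprfsfromlwe} as the model for the hybrids.

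\textbf{The sampler.} The LWE construction is bottlenecked by the Gaussian sampler; for $\LPN_{\mu}$ with $\mu = n^{-\epsilon}$ the corresponding step is much easier. Take $k = \epsilon\log n$ uniform bits and output their AND. This gives exactly $\Ber_{2^{-k}} = \Ber_{\mu}$, with no statistical error, in depth $O(\log k) = O(\log\log n)$.

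\textbf{The construction.} I would set up a chain of $\tau = \Theta(\log\log\lambda)$ layers of shrinking lengths $\kappa^{(i)} = k^{\tau-i}$.
\begin{itemize}
\item On input $\vc a\in\mathbb{F}_2^n$, compute all masked affine forms $\vc z^{(i)} = \mat S^{(i)}\vc a + \vc p^{(i)}$ in parallel, in depth $O(\log n)$.
\item Sequentially set $\vc y^{(i)} = \vc z^{(i)} + \mathrm{Samp}(\vc y^{(i-1)})$, where each block of $k$ bits of $\vc y^{(i-1)}$ is ANDed into one noise bit.
\item The chain contributes total depth $\tau\cdot O(\log\log n) = O(\log \lambda)$, so the whole function lies in $\mathsf{NC}^1$.
\end{itemize}

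\textbf{The security hybrids.} I would walk down the chain one layer at a time, as in the proof of \Cref{theorem:securityofweakprfsfromlwe}.
\begin{itemize}
\item Suppose layer $i-1$ has already been replaced by uniform bits, fresh for each of the $m$ samples.
\item Then the noise fed into layer $i$ is distributed exactly as $\Ber_\mu^{\kappa^{(i)}}$ for each sample.
\item Hence $\bigl(\vc a, \mat S^{(i)}\vc a+\vc p^{(i)}+\vc e^{(i)}\bigr)$ over the $m$ samples is an $\LPN$ instance, one per row of $\mat S^{(i)}$.
\item A row-by-row hybrid, costing $\kappa^{(i)}$ invocations of $\LPN$, replaces layer $i$ by uniform.
\end{itemize}
After $\tau$ layers the output is uniform. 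The total loss is about $\tau\kappa^{(0)}\epsilon_{\LPN}$, which is negligible because $\kappa^{(0)} = k^{\tau} = \poly(\lambda)$ when $\tau = O(\log\log\lambda)$ and $k = O(\log\lambda)$.

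\textbf{Main obstacle.} The hardest step is the base of the chain. The first layer's noise has no pseudorandom source: in the LWE analogue it is a fixed value, since $\vc y^{(0)}=\vc 0$. So the first hybrid cannot invoke $\LPN$ directly, and one must instead argue that $\vc z^{(1)}$ is a good enough seed.

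The first thing I would try is a coupling argument like the bounded-error coupling in \Cref{theorem:securityofweakprfsfromlwe}. The masked forms $\vc z^{(1)}$ are uniform over a random $\vc a$. Replacing them by truly fresh uniform vectors changes the derived layer-$2$ noise only on a low-probability event, and that event can be charged to the shrinkage factor $\kappa^{\tau}/\lambda$ across layers.

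If that coupling is too lossy, my fallback is to spend an extra layer or two. Their only role would be to turn the degenerate initial noise into genuinely Bernoulli noise before the $\LPN$ hybrids begin.
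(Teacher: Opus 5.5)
Your proposal agrees with the paper, which gives no argument for this statement beyond citing \cite{STOC:DingJK25}, so your reconstruction is optional. If you keep the reconstruction, its crux rests on the wrong mechanism. Over $\mathbb{F}_2$ nothing rounds away a changed noise vector. The base-case coupling works only because the AND sampler is stable: a block that differs between the two chains changes the sampled bit only if one of the two pad-masked, hence uniform, blocks is all-ones, which has probability at most $2\mu$. So the expected number of disagreeing positions shrinks by a factor $2\mu$ per layer, and for each fixed input the two chains' outputs differ with probability at most $\kappa^{(1)}(2\mu)^{\tau-1}$. The LWE rounding factor $\kappa^{\tau}/\lambda$ has no analogue here, and a constant number of extra layers buys only a $1/\mathrm{poly}$ factor, so your fallback cannot replace this $\omega(1)$-layer contraction.
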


\begin{corollary}[$\mathsf{NC}^1$ PRFs from $\LPN$]
There exists PRFs in $\mathsf{NC}^1$ if there exists $\epsilon\ge 0$ such that the $\LPN_{\func\mu}$ assumption holds for $\func\mu(n)=n^{-\epsilon}$.
\end{corollary}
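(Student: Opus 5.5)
This corollary is a direct composition of two results already in place: \Cref{thm:DJK25-lpn-wprf}, which gives WPRFs in $\mathsf{NC}^1$ from $\LPN_{\func\mu}$ with $\func\mu(n)=n^{-\epsilon}$, and the depth-preserving weak-to-strong transformation of \Cref{corollary:Depth-PreservingBootstrapping}. The plan is to take the hypothesis, obtain the WPRF, and then feed it into the generic transformation.

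First, fix $\epsilon\ge 0$ for which $\LPN_{\func\mu}$ holds with $\func\mu(n)=n^{-\epsilon}$. By \Cref{thm:DJK25-lpn-wprf}, there is a WPRF ensemble $\ffe G_0$ computable in depth $d(\lambda)=O(\log\lambda)$. The hypotheses of \Cref{corollary:Depth-PreservingBootstrapping} are $d(\lambda)=\Omega(\log\lambda)$ and $d(\lambda)/\log\lambda$ non-decreasing. To meet them, replace the depth bound by $d(\lambda)\coloneqq c\log\lambda$, where $c$ is a constant large enough to dominate the actual depth for all $\lambda$. Padding the depth upward is harmless, so this bound satisfies both conditions.

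Second, apply \Cref{corollary:Depth-PreservingBootstrapping}. Internally, this invokes \Cref{lemma:keyuniformization} to make $\ffe G_0$ key-uniform at an additive cost of one unit of depth. It then instantiates \Cref{construct:framework}, whose depth is $O(d(\lambda))=O(\log\lambda)$ by \Cref{lemma:depthofTGGM} and whose security follows from \Cref{corollary:main}. The result is a polynomially secure PRF ensemble $\ffe F'$ of depth $O(\log\lambda)$ and polynomial size, which lies in $\mathsf{NC}^1$ by \Cref{def:ffe}.

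The only point needing care is matching the interfaces of the two results, which I expect to be minor. \Cref{lemma:keyuniformization} assumes a WPRF whose key length, input length and output length can be reparameterized to the required form. To get there, restrict the output of the DJK WPRF to one bit using \Cref{lemma:RangeRestrictionpreservepseudorandom}, and choose its security parameter polynomially related to $\lambda$ so the key length becomes $\lambda^{1/2}-1$. Both steps keep the depth $O(\log\lambda)$ and preserve negligible security. If the DJK security is stated only relative to the LPN dimension $n$, note that $n$ and $\lambda$ are polynomially related, so negligibility transfers.
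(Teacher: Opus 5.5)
Your proposal is correct and matches the paper's proof, which likewise just composes \Cref{thm:DJK25-lpn-wprf} with \Cref{corollary:Depth-PreservingBootstrapping}. The interface details you add, namely padding the depth bound to $c\log\lambda$ and reparameterizing the WPRF for key-uniformization, are already covered by the ``in particular'' clause of that corollary and by \Cref{lemma:keyuniformization}; they are harmless but not needed.
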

\begin{proof}
By \Cref{thm:DJK25-lpn-wprf}, the stated $\LPN$ assumption implies the existence of WPRFs in $\mathsf{NC}^1$.
Applying our generic depth-preserving weak-to-strong transformation (\Cref{corollary:Depth-PreservingBootstrapping}) upgrades these WPRFs to PRFs in $\mathsf{NC}^1$.
\end{proof}

\section[\texorpdfstring{$\mathsf{NC}^1$ PRFs from $\CDH$}{NC1 PRFs from CDH}]{$\mathsf{NC}^1$ PRFs from $\CDH$}\label{sec:CDHPRF}

This section gives an $\mathsf{NC}^1$ PRF from $\CDH$. To this end, we observe that our generic transformation can be instantiated with an object even weaker than a WPRF.
Indeed, the proof of \Cref{theorem:main} never exposes the random input pool of the underlying primitive to the adversary.
As a result, the relevant security notion is not full weak pseudorandomness, but rather pseudorandomness on \emph{hidden} random inputs shared across many independently sampled keys.
This is exactly the notion captured by synthesizers~\cite{FOCS:NaorR95,JCSS:NaorR99}.

We recall the notion of synthesizers, which it is useful to view them as a relaxation of weak pseudorandom functions.
To motivate the definition, consider the following intermediate notion, which we call \emph{hidden-input weak pseudorandomness}.
Here, the adversary does not receive input-output pairs $\{(x^{(i)},G(k,x^{(i)}))\}$; instead, it only receives outputs of the form $\{G(k,x^{(i)}) \}$ on uniformly random inputs~$x^{(i)}$.
Taken in isolation, this requirement is too weak: for example, the identity map would look perfectly random if the inputs were hidden.
The key additional requirement is therefore that the \emph{same} random input set be reused across many independently sampled keys.
This shared-input condition is exactly what rules out such degenerate examples, and it is also exactly the pattern that appears in the hybrid step of our Tapering-GGM proof.\footnote{This property is automatic in the standard notion of weak pseudorandomness, since the inputs are public and a standard hybrid argument can be applied (\Cref{lemma:productpreserve}).}

\begin{definition}[$(m, q, t, \epsilon)$-Hidden-Input Weak Pseudorandom]
\label{def:IHpseudorandom}
\normalfont

Let $m,q,t \in \N$ and $\epsilon \ge 0$.
A function family $\ff G=\{G_k\colon \Set{X}\to \Set{Y}\}_{k\in\Set{K}}$ is \emph{$(m,q,t,\epsilon)$-hidden-input pseudorandom} if the following two games are $(0,t,\epsilon)$-indistinguishable:
\begin{itemize}
    \item Game $\game G_0$: Sample $q$ random inputs $\vc x=(x^{(1)},\ldots,x^{(q)}) \sample \Set X^q$, sample $m$ random keys $\vc k=(k_1,\ldots,k_m) \sample \Set K^m$, and give the adversary
    \[
    G(\vc k,\vc x) \coloneqq (G(k_1,x^{(1)}),\ldots,G(k_i,x^{(j)}),\ldots,G(k_m,x^{(q)})) \in \Set Y^{mq}.
    \]
    \item Game $\game G_1$: Sample $mq$ random outputs $\vc u \sample \Set Y^{mq}$ and give the adversary $\vc u$.
\end{itemize}
\end{definition}

Synthesizers are defined analogously to WPRFs (\Cref{def:weakprf}), except that the underlying security notion is hidden-input weak pseudorandomness rather than weak pseudorandomness.

\begin{definition}[Synthesizers~\cite{FOCS:NaorR95,JCSS:NaorR99}]
\label{def:synthesizers}
\normalfont
Let $\func m(\cdot)$, $\func q(\cdot)$, $\func t(\cdot)$, and $\funceps(\cdot)$ be functions of the security parameter $\lambda$.
An efficient function family ensemble $\ffe S=\{\ff S_\lambda\}_{\lambda\in\N}$ is a \emph{$(\func m,\func q,\func t,\funceps)$-synthesizer} if for every $\lambda\in\N$, the function family $\ff S_\lambda$ is $(m(\lambda),q(\lambda),t(\lambda),\funceps(\lambda))$-hidden-input pseudorandom.

We say that $\ffe S$ is a (\emph{polynomially secure}) \emph{synthesizer} if for any polynomials $\func m(\cdot)$, $\func q(\cdot)$, and $\func t(\cdot)$, there exists a negligible function $\funceps(\cdot)$ such that $\ffe S$ is a $(\func m,\func q,\func t,\funceps)$-synthesizer.
\end{definition}

Every WPRF is also a synthesizer: one first applies the product lemma (\Cref{lemma:productpreserve}) to reuse the same random input set across many independent keys, and then simply hides those inputs from the adversary.
The converse need not hold, because hiding the inputs can make a function family look pseudorandom even when it fails the standard WPRF definition.
Moreover, by the same one-time-pad transformation used earlier to turn WPRFs into key-uniform WPRFs in \Cref{lemma:keyuniformization}, a synthesizer can be made key-uniform with only a depth-$1$ overhead: one XORs the output with a fresh key-dependent pad, which preserves hidden-input pseudorandomness while enforcing $1$-wise independence.

The relevance to \Cref{construct:framework} is that, throughout the proof of \Cref{theorem:main}, the ``input pool'' to the underlying primitive is sampled as part of the PRF key and is never revealed.
In particular, the only place where security of the underlying family is used is the hybrid step in \Cref{lemma:H1indis}, and there the reduction gives the distinguisher values of the form $(G(k_i,a_j))_{i\in[m],j\in[\Delta]}$ for hidden random inputs $a_1,\ldots,a_\Delta$.
This is precisely a synthesizer challenge.
Consequently, the proof of \Cref{theorem:main} goes through verbatim with a key-uniform synthesizer in place of the underlying key-uniform WPRF.

\begin{corollary}
\label{corollary:Depth-PreservingBootstrappingsynthesizers}
For any function $\func d(\lambda) = \Omega(\log \lambda)$, if there exists synthesizers computable in depth $\func d(\lambda)$, then there exists PRFs computable in depth $O(\func d(\lambda))$.
In particular, if there exists synthesizers in $\mathsf{NC}^1$, then there exists PRFs in $\mathsf{NC}^1$.
\end{corollary}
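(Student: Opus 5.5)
The plan is to repeat the proof of \Cref{corollary:Depth-PreservingBootstrapping} with a synthesizer in place of the WPRF. Only two ingredients need re-checking: key-uniformization, and the single hybrid step in the proof of \Cref{theorem:main} that uses computational security.

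\textbf{Step 1: key-uniform synthesizer.} Starting from a depth-$d$ synthesizer $\ffe S$, we first reparameterize and restrict the output to one bit; \Cref{lemma:RangeRestrictionpreservepseudorandom} carries over verbatim to hidden-input pseudorandomness, since dropping output coordinates is post-processing. Next we XOR in the one-bit constant family $\ffe H$ from the proof of \Cref{lemma:keyuniformization}. The analogue of \Cref{lemma:superposition} holds: a reduction samples the pads $b_1,\ldots,b_m$ for the $m$ keys itself and XORs them into the challenge matrix $(G(k_i,x^{(j)}))_{i,j}$. This maps $\game G_0$ to $\game G_0$, and maps $\game G_1$ to a uniformly random matrix. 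The resulting family is $1$-wise independent.

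We then take the $\lambda^{1/2}$-fold product to reach output length $\lambda^{1/2}$ and key length $\lambda$. Hidden-input security of the product follows by a hybrid over the $\lambda^{1/2}$ output blocks. Hybrid $h$ embeds one synthesizer challenge (over $m$ keys and $q$ shared hidden inputs) into block $h$. For the other blocks the reduction needs values at the same hidden inputs $x^{(j)}$, which it does not know. I would handle this by having the product use \emph{independent} inputs per block: the input is $(x_1,\ldots,x_{\lambda^{1/2}})$, and block $h$ reads only $x_h$. Then the other blocks can be simulated with self-chosen keys and inputs, and the input length stays polynomial. The depth becomes $d+1$. This is the step I consider the main obstacle, because the naive same-input product used for WPRFs in \Cref{lemma:productpreserve} does not obviously reduce when the inputs are hidden; splitting the input per block avoids the problem.

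\textbf{Step 2: security of the construction.} We apply \Cref{construct:framework} to the key-uniform synthesizer. Depth $O(d(\lambda))$ follows from \Cref{lemma:depthofTGGM}; that proof uses only the depth of $G$, and here I assume $d(\lambda)/\log\lambda$ is non-decreasing, or otherwise replace $d$ by a suitable monotone majorant. For security, we rerun the proof of \Cref{theorem:main}. \Cref{lemma:H1H2indis} is purely information-theoretic and uses only key-uniformity and the $2$-wise independent hash, so it is unchanged. In \Cref{lemma:H1indis} the reduction $D'$ only consumes the matrix $(G(k_i,a_h))_{i\in[q],h\in[\Delta]}$ and never uses $\vc a$ itself, as the footnote there notes. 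This matrix is exactly a $(q,\Delta,t,\epsilon)$ hidden-input challenge for $\ff G_{\lambda^{(j+1)}}$ with $q$ keys and $\Delta$ shared hidden inputs, so the lemma holds with the synthesizer advantage $\epsilon_{j+1}$. In fact the factor $q$ loss from the product lemma is no longer needed.

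\textbf{Step 3: combining.} Combining the hybrids exactly as in the main proof gives advantage $2^{-m}+\sum_{i\le \ell'_q} m\,\epsilon_i$. The quantifier unpacking of \Cref{corollary:main} then applies: $\ell'_q$ is constant, so every $\lambda^{(i)}$ with $i \le \ell'_q$ is polynomially related to $\lambda$, and every $\epsilon_i$ is negligible. This yields a PRF of depth $O(d(\lambda))$, and in particular a PRF in $\mathsf{NC}^1$ when the synthesizer is in $\mathsf{NC}^1$.
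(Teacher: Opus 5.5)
Your proof is correct and follows the paper's argument: make the synthesizer key-uniform with a one-time pad, then observe that the only computational step, \Cref{lemma:H1indis}, hands the distinguisher exactly a hidden-input challenge $(G(k_i,a_h))_{i\in[q],h\in[\Delta]}$, so the rest of \Cref{theorem:main} goes through unchanged. Your one deviation, giving each product block its own input, is valid but unnecessary: for the ordinary same-input product, an $(m,q)$ hidden-input challenge for $\ff{G}^{\times n}$ is, after regrouping its $mn$ independent base keys, literally an $(mn,q)$ challenge for $\ff{G}$ (the synthesizer definition already allows polynomially many keys on shared hidden inputs), so no hybrid is needed and the obstacle you flag does not arise.
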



We now plug in the classical construction of Naor and Reingold~\cite{FOCS:NaorR95,JCSS:NaorR99}, who showed how to build synthesizers in $\mathsf{NC}^1$ from the Computational Diffie--Hellman assumption.
For completeness, we restate the assumption.

\begin{definition}[$\CDH$ Assumption~\cite{TIT:DiffieH76}]
\label{def:cdh}
\normalfont
Let $\mathbb{G}$ be a cyclic group of prime order $p$ with generator $g$, and let $\lambda\coloneqq \log p$ denote the security parameter.
The $\CDH$ assumption states that for every probabilistic polynomial-time adversary $\att A$, there exists a negligible function $\funceps=\funceps(\lambda)$ such that for every $\lambda\in\N$,
\[
\Pr_{a,b\sample \mathbb{Z}_p}\!\left[\att A(g,g^a,g^b)=g^{ab}\right] \le \epsilon(\lambda).
\]
\end{definition}

\begin{corollary}[$\mathsf{NC}^1$ PRFs from $\CDH$]
There exists PRFs in $\mathsf{NC}^1$ if the $\CDH$ assumption holds.
\end{corollary}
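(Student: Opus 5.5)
The plan is to instantiate \Cref{corollary:Depth-PreservingBootstrappingsynthesizers} with the Naor--Reingold $\CDH$-based synthesizer~\cite{FOCS:NaorR95,JCSS:NaorR99}. Since we take its security and $\mathsf{NC}^1$ computability as given, the corollary then yields $\mathsf{NC}^1$ PRFs directly. The synthesizer has the form $S(x,y)=h_r(g^{xy})$, where $h_r$ is a hardcore-bit or Goldreich--Levin style extractor. Viewed as a keyed family, one of the two arguments plays the role of the key and the other the role of the (hidden) input. Our first step is to recall from~\cite{FOCS:NaorR95,JCSS:NaorR99} exactly which object they build and in what form, since the raw $g^{xy}$ map is not even in $\mathsf{NC}^1$ as stated.

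Second, we check that this object satisfies \Cref{def:IHpseudorandom} and \Cref{def:synthesizers}. The Naor--Reingold security notion for synthesizers is precisely the $m\times q$ matrix $(S(x_i,y_j))_{i,j}$ being pseudorandom for random $x$'s and $y$'s. Setting keys $k_i=x_i$ and inputs $x^{(j)}=y_j$, this is the $\game G_0$ versus $\game G_1$ game of \Cref{def:IHpseudorandom}, so the only work is to match quantifiers. If their output is a single bit, we take products over independent keys and inputs to obtain the output length $\lambda^{1/2}$ required in \Cref{construct:framework}, analogously to \Cref{corollary:rangeextension}. Key-uniformization by a one-time pad adds depth $1$ and preserves hidden-input pseudorandomness, as remarked before \Cref{corollary:Depth-PreservingBootstrappingsynthesizers}.

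Third, we verify depth, which is where we expect the main obstacle. Computing $g^{xy}$ in a general group is not obviously in $\mathsf{NC}^1$. Naor--Reingold's $\mathsf{NC}^1$ synthesizer avoids this by preprocessing: the key stores powers such as $g^{x\cdot 2^{t}}$, so that on input $y$ one computes an iterated product $\prod_{t: y_t=1} g^{x 2^t}$. That is an iterated multiplication of $O(\lambda)$ group elements, which lies in $\mathsf{NC}^1$ (indeed $\mathsf{TC}^0$) for groups such as $\mathbb{Z}_P^*$ under a suitable representation. The hardcore-bit extraction, an inner product, is also in $\mathsf{NC}^1$. We accordingly fix the group to a prime-order subgroup of $\mathbb{Z}_P^*$, where iterated multiplication is known to be in $\mathsf{NC}^1$. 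We then confirm that the depth condition of \Cref{corollary:Depth-PreservingBootstrappingsynthesizers}, $d(\lambda)=O(\log\lambda)$ with $d/\log\lambda$ monotone, holds, possibly after padding the circuit depth to exactly $c\log\lambda$.

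Finally, combining these steps, \Cref{corollary:Depth-PreservingBootstrappingsynthesizers} gives a PRF of depth $O(\log\lambda)$, which is therefore in $\mathsf{NC}^1$, completing the proof.
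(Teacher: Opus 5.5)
Your route is the paper's route. The paper simply cites the Naor--Reingold $\mathsf{NC}^1$ synthesizer from $\CDH$ and applies \Cref{corollary:Depth-PreservingBootstrappingsynthesizers}. Your matching of their matrix notion to \Cref{def:IHpseudorandom}, the range extension over independent keys, and the one-time-pad key-uniformization are all fine.

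The step that does not work as written is the depth mechanism you single out as the main obstacle. You make the argument $x$ the synthesizer key and store the table $(g^{x2^t})_t$ in that key. In \Cref{construct:framework}, however, the key slot of $G_{\lambda^{(i)}}(\cdot, r^{(i)}_{\tilde x_i})$ is filled by the state $y^{(i)}$. For $i\ge 2$ that state is computed at evaluation time by the previous level, as a pseudorandom bit string. No precomputed table exists for it, and producing one is exactly the runtime exponentiation you wanted to avoid. A table-valued key would also clash with key-uniformity, since each level outputs bit strings, not tables. Your bound holds with the key hardwired, which is all \Cref{def:functionfamily} literally asks for. But that does not feed into \Cref{lemma:depthofTGGM}, which needs $G$ to be shallow as a function of the runtime key and of the runtime-selected input.

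The repair uses only tools you already invoke. Fix the group per security parameter and hardwire the public powers $g^{2^t} \bmod P$, which depend only on $(P,g)$. On runtime arguments $x,y$, compute the integer $xy$ in $\mathsf{TC}^0$, then take the iterated product modulo $P$ of the $g^{2^t}$ selected by its bits, which is in $\mathsf{NC}^1$. Alternatively, put the table on the hidden-input side: store $(g^{a\cdot 2^t})_t$ next to each pool element $a=r^{(i)}_j$ and select it upfront together with $r^{(i)}_{\tilde x_i}$. This is harmless because the pool is never revealed. With either fix your argument coincides with the paper's.
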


\section*{Acknowledgements}
Ding and Komargodski were supported in part by a grant from the Israel Science Foundation (ISF Grant No. 1774/20), and by the European Union (ERC, SCALE,101162665). 
Jain was supported in part by a Google Faculty Research Scholarship, Amazon, Cylab and QCIT of CMU, 0xParc, a Stellar Foundation Research Grant, and the NSF career award. 
Views and opinions expressed are
however those of the author(s) only and do not necessarily reflect those of the
European Union or the European Research Council. Neither the European Union
nor the granting authority can be held responsible for them.

\newpage
\appendix


\section{Omitted Proofs}

\subsection[\texorpdfstring{Proof of \Cref{lemma:XORAmplifiesAlmost-Randomness}}{Proof of theorem:weak-prf-security}]{Proof of \Cref{lemma:XORAmplifiesAlmost-Randomness}}
\label{appendix:XORAmplifiesAlmost-Randomness}
\begin{lemma}[Restating \Cref{lemma:XORAmplifiesAlmost-Randomness}]
Let $n \in \N$.
If $\ff F$ is $(q, \epsilon)$-almost-random, then $\ff F^{\oplus n}$ is $(q, \epsilon^n)$--almost-random. 
\end{lemma}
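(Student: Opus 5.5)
The plan is to build the witness bad set for $\ff F^{\oplus n}$ directly from the one for $\ff F$. Let $\BAD\subseteq \Set X^*\times\AUX$ be the left-monotone set witnessing that $\ff F$ is $(q,\epsilon)$-almost-random. A key of $\ff F^{\oplus n}$ is $((r_1,aux_1),\dots,(r_n,aux_n))\in(\Set R\times\AUX)^n$. We regroup it as $\Set R^n\times\AUX^n$, with $\AUX'\coloneqq\AUX^n$ as the auxiliary part and $\Set R'\coloneqq\Set R^n$ as the remaining randomness. We then define
\[
\BAD'\coloneqq\{(\vc x,(aux_1,\dots,aux_n)) : (\vc x,aux_i)\in\BAD \text{ for every } i\in[n]\}.
\]
So a tuple is bad for the sum only if it is bad for every summand.

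The first step is left-monotonicity. If $(\vc x,(aux_i)_i)\in\BAD'$ and $\vc x'\supseteq\vc x$, then $(\vc x',aux_i)\in\BAD$ for each $i$, by left-monotonicity of $\BAD$. Hence $(\vc x',(aux_i)_i)\in\BAD'$.

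The second step is the probability bound. Fix distinct $x_{(1)},\dots,x_{(q)}$. The $aux_i$ are sampled independently and uniformly, so
\[
\Pr[(\vc x,(aux_i)_i)\in\BAD']=\prod_{i=1}^{n}\Pr_{aux_i}[(\vc x,aux_i)\in\BAD]<\epsilon^n .
\]

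The third step is uniformity outside $\BAD'$. Fix $(aux_i)_i$ with $(\vc x,(aux_i)_i)\notin\BAD'$. Then some index $i^*$ satisfies $(\vc x,aux_{i^*})\notin\BAD$. By condition 2 of \Cref{def:almostrandom} for $\ff F$, the vector $F_{r_{i^*},aux_{i^*}}(\vc x)$ is uniform over $\Set Y^q$ when $r_{i^*}$ is uniform. The other summands $F_{r_i,aux_i}(\vc x)$ for $i\ne i^*$ depend only on the $r_i$ with $i\ne i^*$, which are independent of $r_{i^*}$. Conditioned on any fixed value $w\in\Set Y^q$ of their XOR, the total equals $w\oplus U$ with $U$ uniform, and this is uniform. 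Averaging over $w$, the output $\bigoplus_i F_{r_i,aux_i}(\vc x)$ is uniform over $\Set Y^q$.

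The main subtlety is definitional. It lies in choosing $\BAD'$ as an intersection rather than a union, so that the probability multiplies while a single good coordinate still suffices for uniformity. It also requires checking that the key-space splitting $\Set R'\times\AUX'$ matches \Cref{def:almostrandom}. Once that is settled, every step above is immediate.
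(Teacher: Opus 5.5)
Your proposal is correct and follows the paper's proof essentially step for step. You use the same regrouping $\Set R'\times\AUX'=\Set R^n\times\AUX^n$ and the same intersection-type $\BAD'$. As in the paper, left-monotonicity is inherited, the product bound comes from independence of the $aux_i$, and uniformity comes from a single good coordinate. Your explicit conditioning on the XOR of the other summands just spells out the last step in more detail than the paper does.
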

\begin{proof}
Denote by $\Set R' = \Set R^n$, $\AUX' = \AUX^n$.
$\ff F' = \ff F^{\oplus n}$.
By definition, it follows that
$$
\ff F'= \{F_{r_1, aux_1} \oplus \cdots \oplus F_{r_n, aux_n}: ((r_1, \dots, r_n), (aux_1, \dots, aux_n) )\in \Set R' \times  \AUX'\}
$$
We define
$$
\BAD' \coloneqq \{(\vc x, (aux_1, \dots, aux_n)): \forall i \in [n], (\vc x, aux_i) \in \BAD \}
$$
It follows that $\BAD'$ is left-monotone since $\BAD$ is left-monotone.

Fix any $\vc x  \in \mathcal{X}^q$ with all $x_1,\ldots,x_q$ distinct. We prove that condition 1 and 2 hold for $\ff F'$ and $\BAD'$.
\begin{enumerate}
    \item For any  $aux' \coloneqq (aux_1, \dots, aux_n) \sample \AUX'$, it holds that
    $$
    \Pr[(\vc x, aux') \in \BAD'] = \Pr[ \forall i \in [n], (\vc x, aux_i) \in \BAD] =  \prod_{i \in [n]} \Pr[(\vc x, aux_i) \in \BAD] < \epsilon^n,
    $$
    where the second equality is because $aux_i$ are independent random variables.
    \item Fix any $aux' \coloneqq (aux_1, \dots, aux_n) \in \AUX'$ such that $(\vc x, aux') \notin \BAD'$. It  follows that there exists $i \in [n]$, such that $(\vc x, aux_i) \notin \BAD$.
    Then, for $r' \coloneqq (r_1, \dots, r_n) \sample \Set R'$, $F_{r', aux'}(\vc x) \sim U(\Set Y^q)$ because the $i$-th fold $F_{r_i, aux_i}(\vc x) \sim U(\Set Y^q)$ and $r_1, \dots, r_n$ are independent.
\end{enumerate}
\end{proof}

\subsection[\texorpdfstring{Proof of \Cref{corollary:main}}{Proof of theorem:weak-prf-security}]{Proof of \Cref{corollary:main}}

\label{Appendix:polysecure}

\begin{corollary}[Restating \Cref{corollary:main}]
If $\ffe{G} = \{\ff G_{\lambda}\}_{\lambda \in \N}$ is a key-uniform WPRF, then $\ffe{F} = \{\ff {F}'_{\lambda}\}_{\lambda \in \N}$ is a PRF.    
\end{corollary}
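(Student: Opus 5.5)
The plan is to unpack the quantifiers of \Cref{theorem:main}. Fix arbitrary polynomials $\func q(\cdot)$ and $\func t(\cdot)$. We must exhibit a negligible $\funceps(\cdot)$ such that $\ff F'_\lambda$ is $(q(\lambda),t(\lambda),\funceps(\lambda))$-strong pseudorandom for all $\lambda$. For large $\lambda$ this will come from \Cref{theorem:main}; for the finitely many small $\lambda$ we set $\funceps(\lambda)=1$, which does not affect negligibility.

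\textbf{Step 1: the shattering level is constant.} Write $q(\lambda)\le \lambda^{c_q}$ for some constant $c_q$ and large $\lambda$. Since $\Delta=\lambda$, we get $\ell'_{q(\lambda)}=\lceil 2\log_\lambda q(\lambda)\rceil\le 2c_q+1=:L$, a constant. Since $\ell=\log\log\log\lambda\to\infty$, there is $\lambda_0$ with $\ell'_{q(\lambda)}\le L<\ell$ for all $\lambda\ge\lambda_0$. Hence the hypothesis $\lambda\ge\lambda_{q(\lambda)}$ of \Cref{theorem:main} holds for every such $\lambda$.

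\textbf{Step 2: the security parameters are polynomially related.} For $i\le L$ we have $\lambda^{(i)}=\lambda^{2^{-i+1}}\ge\lambda^{2^{-L+1}}$. Thus $\lambda\le(\lambda^{(i)})^{2^{L-1}}$ is polynomial in $\lambda^{(i)}$ with a fixed exponent. Define the polynomial
\[
t'(\lambda)\coloneqq t(\lambda)+q(\lambda)^2\lambda^{c_s},
\]
and the polynomials $Q_i(n)\coloneqq n^{2^{i-1}}$ and $T_i(n)\coloneqq t'(n^{2^{i-1}})$ for $i\in[L]$. These satisfy $Q_i(\lambda^{(i)})\ge\Delta$ and $T_i(\lambda^{(i)})\ge t'(\lambda)$, up to rounding, which we absorb by enlarging the polynomials.

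Because $\ffe G$ is a polynomially secure WPRF, each pair $(Q_i,T_i)$ yields a negligible $\funceps_i$ such that $\ff G_{n}$ is $(Q_i(n),T_i(n),\funceps_i(n))$-weak pseudorandom. Weak pseudorandomness is monotone in the number of samples: an adversary given fewer samples can be simulated by one that ignores the extras. It is also monotone in the size bound. So $G_{\lambda^{(i)}}$ is $(\Delta,t'(\lambda),\funceps_i(\lambda^{(i)}))$-weak pseudorandom.

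\textbf{Step 3: apply the theorem.} \Cref{theorem:main}, applied with $t'$ in place of $t$, shows that $\ff F'_\lambda$ is $(q(\lambda),t(\lambda),\funceps'(\lambda))$-strong pseudorandom, where
\[
\funceps'(\lambda)=2^{-\lambda}+\sum_{i=1}^{L} q(\lambda)\,\lambda\,\funceps_i\bigl(\lambda^{2^{-i+1}}\bigr).
\]
The term $2^{-\lambda}$ is negligible. For each fixed $i$, the function $\funceps_i(\lambda^{2^{-i+1}})$ is negligible in $\lambda$: given $c$, choose $n_0$ with $\funceps_i(n)<n^{-c2^{i-1}}$ for $n\ge n_0$, which gives $\funceps_i(\lambda^{(i)})<\lambda^{-c}$. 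Multiplying by the polynomial factor $q(\lambda)\lambda$ and summing a constant number $L$ of such terms keeps $\funceps'$ negligible. Finally, $\ffe F'$ is efficient: by \Cref{lemma:depthofTGGM} and the construction, its size and depth are polynomial.

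\textbf{Main obstacle.} The subtle step is Step 2. There we must ensure that a single negligible function for $\ffe G$ can be instantiated at the shrunken security parameters $\lambda^{(i)}$ while the adversary size and sample count remain polynomial in $\lambda^{(i)}$. This works only because $L$ is a constant independent of $\lambda$, so the exponent $2^{L-1}$ is fixed. I would state the rounding of $\lambda^{(i)}$ and the monotonicity of weak pseudorandomness in $(q,t)$ explicitly.
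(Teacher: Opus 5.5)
Your proof is correct, and it takes a more direct route than the paper's. The paper argues by contradiction. It fixes a polynomial-size family $\att A$ that breaks $\ff F'_i$ for every $i$ in an infinite set $\Set B$. It then uses \Cref{theorem:main} to obtain, for each $i$ in the infinite set $\Set{BR}$, a break of $G_{i'}$ at some $i'$ in the tapering set $\Set T_i=\{i^{2^{-j+1}}: j\in[2c_1]\}$. A separate combinatorial step follows: each $\Set T_i$ meets only finitely many other $\Set T_j$, and a pigeonhole over the finite broken set $\Set B'$ of $\ffe G$ shows that $\ffe G$ is broken infinitely often.

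You instead apply the polynomial-security definition of $\ffe G$ forward, once per level, with the rescaled polynomials $Q_i(n)=n^{2^{i-1}}$ and $T_i(n)=t'(n^{2^{i-1}})$. You then note that $\lambda\mapsto\funceps_i(\lambda^{2^{-i+1}})$ is negligible. The paper's intersection lemma and pigeonhole exist to track the many-to-one map $\lambda\mapsto\lambda^{(i)}$. For you this map is harmless: one negligible function of $n$ already bounds every size-$T_i(n)$ adversary at parameter $n$, whichever $\lambda$ the reduction in \Cref{theorem:main} starts from. That theorem is a concrete per-$\lambda$ statement, so such bounds plug in directly.

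Your route is shorter and avoids two delicate points in the paper's Claim. First, $\att A'$ must implicitly be an optimal distinguisher at each parameter, to pass from ``this family fails'' to ``every adversary fails.'' Second, the per-level bound $1/(i')^{c_2'}$ only gives $1/(i'_{\min})^{c_2'}=i^{-2^{-2c_1+1}c_2'}$, so $c_2'$ must scale like $2^{2c_1}$. Your choice $\funceps_i(n)<n^{-c2^{i-1}}$ makes this exponent loss explicit. The paper's phrasing matches the traditional adversary-family presentation, but in the non-uniform, all-circuits-of-size-$t$ model used here your direct composition loses nothing.
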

\begin{proof}
    Assume for the sake of contradiction that there exist constants $c_1, c_2 > 0$, a circuit family $\att A = \{\att A_{\lambda}\}_{\lambda \in \N}$ of size $\func S(\lambda) = \lambda^{c_1}$, 
    and an \underline{infinite} broken set $\Set B \subseteq \N$, such that 
    \begin{equation}
    \label{eq:contraeq}
         \ckt A_{i} \text{ breaks } \ff F'_{i} \text{ with advantage } \geq \frac{1}{i^{c_2}} \iff i \in \Set B
    \end{equation}

    Since  $\A_{i}$ can make at most $q_i = i^{c_1}$ queries (bounded by its size), then $\log_{\Delta} q = \log_{i} i^{c_1} = c_1$. 
    
    Let $\lambda_{c_1}$ be the smallest $i$ such that $i^{\log \log \log i / 2} > i^{c_1}$.
    Let $\Set {BR} \coloneqq \{i \geq \lambda_{c_1}: i \in \Set B\}$. In words, $\Set {BR}$ is the broken set where our security reduction can work.
    It follows that $\Set {BR}$ is also an \underline{infinite} set. 

    For each $i \in \N$, the corresponding tapering set is $\Set T_i \coloneqq \{i^{2^{-j + 1}}: j \in [2c_1]\}$.

    \begin{lemma}
        \label{lemma:finiteintersection}
        For every $i \in \N$, $\Set T_i$ only intersects with a finite number of $\Set T_j$ for $j \neq i$.
    \end{lemma}
    \begin{proof}
        Fix any $i \in \N$. If $j \geq i^{2^{2c_1 - 1}}$, then all elements in $\Set T_j$ is larger than $i$, which is the largest element in $\Set T_i$, so $|\Set T_i \cap \Set T_j| = \emptyset$. It follows that $\Set T_i$ can intersect with a finite number of $\Set T_j$ for $j \neq i$.
    \end{proof}

    \begin{Claim}
    \label{claim:securityensemble}
    There exist constants $c'_1, c'_2 > 0$, a circuit family $\att A' = \{\att A_{\lambda}\}_{\lambda \in \N}$ of size $\func S(\lambda) = \lambda^{c'_1}$, such that for each $i \in \Set {BR}$, there exists $i' \in \Set T_i$ such that $\att A'_{i'}$ breaks $G_{i'}$ with advantage $\geq 1 / (i')^{c'_2}$.
    \end{Claim}

    \begin{proof}
        Assume for the sake of contradiction that for any constant $c'_1, c'_2 > 0$, any circuit family $\att A' = \{\att A'_{\lambda}\}$ of size $\func T(\lambda) = \lambda^{c'_1}$, there exists $i \in \Set {BR}$, such that for any $i' \in \Set T_i$, $\att A'_{i'}$ distinguishes $G_{\lambda'}$ with advantage $< 1 / (i')^{c'_2}$.
        
        Denote the smallest number in $\Set T_i$ by $i'_{\min} \coloneqq i^{c''}$, where $c'' = 2^{-2c_1 + 1}$, and denote the largest number in $\Set T_i$ by $i'_{\max} \coloneqq i$.
        Then for each $i' \in \Set T_i$, it follows by definition that 
        $$
        G_{i'} \text{ is } \left((i'_{\min})^{c'_1}, (i'_{\min})^{c'_1}, 1/(i'_{\max})^{c'_2}\right)\text{-\emph{weak pseudorandom}}\footnote{We exploit the fact that $(q_1, t_1, \epsilon_1)$-weak pseudorandom implies $(q_2, t_2, \epsilon_2)$-weak pseudorandom, for any $q_2 \leq q_1, t_2 \leq t_1, \epsilon_2 \geq \epsilon_1$}$$
        
        To reach contradiction, we pick $c'_1 = (4c_1 + c_s)/c''$, where $c_s$ is the constant in \Cref{theorem:main}, and pick $c'_2 = 100(c_1 + c_2) + 100$. 
         Then for each $i' \in \Set T_i$, it follows by definition that 
        $$
        G_{i'} \text{ is } \left(i^{c_1}, i^{c_1} + i^{2c_1 + c_s}, 1/i^{100(c_1 + c_2) + 100}\right)\text{-\emph{weak pseudorandom}}$$
        
        By \Cref{theorem:main}, it follows that $\ff F'_{i}$ is a $(i^{c_1}, i^{c_1}, \epsilon')$-secure PRF, where
        $$
        \epsilon' < 2^{-i} + 2c_1 \cdot  i^{c_1} \cdot i \cdot \frac{1}{i^{100(c_1 + c_2) + 100}}  < \frac{1}{i^{c_2}},
        $$
         which contradicts with Equation \Cref{eq:contraeq} since $i \in \Set BR \subseteq \Set B$.
    \end{proof}

    To reach contradiction, we will prove that $\ff G = \{G_{\lambda}\}_{\lambda \in \N}$ is not a secure WPRF. 
    
    Assume for the sake of contradiction that $\ff G$ is a secure WPRF. Then for all any constants $c'_1, c'_2 > 0$, any adversary $\att A' = \{\att A'_{\lambda}\}_{\lambda \in \N}$ of size $\func T(\lambda) = \lambda^{c'_1}$, there exists a \underline{finite} broken set $\Set B' \subset \N$, such that 
    $$
    \ckt A'_{i} \text{ breaks } G_{i} \text{ with advantage } \geq \frac{1}{i^{c'_2}} \iff i \in \Set B'
    $$

    \noindent \textbf{Case 1: $\Set B' = \emptyset$.}  This immediately contradicts with \Cref{claim:securityensemble}.
    
    \noindent \textbf{Case 2: $\Set B' \neq \emptyset$.} 
    \Cref{claim:securityensemble} implies that there exists an adversary such that $|\Set B' \cap \Set T_i| \geq 1$ for each $i \in \Set {BR}$. 
    Since $\Set {BR}$ is a infinite set, but $\Set B'$ is a finite set,  we have \Cref{claim:infiniteintersection} hold, which contradicts with \Cref{lemma:finiteintersection}.
    \begin{Claim}
    \label{claim:infiniteintersection}
        There exists an \underline{infinite} set $\Set I \subseteq \Set {BR}$, such that 
        $$
        \left|\bigcap_{i \in \Set I} \Set T_i \right|  \neq \emptyset
        $$
    \end{Claim}
    \begin{proof}
        For each $a \in \Set B'$, define the inverted index set
        \[
                \Set I_{a} \coloneqq \{\, i\in \Set {BR} : a \in \Set T_i \,\} \subseteq \Set {BR}.
        \]
        Since $|\Set B' \cap \Set T_i| \geq 1$ for each $i \in \Set {BR}$, it follows that each $i$ will occur in at least one of $\Set I_{a}$. Therefore,
        \[
       \bigcup_{a \in \Set B'} \Set I_{a} = \Set {BR}.
        \]
        Because $\Set B'$ is finite, the left-hand side is a finite union of sets. Hence there exists  $a \in \Set B'$ such that $\Set I_{a}$ is infinite. Let $\Set I= \Set I_a$, which completes the proof.
\end{proof}

\end{proof}

\subsection[\texorpdfstring{Proof of \Cref{theorem:securityofweakprfsfromlwe}}{Proof of theorem:weak-prf-security}]{Proof of \Cref{theorem:securityofweakprfsfromlwe}}

\label{proofoftheorem:securityofweakprfsfromlwe}

\begin{theorem}[Restating \Cref{theorem:securityofweakprfsfromlwe}]
Let all parameters be as in~\Cref{construct:weakprffromlwe}.
For any $m, t \in \N$, if $(m, t, \epsilon)$-$\LWE_{n, q, \chi}$ holds, then $\ff G_{\lambda}$ in \Cref{construct:weakprffromlwe} is a  $(m, t - \poly(\lambda), \epsilon')$-WPRF, where
$$
\epsilon' =  \tau \cdot \kappa^{\tau} \cdot \epsilon + m\cdot \left(\left(\kappa^{\tau}\cdot \frac{1}{\lambda}\right)^{\tau-1} +  \tau \cdot 2^{-\Omega(\kappa)} \right)
$$
In particular, if $\LWE$ assumption holds, then $\ff G$ is a secure WPRF.
\end{theorem}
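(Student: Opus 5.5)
We will follow Kim's chaining hybrid strategy~\cite{EC:Kim20}, in the order of the statement's error terms. Fix $m$ random inputs $\vc a_1,\ldots,\vc a_m\in\Z_q^n$. The real game gives $(\vc a_j,G_k(\vc a_j))_{j\in[m]}$.

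\textbf{Step 1: make the first layer independent of the chain start.} Since $\vc y^{(0)}=\vc 0$ is fixed, $r^{(1)}$ is fixed, so $\vc e^{(1)}=S^{(B)}_\chi(r^{(1)})$ is a fixed vector. Hence $\vc y^{(1)}=\mat S^{(1)}\vc a_j+\vc p^{(1)}+\vc e^{(1)}$, and the padding $\vc p^{(1)}$ absorbs the fixed error. The rows $\vc s^{(1)}_h\cdot\vc a_j$ across $j$ are then exactly LWE-type samples with public matrix $(\vc a_1,\ldots,\vc a_m)$, except that the error is missing. This matters: with no error at layer $1$, the values are not pseudorandom. So we first compare with a modified chain whose initial state $\vc y^{(0)}$ is \emph{uniform} and secret-independent (per input), and we argue that this changes nothing with good probability.

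To do this, note that $\vc y^{(1)}$ is a uniform-looking affine value plus a bounded error. Whether the error is the fixed one or a fresh sampled one, $\lceil\vc y^{(1)}\rfloor_2$ changes only if some coordinate of $\mat S^{(1)}\vc a+\vc p^{(1)}$ lies within $B$ of a rounding boundary. That event has probability at most $O(B/q)=O(1/\lambda)$ per coordinate.

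\textbf{Step 2: layer-by-layer hybrids.} For $i=1,\ldots,\tau$ we use $\kappa^{(i)}$ hybrids, one per row secret $\vc s^{(i)}_h$. Whenever $\vc e^{(i)}$ is computed as $S^{(B)}_\chi$ applied to bits that are uniform and independent of $\mat S^{(i)}$ (by induction, the previous layer already uniform), Corollary~\ref{corollary:sampler-sd-real} lets us replace it by genuine $\chi$ samples at cost $2^{-\Omega(\kappa)}$ per layer and input. This accounts for the $m\tau 2^{-\Omega(\kappa)}$ term. The row $(\vc s^{(i)}_h\cdot\vc a_j+e_{h,j})_{j}$ is then an $m$-sample LWE instance, so we swap it for uniform at cost $\epsilon$. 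The padding is irrelevant after the swap. This gives $\tau\kappa^\tau\epsilon$ in total. After layer $i$ is uniform, its top bits are uniform bits independent of the next layer's secrets, which feeds the induction.

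\textbf{Step 3: the main obstacle, the coupling bound.} Layers are only conditionally independent through the rounding. The delicate point is that the real chain (start $\vc 0$) and the idealized chain must agree on the bit strings $r^{(i)}$. I would bound this by a union bound over the $\kappa^{(i)}$ coordinates of each layer: a disagreement requires the uniform value to lie in a window of width $2B$ around $0$ or $q/2$, which has probability at most $4B/q=1/\lambda$. Accumulating over layers and coordinates gives a per-input error of the form $(\kappa^\tau/\lambda)^{\tau-1}$. I am least sure the exponent comes out exactly as stated. I would try to show that disagreements must propagate through $\tau-1$ consecutive boundary events, with the rounding made robust via the bounded sampler. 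Multiplying by $m$ inputs and summing with the triangle inequality (Lemma~\ref{lemma:triangle}) gives $\epsilon'$.

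The reduction overhead is the cost of simulating the remaining layers, which is $\poly(\lambda)$. Negligibility then follows since $\kappa^\tau=\lambda^{o(1)}$ and $\kappa=\log^2\lambda$.
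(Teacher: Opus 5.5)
Your proposal is correct and follows essentially the same route as the paper. Steps 1 and 3 together are the paper's statistical coupling between the $\vc 0$-started chain and a chain started from a fresh uniform $\vc y^{(0)}$, which costs $m(\kappa^\tau/\lambda)^{\tau-1}$; Step 2 is the paper's $\tau$ layer-by-layer LWE hybrids, in which the bounded sampler is replaced by genuine $\chi$ noise at cost $2^{-\Omega(\kappa)}$. The propagation argument you hedge on is exactly the paper's claim, and it does give the stated exponent: if the outputs differ then $r^{(i)}\neq\tilde r^{(i)}$ for every $i$, since the chains coincide once they agree. The independent paddings $\vc p^{(i)}$ make each $\vc z^{(i)}$ uniform and independent of earlier layers, so your per-layer union bound gives conditional disagreement probability at most $\kappa^\tau/\lambda$, and these bounds multiply over $i=2,\ldots,\tau$.
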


\begin{proof}
To aid the proof we first define a sequence of auxiliary (randomized) functions $\tilde{ G}^{(i^*)}$ for $0\leq i^* \leq \tau$. The function $\tilde{G}^{(i^*)}$ is defined identically as the WPRF  $G$ except that it samples $\vcy^{(i^*)} \sample  \Z_q^{\kappa^{(i^*)}}$ and starts at the $i^* + 1$-th round. Formally, On input $\vc a \in \Z_q^{n}$, 
\begin{enumerate}
        \item $\vc z^{(i)} \gets \mat S^{(i)} \cdot \vc a + \vc p^{(i)} \in \Z_q^{\kappa^{(i)}}$ for $i \in [\tau]$ in parallel.
        \item $\vc y^{(i^*)} \sample \Z_q^{\kappa^{(i^*)}}$.
        \item For $i = i^*+1, \dots, \tau$ sequentially:
        \begin{enumerate}
            \item $r^{(i)} \gets \lceil \vc y^{(i-1)}  \rfloor_2 \in \bina^{\kappa^{(i-1)}}$
            \item $\vc e^{(i)} \gets S^{(B)}_{\chi}(r^{(i)}) \in \Z_q^{\kappa^{(i)}}$
            \item $\vc y^{(i)} \gets \vc z^{(i)}  + \vc e^{(i)}  \in \Z_q^{\kappa^{(i)}}$
        \end{enumerate}
        \item Output $y \gets \vc y^{(\tau)} \in \Z_q$.
    \end{enumerate}

Fix any $m \in \N$. We proceed via a sequence of hybrids.
    
\textbf{Hybrid $H_0$.}  Sample $m$ random inputs $\vc x= (x_1,\ldots,x_m) \sample \Set X^m$, a random key $k \sample \Set K_{\lambda}$ for $\ff G_{\lambda}$, and give the adversary $(\vc x, G_k(\vc x))$.

\textbf{Hybrid $H_{1, i^*}$} for $0 \leq i^* \leq \tau$.  Sample $m$ random inputs $\vc x= (x_1,\ldots,x_m) \sample \Set X^m$, a random key $k \sample \Set K_{\lambda}$ for $\ff G_{\lambda}$, and give the adversary $(\vc x, \tilde{G}^{(i^*)}_k(\vc x))$.

\textbf{Hybrid $H_2$.} Sample $m$ random inputs $\vc{x}=(x_1,\ldots,x_m) \sample \Set X^m$, $m$ random outputs $\vc u = (u_1, \dots, u_m) \sample \Set Y^m$, and give the adversary $(\vc x, \vc u)$.

By definition, $H_{1, \tau}$ is the same as $H_2$.


\begin{lemma}
\label{lemma:weakprffromlweH0indis}
Games $\game H_{0}$ and $H_{1, 0}$ are $\left(\infty, m\cdot \left(\kappa^{\tau}\cdot \frac{1}{\lambda}\right)^{\tau-1}\right)$-\emph{indistinguishable}.
\end{lemma}
\begin{proof}

We define a coupling distribution $(X, Y)$ as follows.

\textbf{Coupling distribution $(X, Y)$}. Sample $m$ random inputs $\vc x= (x_1,\ldots,x_m) \sample \Set X^m$, a random key $k \sample \Set K_{\lambda}$ for $\ff G_{\lambda}$, and output $((\vc x, G_k(\vc x), (\vc x, \tilde{G}^{(0)}_k(\vc x))$.

By definition, $X \sim H_0$, $Y \sim H_{1, 0}$.
By \Cref{lem:couplingbound}, it holds that
\begin{equation*}
\begin{split}
\SD(H_0, H_{1, 0}) \leq& \Pr[X \neq Y]\\
        =& \Pr_{\substack{\vc x \sample \Set X^m \\ k\sample \Set K}}[\tilde{G}^{(0)}_k(\vc x) \neq G_k(\vc x)]\\
           \leq& m\cdot  \Pr_{\substack{x \sample \Set X \\ k\sample \Set K}}[\tilde{G}^{(0)}_k(x) \neq G_k(x)]\\
           \leq& m\cdot \left(\kappa^{\tau}\cdot \frac{1}{\lambda}\right)^{\tau-1}
\end{split}
\end{equation*}
where the second to last inequality is due to a union bound, and the last inequality is due to~\Cref{claim:h0h1collision}.

\begin{Claim}
\label{claim:h0h1collision}
For any $x \in \Set X$, it holds that
$$
\Pr_{k\sample \Set K}[\tilde{G}^{(0)}_k(x) \neq G_k(x)] \leq  \left(\kappa^{\tau}\cdot \frac{1}{\lambda}\right)^{\tau-1}
$$
\end{Claim}

\begin{proof}
Fix any $x \in \Set X$.
We consider the transcript of the intermediate results during the evaluation of $G_{\skey}(x)$ and $\tilde{G}^{(0)}_{\skey}(x)$.
We use $X$ to denote the random variable during the evaluation of $G_{\skey}(x)$ (e.g., $\vcy$, $\vcz$).
We use $\tilde{X}$ to denote the random variable during the evaluation of $\tilde{G}^{(0)}_{\skey}(x)$ (e.g.,~$\tilde{\vcy}$, $\tilde{\vcz}$).

Observe that if $\tilde{G}^{(0)}_k(x)  \neq G_k(x)$, then it must hold that for all $i \in [\tau]$, $r^{(i)} \neq \widetilde{r}^{(i)}$. This is due to the fact that each iteration is deterministic. For $i = 2, \dots, \tau$, we consider the probability of $r^{(i)} \neq \widetilde{r}^{(i)}$ conditioned on $r^{(i-1)} \neq \widetilde{r}^{(i-1)}$:
\begin{equation}
\begin{split}
    \Pr_{k \sample \Set K}\left[r^{(i)} \neq \widetilde{r}^{(i)} \; |\;r^{(i - 1)} \neq \widetilde{r}^{(i - 1)} \right] 
    =&\Pr_{k \sample \Set K} \left[\lceil \vc y^{(i-1)}  \rfloor_2 \neq \lceil \vc y^{(i-1)}  \rfloor_2  \; |\;r^{(i - 1)} \neq \widetilde{r}^{(i - 1)} \right] \\
    =&\Pr_{k \sample \Set K} \left[\lceil \vc z^{(i-1)} + \vc e^{(i-1)}  \rfloor_2 \neq \lceil \vc z^{(i-1)} + \widetilde{\vc e}^{(i-1)}  \rfloor_2  \; |\;r^{(i - 1)} \neq \widetilde{r}^{(i - 1)} \right] \\
    \leq& \kappa^{\tau} \cdot \Pr_{k \sample \Set K} \left[\lceil z_1^{(i-1)} + e_1^{(i-1)}  \rfloor_2 \neq \lceil z_1^{(i-1)} + \widetilde{e_1}^{(i-1)}  \rfloor_2  \; |\;r^{(i - 1)} \neq \widetilde{r}^{(i - 1)} \right]\\
    \leq& \kappa^{\tau} \cdot \frac{1}{\lambda}.
\end{split}
\end{equation}
The first inequality follows from a union bound over all entries, and the second inequality follows from the fact that each entry of
$\vc z^{(i-1)} = \vc p^{(i-1)} + \star$ is uniformly distributed.
Then conditioned on 
\[
z^{(i-1)}_1 \notin [q/2 - B,\, q/2 + B] \;\cup\; [-B,\, B],
\]
which happens except with probability $1- 4B/q = 1- 1/\lambda$,
we have
\[
\lfloor z^{(i-1)}_1 + e^{(i-1)}_1 \rceil_2
\;=\;
\lfloor z^{(i-1)}_1 + \widetilde{e}^{(i-1)}_1 \rceil_2,
\]
since the error is $B$-bounded (the output of the $B$-bounded funciton $S^{(B)}_{\chi}$).

It therefore follows that
\begin{equation}
\begin{split}
   \Pr_{k\sample \Set K}[\tilde{G}^{(0)}_k(x) \neq G_k(x)] 
    =&  \Pr_{k\sample \Set K}\left[\bigcap_{i=1}^{\tau} r^{(i)} \neq \widetilde{r}^{(i)} \right] \\
    \leq & \prod_{i=2}^{\tau}\Pr\left[r^{(i)} \neq \widetilde{r}^{(i)} \; |\;r^{(i - 1)} \neq \widetilde{r}^{(i - 1)}\right] \\
    \leq&  \left(\kappa^{\tau}\cdot \frac{1}{\lambda} \right)^{\tau-1}
\end{split}
\end{equation}
\end{proof}

\end{proof}

\begin{lemma}
\label{lemma:weakprffromlweH1indis}
Games $\game H_{1, i^*}$ and $H_{1, i^* + 1}$ are $\left(t, \kappa^{\tau}\cdot \epsilon +  m\cdot \kappa^{\tau} \cdot 2^{-\Omega(\kappa)}\right)$-\emph{indistinguishable}, for $i^* \in [\tau-1]$.
\end{lemma}
\begin{proof}
Suppose that there exists a $t$-sized circuit $D$ that distinguishes $H_{1, i^*}$ and $H_{1, i^* + 1}$ with advantage $\epsilon'$. We use $D$ to build a circuit $D'$ of size $t + m\cdot \poly(\lambda)$ that distinguishes the following games with  advantage $\epsilon' -  m\cdot \kappa^{k} \cdot 2^{-\Omega(\kappa)}$, which should be less than $\kappa^{\tau}\cdot \epsilon$ assuming $(m, t, \epsilon)$-LWE via a standard hybrid argument.
\begin{itemize}
    \item Game $\game G_0$: Sample $\mat A \sample \mathbb{Z}_q^{n\times m}$, $\mat S = (\vc s_1, \dots, \vc s_{\kappa^{\tau}}) \sample \mathbb{Z}_q^{\kappa^{\tau} \times n}$, and $\mat E \sample \chi^{\kappa^{\tau} \times m}$. Give the adversary $(\mat A, \mat S \cdot \mat A + \mat E)$.
    \item Game $\game G_1$: Sample $\mat A \sample \mathbb{Z}_q^{n\times m}$, and $\mat U \sample \mathbb{Z}_q^{\kappa^{\tau} \times m}$. Give the adversary $(\mat A, \mat U)$.
\end{itemize}
The distinguisher $D'$ obtains $(\mat A, \mat B) \in \Z_q^{n\times m} \times \Z_q^{\kappa^{\tau} \times m}$ as input, 
it parses the columns of $\mat A$ as $\vc x = (\mat A_1, \dots, \mat A_m)$, where $\mat A_i \in \Z_q^n$ is the $i$-th column of $\mat A$. It samples a random key $k \sample \Set K_{\lambda}$,
samples $\vc p \sample \Z_q^{\kappa^{(i^* + 1})}$,
and gives the adversary $(\vc x, \tilde{G}^{(i^* + 1)}_k(\vc x, \mat B))$, where $\tilde{G}^{(i^* + 1)}_k(\vc x, \mat B))$ is defined exactly as $\tilde{G}^{(i^* + 1)}_k(\vc x)$, except that when evaluating the $j$-th input for $j\in [m]$, instead of sampling $\vc y^{(i^* + 1)} \sample \Z_q^{\kappa^{(i^* + 1)}}$, it sets $\vc y^{(i^* + 1)}$ to be the first $\kappa^{(i^*+1)}$ entries of the $j$-th row of $\mat B$, added with $\vc p$.
\begin{itemize}
    \item If $D'$ receives $(\mat A, \mat B)$ from Game $\game G_1$, then $D'$ perfectly simulates $H_{1, i^* + 1}$.
    \item If $D'$ receives $(\mat A, \mat B)$ from Game $\game G_0$, then $D'$ simulates $H_{1, i^*}$ except with a statistical distance of at most  $$m \cdot \kappa^{\tau}\cdot \SD(\chi, S^{(B)}_{\chi}(U_{\kappa})) =  m\cdot \kappa^{\tau} \cdot 2^{-\Omega(\kappa)}.$$
\end{itemize}
Therefore, $D'$ has the distinguishing advantage of at least $\epsilon' -  m\cdot \kappa^{\tau} \cdot 2^{-\Omega(\kappa)}$.

\end{proof}
The theorem follows from combining \Cref{lemma:weakprffromlweH0indis} and \Cref{lemma:weakprffromlweH1indis}.

\end{proof}

\bibliographystyle{alpha}
\bibliography{abbrev,refs}

\end{document}